\spnewtheorem*{runningexample}{Running Example}{\itshape}{\rmfamily}
\newcommand{\st}{\:|\:}
\newcommand{\N}{\mathbb{N}}
\newcommand{\Z}{\mathbb{Z}}
\newcommand{\Q}{\mathbb{Q}}
\newcommand{\R}{\mathbb{R}}
\newcommand{\ind}[1]{\mathds{1}_{#1}}
\newcommand{\sharpP}{\ensuremath{\mathsf{\# P}}}
\newcommand{\PSPACE}{\ensuremath{\mathsf{PSPACE}}}
\newcommand{\hard}{\mathcal{H}}
\newcommand{\soft}{\mathcal{S}}
\newcommand{\improvs}{I}
\newcommand{\valids}{A}
\newcommand{\eopt}{\epsilon_\mathrm{opt}}
\newcommand{\rcic}[2]{\textup{RCI}\,\textup{(}#1\textup{,}\,#2\textup{)}}
\newcommand{\QBF}{QBF}
\newcommand{\DFA}{\textup{DFA}}
\newcommand{\NFA}{\textup{NFA}}
\newcommand{\CFG}{\textup{CFG}}
\newcommand{\LTL}{\textup{LTL}}
\newcommand{\LDL}{\textup{LDL}}
\newcommand{\alphabet}{\mathrm{\Sigma}}
\newcommand{\histories}{\alphabet^{\le n}}
\newcommand{\possible}{\mathrm{\Pi}}
\newcommand{\width}[1]{W(#1)}
\newcommand{\widthrel}[2]{W(#1|#2)}
\newcommand{\emptyword}{\lambda}
\newcommand{\pweight}[2]{C(#1, #2)}
\tikzstyle{adversary}=[rectangle,draw=black,minimum size=1cm]
\tikzstyle{soft}=[fill=black!20]
\begin{document}

\title{Reactive Control Improvisation}

\author{Daniel J. Fremont
 \and Sanjit A. Seshia
 }
\institute{University of California, Berkeley, USA\\
\email{\{dfremont, sseshia\}@berkeley.edu}}

\maketitle

\begin{abstract}
Reactive synthesis is a paradigm for automatically building correct-by-construction systems that interact with an unknown or adversarial environment.
We study how to do reactive synthesis when part of the specification of the system is that its behavior should be \emph{random}.
Randomness can be useful, for example, in a network protocol fuzz tester whose output should be varied, 
or a planner for a surveillance robot whose route should be unpredictable.
However, existing reactive synthesis techniques do not provide a way to ensure random behavior while maintaining functional correctness.
Towards this end, we generalize the recently-proposed framework of \emph{control improvisation} (CI) to add reactivity.
The resulting framework of {\em reactive control improvisation} 
provides a natural way to integrate a randomness requirement with the usual functional specifications of reactive synthesis
over a finite window.
We theoretically characterize when such problems are realizable, and give a general method for solving them.
For specifications given by reachability or safety games or by deterministic finite automata, our method yields a polynomial-time synthesis algorithm.
For various other types of specifications including temporal logic formulas, we obtain a polynomial-space algorithm and prove matching $\PSPACE$-hardness results.
We show that all of these randomized variants of reactive synthesis are no harder in a complexity-theoretic sense than their non-randomized counterparts.
\end{abstract}

\section{Introduction}
\label{sec:intro}

Many interesting programs, including protocol handlers, task planners, and concurrent software generally, are \emph{open} systems that interact over time with an external environment.
Synthesis of such \emph{reactive systems} requires finding an implementation that satisfies the desired specification no matter what the environment does.
This problem, \emph{reactive synthesis}, has a long history (see \cite{finkbeiner-survey} for a survey).
Reactive synthesis from temporal logic specifications \cite{Pnueli1989} has been particularly well-studied and is being increasingly used in applications such as hardware synthesis \cite{BLOEM20073} and robotic task planning \cite{kress2009temporal}.

In this paper, we investigate how to synthesize reactive systems with \emph{random behavior}: in fact, systems where \emph{being random in a
prescribed way is part of their specification}.
This is in contrast to prior work on stochastic games where randomness is used to model uncertain environments or randomized strategies are merely allowed, not required.
Solvers for stochastic games may incidentally produce randomized strategies to satisfy a functional specification (and some types of specification, e.g.~multi-objective queries \cite{multiobjective}, may only be realizable by randomized strategies), but do not provide a general way to \emph{enforce} randomness.
Unlike most specifications used in reactive synthesis, our randomness requirement is a property of a system's \emph{distribution} of behaviors, not of an individual behavior.
While probabilistic specification languages like PCTL \cite{pctl} can capture some such properties, the simple and natural randomness requirement we study here cannot be concisely expressed by existing languages (even those as powerful as SGL \cite{sgl}).
Thus, \emph{randomized reactive synthesis} in our sense requires significantly different methods than those previously studied.

However, we argue that this type of synthesis is quite useful, because introducing randomness into the behavior of a system can often be beneficial, enhancing \emph{variety}, \emph{robustness}, and \emph{unpredictability}.
Example applications include:
\begin{itemize}
\itemsep3pt
\item Synthesizing a black-box fuzz tester for a network service, we want a program that not only conforms to the protocol (perhaps only most of the time) but can generate many different sequences of packets: randomness ensures this.
\item Synthesizing a controller for a robot exploring an unknown environment, randomness provides a low-memory way to increase coverage of the space.
It can also help to reduce systematic bias in the exploration procedure.
\item Synthesizing a controller for a patrolling surveillance robot, introducing randomness in planning makes the robot's future location harder to predict.
\end{itemize}

Adding randomness to a system in an \emph{ad hoc} way could easily compromise its correctness.
This paper shows how a randomness requirement can be integrated \emph{into the synthesis process}, ensuring correctness as well as allowing trade-offs to be explored: how much randomness can be added while staying correct, or how strong can a specification be while admitting a desired amount of randomness?

To formalize randomized reactive synthesis we build on the idea of \emph{control improvisation}, introduced in~\cite{donze-tr13},
formalized in~\cite{fsttcs}, and further generalized in~\cite{jacm}.
Control improvisation (CI) is the problem of constructing an \emph{improviser}, a probabilistic algorithm which generates finite words subject to three constraints: a \emph{hard constraint} that must always be satisfied, a \emph{soft constraint} that need only be satisfied with some probability, and a \emph{randomness constraint} that no word be generated with probability higher than a given bound.
We define \emph{reactive control improvisation} (RCI), where the improviser generates a word incrementally, alternating adding symbols with an adversarial environment.
To perform synthesis in a finite window, we encode functional specifications and environment assumptions into the hard constraint, while the soft and randomness constraints allow us to tune how randomness is added to the system.
The improviser obtained by solving the RCI problem is then a solution to the original synthesis problem.

The difficulty of solving reactive CI problems depends on the type of specification.
We study several types commonly used in reactive synthesis, including reachability games (and variants, e.g. safety games) and formulas in the temporal logics $\LTL$ and $\LDL$ \cite{ltl,ldl}.
We also investigate the specification types studied in \cite{jacm}, showing how the complexity of the CI problem changes when adding reactivity.
For every type of specification we obtain a randomized synthesis algorithm whose complexity matches that of ordinary reactive synthesis (in a finite window).
This suggests that reactive control improvisation should be feasible in applications like robotic task planning where reactive synthesis tools have proved effective.

In summary, the main contributions of this paper are:
\begin{itemize}
\item The reactive control improvisation (RCI) problem definition (Sec.~\ref{section:prob-defn}); \smallskip
\item The notion of \emph{width}, a quantitative generalization of ``winning'' game positions that measures \emph{how many ways} a player can win from that position (Sec.~\ref{section:existence}); \smallskip
\item A characterization of when RCI problems are realizable in terms of width, and an explicit construction of an improviser (Sec.~\ref{section:existence}); \smallskip
\item A general method for constructing efficient improvisation schemes (Sec.~\ref{section:generic}); \smallskip
\item A polynomial-time improvisation scheme for reachability/safety games and deterministic finite automaton specifications (Sec.~\ref{section:reachability}); \smallskip
\item $\PSPACE$-hardness results for many other specification types including temporal logics, and matching polynomial-space improvisation schemes (Sec.~\ref{section:temporal}).
\end{itemize}
Finally, Sec.~\ref{section:conclusion} summarizes our results and gives directions for future work.

\section{Background} \label{section:background}

\subsection{Notation}

Given an alphabet $\alphabet$, we write $|w|$ for the length of a finite word $w \in \alphabet^*$, $\emptyword$ for the empty word, $\alphabet^n$ for the words of length $n$, and $\alphabet^{\le n}$ for $\cup_{0 \le i \le n} \alphabet^i$, the set of all words of length at most $n$.
We abbreviate deterministic/nondeterministic finite automaton by $\DFA$/$\NFA$, and context-free grammar by $\CFG$.
For an instance $\mathcal{X}$ of any such formalism, which we call a \emph{specification}, we write $L(\mathcal{X})$ for the language (subset of $\alphabet^*$) it defines (note the distinction between a language and a representation thereof).
We view formulas of Linear Temporal Logic (LTL) \cite{ltl} and Linear Dynamic Logic (LDL) \cite{ldl} as specifications using their natural semantics on finite words (see \cite{ldl}).

We use the standard complexity classes $\sharpP$ and $\PSPACE$, and the $\PSPACE$-complete problem $\QBF$ of determining the truth of a quantified Boolean formula.
For background on these classes and problems see for example \cite{arora-barak}.

Some specifications we use as examples are \emph{reachability games} \cite{Mazala2002}, where players' actions cause transitions in a state space and the goal is to reach a target state.
We group these games, \emph{safety games} where the goal is to \emph{avoid} a set of states, and \emph{reach-avoid} games combining reachability and safety goals \cite{reach-avoid}, together as \emph{reachability/safety games} (RSGs).
We draw reachability games as graphs in the usual way: squares are adversary-controlled states, and states with a double border are target states.

\subsection{Synthesis Games}

Reactive control improvisation will be formalized in terms of a 2-player game which is essentially the standard \emph{synthesis game} used in reactive synthesis \cite{finkbeiner-survey}.
However, our formulation is slightly different for compatibility with the definition of control improvisation, so we give a self-contained presentation here.

Fix a finite alphabet $\alphabet$.
The players of the game will alternate picking symbols from $\alphabet$, building up a word.
We can then specify the set of winning plays with a language over $\alphabet$.
To simplify our presentation we assume that players strictly alternate turns and that any symbol from $\alphabet$ is a legal move.
These assumptions can be relaxed in the usual way by modifying the winning set appropriately.

\textbf{Finite words:} While reactive synthesis is usually considered over infinite words, in this paper we focus on synthesis in a finite window, as it is unclear how best to generalize our randomness requirement to the infinite case.
This assumption is not too restrictive, as solutions of bounded length are adequate for many applications.
In fuzz testing, for example, we do not want to generate arbitrarily long files or sequences of packets.
In robotic planning, we often want a plan that accomplishes a task within a certain amount of time.
Furthermore, planning problems with liveness specifications can often be segmented into finite pieces: we do not need an infinite route for a patrolling robot, but can plan within a finite horizon and replan periodically.
Replanning may even be \emph{necessary} when environment assumptions become invalid.
At any rate, we will see that the bounded case of reactive control improvisation is already highly nontrivial.

As a final simplification, we require that all plays have length exactly $n \in \N$.
To allow a range $[m, n]$ we can simply add a new padding symbol to $\alphabet$ and extend all shorter words to length $n$, modifying the winning set appropriately.

\begin{definition}
A \emph{history} $h$ is an element of $\alphabet^{\le n}$, representing the moves of the game played so far.
We say the game has \emph{ended} after $h$ if $|h| = n$; otherwise it is \emph{our turn} after $h$ if $|h|$ is even, and \emph{the adversary's turn} if $|h|$ is odd.
\end{definition}

\begin{definition}
A \emph{strategy} is a function $\sigma : \histories \times \alphabet \rightarrow [0,1]$ such that for any history $h \in \histories$ with $|h| < n$, $\sigma(h, \cdot)$ is a probability distribution over $\alphabet$.
We write $x \leftarrow \sigma(h)$ to indicate that $x$ is a symbol randomly drawn from $\sigma(h, \cdot)$.
\end{definition}

Since strategies are randomized, fixing strategies for both players does not uniquely determine a play of the game, but defines a \emph{distribution} over plays:
\begin{definition}
Given a pair of strategies $(\sigma, \tau)$, we can generate a random \emph{play} $\pi \in \alphabet^n$ as follows.
Pick $\pi_0 \leftarrow \sigma(\emptyword)$, then for $i$ from $1$ to $n-1$ pick $\pi_i \leftarrow \tau(\pi_0 \dots \pi_{i-1})$ if $i$ is odd and $\pi_i \leftarrow \sigma(\pi_0 \dots \pi_{i-1})$ otherwise.
Finally, put $\pi = \pi_0 \dots \pi_{n-1}$.
We write $P_{\sigma,\tau}(\pi)$ for the probability of obtaining the play $\pi$.
This extends to a \emph{set} of plays $X \subseteq \alphabet^n$ in the natural way: $P_{\sigma,\tau}(X) = \sum_{\pi \in X} P_{\sigma,\tau}(\pi)$.
Finally, the set of \emph{possible} plays is $\possible_{\sigma,\tau} = \{ \pi \in \alphabet^n \st P_{\sigma,\tau}(\pi) > 0 \}$.

\end{definition}

The next definition is just the conditional probability of a play given a history, but works for histories with probability zero, simplifying our presentation.
\begin{definition}
For any history $h = h_0 \dots h_{k-1} \in \histories$ and word $\rho \in \alphabet^{n - k}$, we write $P_{\sigma,\tau}(\rho | h)$ for the probability that if we assign $\pi_i = h_i$ for $i < k$ and sample $\pi_k, \dots, \pi_{n-1}$ by the process above, then $\pi_k \dots \pi_{n-1} = \rho$.
\end{definition}

\section{Problem Definition} \label{section:prob-defn}

\subsection{Motivating Example}

Consider synthesizing a planner for a surveillance drone operating near another, potentially adversarial drone.
Discretizing the map into the 7x7 grid in Fig.~\ref{figure:drones} (ignoring the depicted trajectories for the moment), a route is a word over the four movement directions.
Our specification is to visit the 4 circled locations in 30 moves without colliding with the adversary, assuming it cannot move into the 5 highlighted central locations.

\begin{figure}[tb]
\begin{minipage}{0.5\textwidth}
\centering
\includegraphics[width=0.7\textwidth]{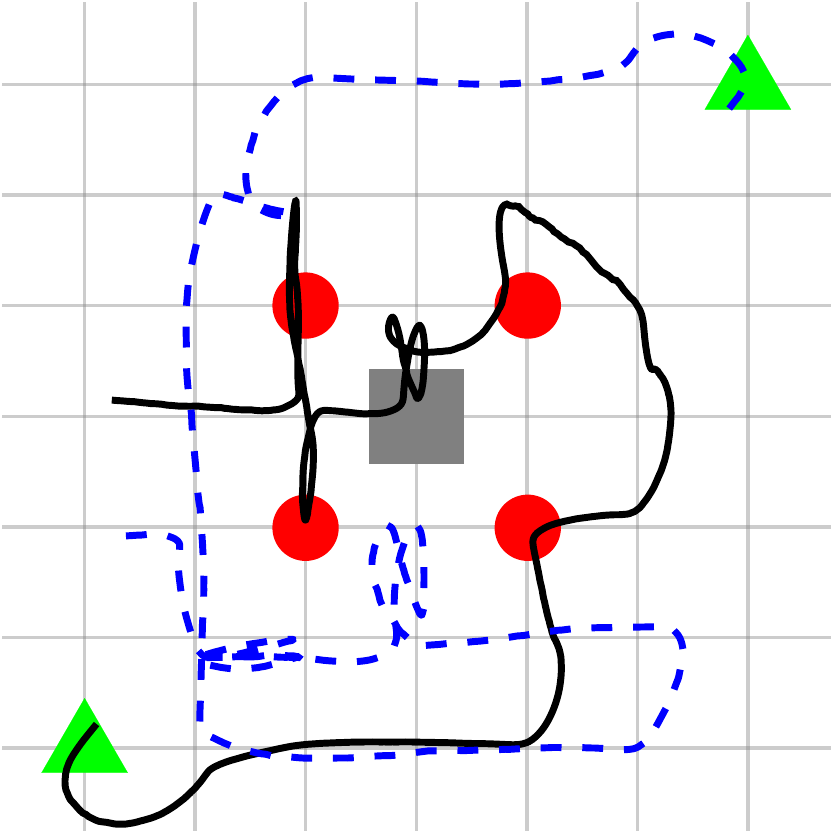}
\end{minipage}
\begin{minipage}{0.5\textwidth}
\centering
\includegraphics[width=0.7\textwidth]{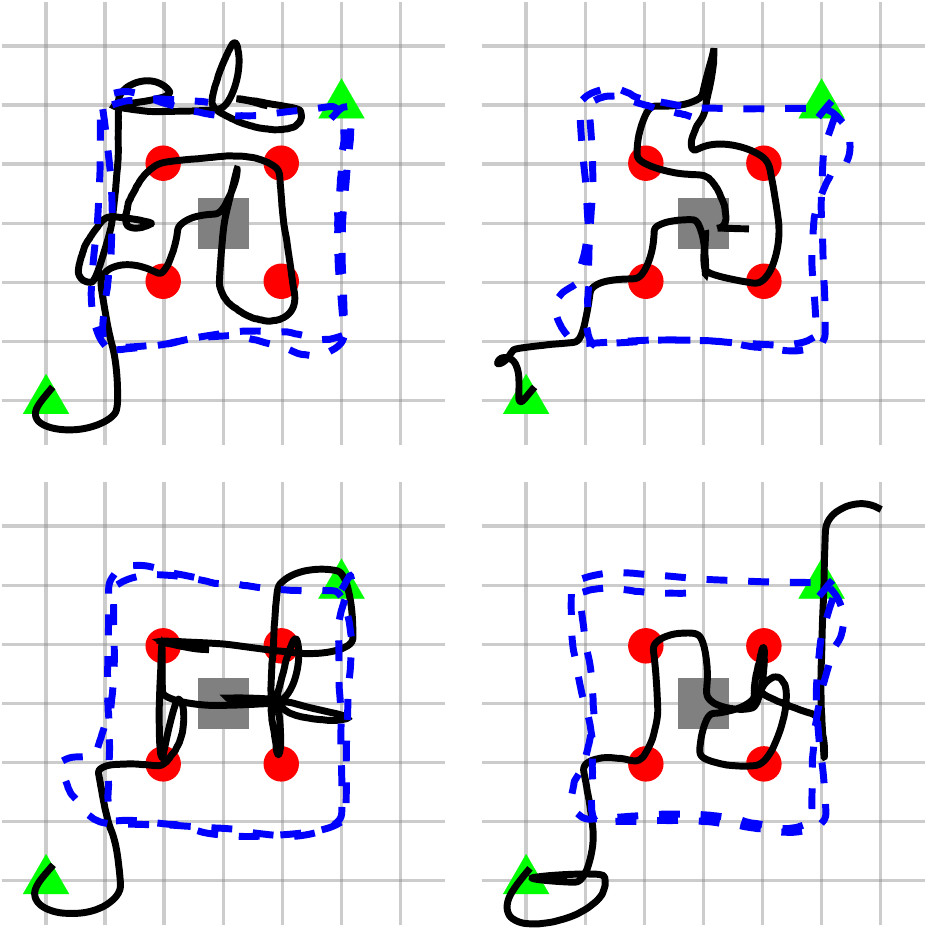}
\end{minipage}
\caption{Improvised trajectories for a patrolling drone (solid) avoiding an adversary (dashed).
The adversary may not move into the circles or the square.}
\label{figure:drones}
\end{figure}

Existing reactive synthesis tools can produce a strategy for the patroller ensuring that the specification is always satisfied.
However, the strategy may be deterministic, so that in response to a fixed adversary the patroller will always follow the same route.
Then it is easy for a third party to predict the route, which could be undesirable, and is in fact unnecessary if there are many other ways the drone can satisfy its specification.

Reactive control improvisation addresses this problem by adding a new type of specification to the \emph{hard constraint} above: a \emph{randomness requirement} stating that no behavior should be generated with probability greater than a threshold $\rho$.
If we set (say) $\rho = 1/5$, then any controller solving the synthesis problem must be able to satisfy the hard constraint in at least 5 different ways, never producing any given behavior more than 20\% of the time.
Our synthesis algorithm can in fact compute the smallest $\rho$ for which synthesis is possible, yielding a controller that is \emph{maximally-randomized} in that the system's behavior is as close to a uniform distribution as possible.

To allow finer tuning of how randomness is introduced into the controller, our definition also includes a \emph{soft constraint} which need only be satisfied with some probability $1 - \epsilon$.
This allows us to prefer certain safe behaviors over others.
In our drone example, we require that with probability at least $3/4$, we do not visit a circled location twice.

These hard, soft, and randomness constraints form an instance of our reactive control improvisation problem.
Encoding the hard and soft constraints as DFAs, our algorithm (Sec.~\ref{section:reachability}) produced a controller achieving the smallest realizable $\rho = 2.2 \times 10^{-12}$.
We tested the controller using the PX4 autopilot \cite{px4} to refine the generated routes into control actions for a drone simulated in Gazebo \cite{gazebo} (videos and code are available online \cite{videos}).
A selection of resulting trajectories are shown in Fig.~\ref{figure:drones} (the remainder in Appendix~\ref{section:drone-experiments}%
): starting from the triangles, the patroller's path is solid, the adversary's dashed.
The left run uses an adversary that moves towards the patroller when possible.
The right runs, with a simple adversary moving in a fixed loop, illustrate the randomness of the synthesized controller.

\subsection{Reactive Control Improvisation}

Our formal notion of randomized reactive synthesis in a finite window is a reactive extension of \emph{control improvisation} \cite{jacm,fsttcs}, which captures the three types of constraint (hard, soft, randomness) seen above.
We use the notation of~\cite{jacm} for the specifications and languages defining the hard and soft constraints:

\begin{definition}[\cite{jacm}]
Given \emph{hard} and \emph{soft} specifications $\hard$ and $\soft$ of languages over $\alphabet$, an \emph{improvisation} is a word $w \in L(\hard) \cap \alphabet^n$.
It is \emph{admissible} if $w \in L(\soft)$.
The set of all improvisations is denoted $\improvs$, and admissible improvisations $\valids$.
\end{definition}

\begin{runningexample}
We will use the following simple example throughout the paper: each player may increment ($+$), decrement ($-$), or leave unchanged ($=$) a counter which is initially zero.
The alphabet is $\alphabet = \{ +, -, = \}$, and we set $n = 4$.
The hard specification $\hard$ is the DFA in Fig.~\ref{figure:running-spec} requiring that the counter stay within $[-2, 2]$.
The soft specification $\soft$ is a similar DFA requiring that the counter end at a nonnegative value.

Then for example the word $+$$+$$=$$=$ is an admissible improvisation, satisfying both hard and soft constraints, and so is in $\valids$.
The word $+$$-$$=$$-$ on the other hand satisfies $\hard$ but not $\soft$, so it is in $\improvs$ but not $\valids$.
Finally, $+$$+$$+$$-$ does not satisfy $\hard$, so it is not an improvisation at all and is not in $\improvs$.
\end{runningexample}

\begin{figure}[tb]
\centering
\begin{tikzpicture}[initial text=, transform shape, scale=0.6, node distance=0.75cm]

 \node[state, soft, accepting, initial below] (s0) {\Large $+0$}; 
 \node[state, soft, accepting, right= of s0] (s1) {\Large $+1$}; 
 \node[state, soft, accepting, right= of s1] (s2) {\Large $+2$}; 
 \node[state, right= of s2] (s3) {\Large $+3$}; 
 \node[state, accepting, left= of s0] (s-1) {\Large $-1$}; 
 \node[state, accepting, left= of s-1] (s-2) {\Large $-2$}; 
 \node[state, left= of s-2] (s-3) {\Large $-3$}; 
 
 \path[->]
 (s-2) edge[bend left] node[above] {\Large $+$} (s-1)
 (s-1) edge[bend left] node[above] {\Large $+$} (s0)
 (s0) edge[bend left] node[above] {\Large $+$} (s1)
 (s1) edge[bend left] node[above] {\Large $+$}(s2)
 (s2) edge[bend left] node[above] {\Large $+$} (s3)
 (s2) edge[bend left] node[below] {\Large $-$} (s1)
 (s1) edge[bend left] node[below] {\Large $-$} (s0)
 (s0) edge[bend left] node[below] {\Large $-$} (s-1)
 (s-1) edge[bend left] node[below] {\Large $-$}(s-2)
 (s-2) edge[bend left] node[below] {\Large $-$} (s-3)
 (s-3) edge[loop left] node[left] {\Large $\alphabet$} (s-3)
 (s-2) edge[loop above] node[above] {\Large =} (s-2)
 (s-1) edge[loop above] node[above] {\Large =} (s-1)
 (s0) edge[loop above] node[above] {\Large =} (s0)
 (s1) edge[loop above] node[above] {\Large =} (s1)
 (s2) edge[loop above] node[above] {\Large =} (s2)
 (s3) edge[loop right] node[right] {\Large $\alphabet$} (s3);
 
 \end{tikzpicture}
\caption{The hard specification DFA $\hard$ in our running example. The soft specification $\soft$ is the same but with only the shaded states accepting.}
\label{figure:running-spec}
\end{figure}
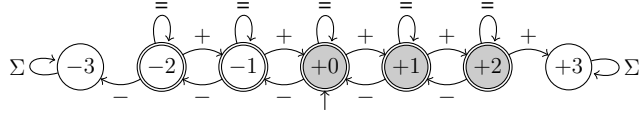

A reactive control improvisation problem is defined by $\hard$, $\soft$, and parameters $\epsilon$ and $\rho$.
A solution is then a strategy which ensures that the hard, soft, and randomness constraints hold against every adversary.
Formally, following \cite{jacm,fsttcs}:
\begin{definition}
Given an \emph{RCI instance} $\mathcal{C} = (\hard, \soft, n, \epsilon, \rho)$ with $\hard$, $\soft$, and $n$ as above and $\epsilon, \rho \in [0, 1] \cap \Q$, a strategy $\sigma$ is an \emph{improvising strategy} if it satisfies the following requirements for every adversary $\tau$:
\begin{description}
\item[Hard constraint:] $P_{\sigma,\tau} (\improvs) = 1$ \vspace{1pt}
\item[Soft constraint:] $P_{\sigma,\tau} (\valids) \ge 1 - \epsilon$ \vspace{1pt}
\item[Randomness:] $\forall \pi \in \improvs$, $P_{\sigma,\tau} (\pi) \le \rho$.
\end{description}
If there is an improvising strategy $\sigma$, we say that $\mathcal{C}$ is \emph{realizable}.
An \emph{improviser} for $\mathcal{C}$ is then an expected-finite time probabilistic algorithm implementing such a strategy $\sigma$, i.e. whose output distribution on input $h \in \histories$ is $\sigma(h, \cdot)$.
\end{definition}

\begin{definition}
Given an RCI instance $\mathcal{C} = (\hard, \soft, n, \epsilon, \rho)$, the \emph{reactive control improvisation} (RCI) problem is to decide whether $\mathcal{C}$ is realizable, and if so to generate an improviser for $\mathcal{C}$.
\end{definition}

\begin{runningexample}
Suppose we set $\epsilon = 1/2$ and $\rho = 1/2$.
Let $\sigma$ be the strategy which picks $+$ or $-$ with equal probability in the first move, and thenceforth picks the action which moves the counter closest to $\pm 1$ respectively.
This satisfies the hard constraint, since if the adversary ever moves the counter to $\pm 2$ we immediately move it back.
The strategy also satisfies the soft constraint, since with probability $1/2$ we set the counter to $+1$ on the first move, and if the adversary moves to $0$ we move back to $+1$ and remain nonnegative.
Finally, $\sigma$ also satisfies the randomness constraint, since each choice of first move happens with probability $1/2$ and so no play can be generated with higher probability.
So $\sigma$ is an improvising strategy and this RCI instance is realizable.
\end{runningexample}

We will study classes of RCI problems with different types of specifications:
\begin{definition}
If $\textsc{HSpec}$ and $\textsc{SSpec}$ are classes of specifications, then the class of RCI instances $\mathcal{C} = (\hard, \soft, n, \epsilon, \rho)$ where $\hard \in \textsc{HSpec}$ and $\soft \in \textsc{SSpec}$ is denoted $\rcic{\textsc{HSpec}}{\textsc{SSpec}}$.
We use the same notation for the decision problem associated with the class, i.e., given $\mathcal{C} \in \rcic{\textsc{HSpec}}{\textsc{SSpec}}$, decide whether $\mathcal{C}$ is realizable.
The \emph{size} $|\mathcal{C}|$ of an RCI instance is the total size of the bit representations of its parameters, with $n$ represented in unary and $\epsilon, \rho$ in binary.
\end{definition}

Finally, a \emph{synthesis algorithm} in our context takes a specification in the form of an RCI instance and produces an implementation in the form of an improviser.
This corresponds exactly to the notion of an improvisation scheme from \cite{jacm}:
\begin{definition}[\cite{jacm}] \label{defn:scheme}
A \emph{polynomial-time improvisation scheme} for a class $\mathcal{P}$ of RCI instances is an algorithm $S$ with the following properties: 
\begin{description}
\item[Correctness:] For any $\mathcal{C} \in \mathcal{P}$, if $\mathcal{C}$ is
realizable then $S(\mathcal{C})$ is an improviser for $\mathcal{C}$, and
otherwise $S(\mathcal{C}) = \bot$. \vspace{1pt}
\item[Scheme efficiency:] There is a polynomial $p : \R \rightarrow \R$ such that the runtime of $S$ on any $\mathcal{C} \in \mathcal{P}$ is at most $p(|\mathcal{C}|)$. \vspace{1pt}
\item[Improviser efficiency:] There is a polynomial $q : \R \rightarrow \R$ such that for every $\mathcal{C} \in \mathcal{P}$, if $G = S(\mathcal{C}) \ne \bot$ then $G$ has expected runtime at most $q(|\mathcal{C}|)$. 
\end{description}
\end{definition}
The first two requirements simply say that the scheme produces valid improvisers in polynomial time.
The third is necessary to ensure that the improvisers themselves are efficient: otherwise, the scheme might for example produce improvisers running in time exponential in the size of the specification.

A main goal of our paper is to determine for which types of specifications there exist polynomial-time improvisation schemes.
While we do find such algorithms for important classes of specifications, we will also see that determining the realizability of an RCI instance is often $\PSPACE$-hard.
Therefore we also consider \emph{polynomial-space improvisation schemes}, defined as above but replacing time with space.

\section{Existence of Improvisers} \label{section:existence}

\subsection{Width and Realizability}

The most basic question in reactive synthesis is whether a specification is realizable.
In \emph{randomized} reactive synthesis, the question is more delicate because the randomness requirement means that it is no longer enough to ensure some property regardless of what the adversary does: there must be \emph{many ways} to do so.
Specifically, there must be at least $1 / \rho$ improvisations if we are to generate each of them with probability at most $\rho$.
Furthermore, at least this many improvisations must be \emph{possible} given an unknown adversary: even if many exist, the adversary may be able to force us to use only a single one.
We introduce a new notion of the size of a set of plays that takes this into account.

\begin{definition}
The \emph{width} of $X \subseteq \alphabet^n$ is $\width{X} = \max_\sigma \min_\tau | X \cap \possible_{\sigma,\tau} |$.
\end{definition}

The width counts how many distinct plays can be generated regardless of what the adversary does.
Intuitively, a ``narrow'' game --- one whose set of winning plays has small width --- is one in which the adversary can force us to choose among only a few winning plays, while in a ``wide'' one we always have many safe choices available.
Note that \emph{which} particular plays can be generated depends on the adversary: the width only measures \emph{how many} can be generated.
For example, $\width{X} = 1$ means that a play in $X$ can always be generated, but possibly a different element of $X$ for different adversaries.

\begin{runningexample}
Figure~\ref{figure:running-game} shows the synthesis game for our running example: paths ending in circled or shaded states are plays in $\improvs$ or $\valids$ respectively (ignore the state labels for now).
At left, the bold arrows show the 4 plays in $\improvs$ possible against the adversary that moves away from 0, and down at 0.
This shows $\width{\improvs} \le 4$, and in fact 4 plays are possible against any adversary, so $\width{\improvs} = 4$.
Similarly, at right we see that $\width{\valids} = 1$.
\end{runningexample}

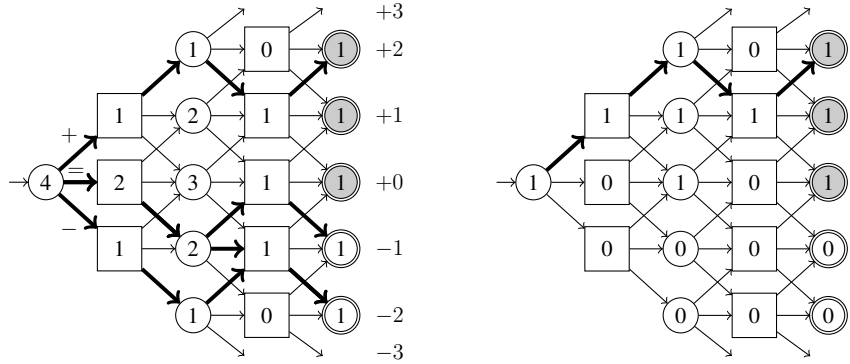
\begin{figure}[tb]
\begin{minipage}{0.5\textwidth}
\centering
\begin{tikzpicture}[initial text=, transform shape, scale=0.59, node distance=0.75cm]

 \node[state, initial] (s0) {\Large 4}; 
 \node[adversary, right= of s0, yshift=1.5cm] (s11) {\Large 1}; 
 \node[adversary, right= of s0] (s10) {\Large 2};
  \node[adversary, right= of s0, yshift=-1.5cm] (s1-1) {\Large 1};
 \node[state, right= of s11, yshift=1.5cm] (s22) {\Large 1};
 \node[state, right= of s11] (s21) {\Large 2};
 \node[state, right= of s10] (s20) {\Large 3};
 \node[state, right= of s1-1] (s2-1) {\Large 2};
 \node[state, right= of s1-1, yshift=-1.5cm] (s2-2) {\Large 1};
 \node[adversary, right= of s21, yshift=1.5cm] (s32) {\Large 0};
 \node[adversary, right= of s21] (s31) {\Large 1};
 \node[adversary, right= of s20] (s30) {\Large 1};
 \node[adversary, right= of s2-1] (s3-1) {\Large 1};
 \node[adversary, right= of s2-1, yshift=-1.5cm] (s3-2) {\Large 0};
 \node[state, accepting, soft, right= of s31, yshift=1.5cm] (s42) {\Large 1};
 \node[state, accepting, soft, right= of s31] (s41) {\Large 1};
 \node[state, accepting, soft, right= of s30] (s40) {\Large 1};
 \node[state, accepting, right= of s3-1] (s4-1) {\Large 1};
 \node[state, accepting, right= of s3-1, yshift=-1.5cm] (s4-2) {\Large 1};
 \node[right= of s22, yshift=1cm] (s33) {};
 \node[right= of s2-2, yshift=-1cm] (s3-3) {};
 \node[right= of s32, yshift=1cm] (s43) {};
 \node[right= of s3-2, yshift=-1cm] (s4-3) {};
 \node[right= of s42, xshift=-0.5cm] (l2) {\Large $+2$};
 \node[right= of s41, xshift=-0.5cm] (l1) {\Large $+1$};
 \node[right= of s40, xshift=-0.5cm] (l0) {\Large $+0$};
 \node[right= of s4-1, xshift=-0.5cm] (l-1) {\Large $-1$};
 \node[right= of s4-2, xshift=-0.5cm] (l-2) {\Large $-2$};
 \node[above= of l2, yshift=-0.5cm] (l3) {\Large $+3$};
 \node[below= of l-2, yshift=0.5cm] (l-3) {\Large $-3$};

 \path[->] 
 (s11) edge (s21)
 (s11) edge (s20)
 (s10) edge (s21)
 (s10) edge (s20)
 (s1-1) edge (s20)
 (s1-1) edge (s2-1)
 (s22) edge (s33)
 (s22) edge (s32)
 (s21) edge (s32)
 (s21) edge (s31)
 (s21) edge (s30)
 (s20) edge (s31)
 (s20) edge (s30)
 (s20) edge (s3-1)
 (s2-1) edge (s3-2)
 (s2-2) edge (s3-2)
 (s2-2) edge (s3-3)
 (s32) edge (s43)
 (s32) edge (s42)
 (s32) edge (s41)
 (s31) edge (s41)
 (s31) edge (s40)
 (s30) edge (s41)
 (s30) edge (s40)
 (s3-1) edge (s40)
 (s3-1) edge (s4-1)
 (s3-2) edge (s4-1)
 (s3-2) edge (s4-2)
 (s3-2) edge (s4-3);

 \draw[->, line width=1.5pt]
 (s0) -- node[above, xshift=-0.2cm, yshift=0.1cm] {\Large $+$} (s11);
 \path[->, line width=1.5pt]
 (s11) edge (s22)
 (s22) edge (s31)
 (s31) edge (s42);
 \draw[->, line width=1.5pt]
 (s0) -- node[above, xshift=-0.1cm] {\Large $=$} (s10);
 \path[->, line width=1.5pt]
 (s10) edge (s2-1)
 (s2-1) edge (s3-1)
 (s3-1) edge (s4-2);
 \path[->, line width=1.5pt]
 (s0) edge (s10)
 (s10) edge (s2-1)
 (s2-1) edge (s30)
 (s30) edge (s4-1);
  \draw[->, line width=1.5pt]
 (s0) -- node[below, xshift=-0.2cm, yshift=-0.1cm] {\Large $-$} (s1-1);
  \path[->, line width=1.5pt]
 (s1-1) edge (s2-2)
 (s2-2) edge (s3-1)
 (s3-1) edge (s4-2);

\end{tikzpicture}
\end{minipage}
\begin{minipage}{0.5\textwidth}
\centering
\begin{tikzpicture}[initial text=, transform shape, scale=0.59, node distance=0.75cm]

 \node[state, initial] (s0) {\Large 1}; 
 \node[adversary, right= of s0, yshift=1.5cm] (s11) {\Large 1}; 
 \node[adversary, right= of s0] (s10) {\Large 0};
  \node[adversary, right= of s0, yshift=-1.5cm] (s1-1) {\Large 0};
 \node[state, right= of s11, yshift=1.5cm] (s22) {\Large 1};
 \node[state, right= of s11] (s21) {\Large 1};
 \node[state, right= of s10] (s20) {\Large 1};
 \node[state, right= of s1-1] (s2-1) {\Large 0};
 \node[state, right= of s1-1, yshift=-1.5cm] (s2-2) {\Large 0};
 \node[adversary, right= of s21, yshift=1.5cm] (s32) {\Large 0};
 \node[adversary, right= of s21] (s31) {\Large 1};
 \node[adversary, right= of s20] (s30) {\Large 0};
 \node[adversary, right= of s2-1] (s3-1) {\Large 0};
 \node[adversary, right= of s2-1, yshift=-1.5cm] (s3-2) {\Large 0};
 \node[state, accepting, soft, right= of s31, yshift=1.5cm] (s42) {\Large 1};
 \node[state, accepting, soft, right= of s31] (s41) {\Large 1};
 \node[state, accepting, soft, right= of s30] (s40) {\Large 1};
 \node[state, accepting, right= of s3-1] (s4-1) {\Large 0};
 \node[state, accepting, right= of s3-1, yshift=-1.5cm] (s4-2) {\Large 0};
 \node[right= of s22, yshift=1cm] (s33) {};
 \node[right= of s2-2, yshift=-1cm] (s3-3) {};
 \node[right= of s32, yshift=1cm] (s43) {};
 \node[right= of s3-2, yshift=-1cm] (s4-3) {};

 \path[->] 
 (s0) edge (s10)
 (s0) edge (s1-1)
 (s11) edge (s22)
 (s11) edge (s21)
 (s11) edge (s20)
 (s10) edge (s21)
 (s10) edge (s20)
 (s10) edge (s2-1)
 (s1-1) edge (s20)
 (s1-1) edge (s2-1)
 (s1-1) edge (s2-2)
 (s22) edge (s33)
 (s22) edge (s32)
 (s21) edge (s32)
 (s21) edge (s31)
 (s21) edge (s30)
 (s20) edge (s31)
 (s20) edge (s30)
 (s20) edge (s3-1)
 (s2-1) edge (s30)
 (s2-1) edge (s3-1)
 (s2-1) edge (s3-2)
 (s2-2) edge (s3-1)
 (s2-2) edge (s3-2)
 (s2-2) edge (s3-3)
 (s32) edge (s43)
 (s32) edge (s42)
 (s32) edge (s41)
 (s31) edge (s41)
 (s31) edge (s40)
 (s30) edge (s41)
 (s30) edge (s40)
 (s30) edge (s4-1)
 (s3-1) edge (s40)
 (s3-1) edge (s4-1)
 (s3-1) edge (s4-2)
 (s3-2) edge (s4-1)
 (s3-2) edge (s4-2)
 (s3-2) edge (s4-3);

 \path[->, line width=1.5pt]
 (s0) edge (s11)
 (s11) edge (s22)
 (s22) edge (s31)
 (s31) edge (s42);

\end{tikzpicture}
\end{minipage}
\caption{Synthesis game for our running example. States are labeled with the widths of $\improvs$ (left) and $\valids$ (right) given a history ending at that state.}
\label{figure:running-game}
\end{figure}

It will be useful later to have a \emph{relative} version of width that counts how many plays are possible \emph{from a given position}:

\begin{definition}
Given a set of plays $X \subseteq \alphabet^n$ and a history $h \in \histories$, the \emph{width of} $X$ \emph{given} $h$ is $\widthrel{X}{h} = \max_\sigma \min_\tau | \{ \pi \st h \pi \in X \land P_{\sigma,\tau}(\pi | h) > 0 \} |$.
\end{definition}
This is a direct generalization of ``winning'' positions: if $X$ is the set of winning plays, then $\widthrel{X}{h}$ counts the number of ways to win from $h$.

We will often use the following basic properties of $\widthrel{X}{h}$ without comment %
(for this proof, and the details of later proof sketches, see
Appendix~\ref{section:detailed-proofs}%
).
Note that (3)--(5) provide a recursive way to compute widths that we will use later, and which is illustrated by the state labels in Fig.~\ref{figure:running-game}.
\begin{restatable}{lemma}{lemmaWidthrel} \label{lemma:widthrel}
For any set of plays $X \subseteq \alphabet^n$ and history $h \in \histories$:
\begin{enumerate}
 \item $0 \le \widthrel{X}{h} \le |\alphabet|^{n - |h|}$; \vspace{1pt}
 \item $\widthrel{X}{\emptyword} = \width{X}$; \vspace{1pt}
 \item if $|h| = n$, then $\widthrel{X}{h} = \ind{h \in X}$; \vspace{1pt}
 \item if it is our turn after $h$, then $\widthrel{X}{h} = \sum_{u \in \alphabet} \widthrel{X}{hu}$; \vspace{1pt}
 \item if it is the adversary's turn after $h$, then $\widthrel{X}{h} = \min_{u \in \alphabet} \widthrel{X}{hu}$.
\end{enumerate}
\end{restatable}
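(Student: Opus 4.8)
The plan is to dispatch (1)--(3) directly from the definition and then derive the recursions (4)--(5) from a single decomposition of plays according to their first symbol after $h$. It is convenient to abbreviate $Y_{\sigma,\tau}(h) = \{\pi \st h\pi \in X \land P_{\sigma,\tau}(\pi | h) > 0\}$, so that $\widthrel{X}{h} = \max_\sigma \min_\tau |Y_{\sigma,\tau}(h)|$. For (1), $Y_{\sigma,\tau}(h) \subseteq \alphabet^{n-|h|}$ for every $\sigma,\tau$, so its cardinality lies in $[0,|\alphabet|^{n-|h|}]$, and these bounds survive the $\max$--$\min$. For (2), since $P_{\sigma,\tau}(\pi | \emptyword) = P_{\sigma,\tau}(\pi)$ and $\emptyword\pi = \pi$, we have $Y_{\sigma,\tau}(\emptyword) = X \cap \possible_{\sigma,\tau}$, so the two definitions literally coincide. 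For (3), when $|h| = n$ the only continuation is $\emptyword$, with $P_{\sigma,\tau}(\emptyword | h) = 1$, so $Y_{\sigma,\tau}(h)$ is $\{\emptyword\}$ if $h \in X$ and $\emptyset$ otherwise, giving $\ind{h \in X}$ independently of $\sigma,\tau$.

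The heart of the matter, used for both (4) and (5), is that when $|h| < n$ the sampling process picks the symbol immediately after $h$ according to the mover at $h$ (ourselves if $|h|$ is even, the adversary if odd). Hence $P_{\sigma,\tau}(u\rho \mid h) > 0$ holds iff that mover assigns positive probability to $u$ and $P_{\sigma,\tau}(\rho | hu) > 0$, and the continuation sets for distinct $u$ are disjoint. Writing $U$ for the set of symbols the mover plays with positive probability, this yields $|Y_{\sigma,\tau}(h)| = \sum_{u \in U} |Y_{\sigma,\tau}(hu)|$. A second, equally important observation is that the histories extending different $hu$ are disjoint, so the restrictions of a global strategy to the various subtrees are independent degrees of freedom; this is what lets me simultaneously glue separately-optimal sub-strategies together and let the minimizing adversary optimize each subtree on its own.

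For (4) it is our turn, so $\sigma$ determines $U$. For ``$\le$'', fix any $\sigma$; independence across subtrees gives $\min_\tau |Y_{\sigma,\tau}(h)| = \sum_{u \in U}\min_\tau |Y_{\sigma,\tau}(hu)| \le \sum_{u \in U}\widthrel{X}{hu} \le \sum_{u \in \alphabet}\widthrel{X}{hu}$, the last step using $\widthrel{X}{hu} \ge 0$ from (1) to reinstate the absent terms, and $\max_\sigma$ preserves the bound. For ``$\ge$'', I construct $\sigma$ placing positive probability on every symbol (so $U = \alphabet$) and playing a width-optimal sub-strategy in each subtree $hu$; then $\min_\tau |Y_{\sigma,\tau}(h)| = \sum_{u}\widthrel{X}{hu}$ exactly, whence $\widthrel{X}{h}$ is at least this sum. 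For (5) it is the adversary's turn, so $\tau$ determines $U$, and since enlarging $U$ only enlarges a disjoint union, the adversary's best response is always to commit to a single symbol. For ``$\le$'', against any $\sigma$ the adversary plays the single $u$ minimizing the subtree value (which is at most $\widthrel{X}{hu}$), giving $\min_\tau |Y_{\sigma,\tau}(h)| \le \min_{u}\widthrel{X}{hu}$; for ``$\ge$'', playing width-optimally in every subtree makes every single-symbol response of the adversary leave exactly $\widthrel{X}{hu}$ plays, so the minimum over the adversary's moves is $\min_u \widthrel{X}{hu}$.

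The main obstacle I anticipate is not any individual computation but making the interchanges fully rigorous: I must argue carefully that a strategy \emph{is} precisely a tuple of independent sub-strategies on the disjoint subtrees, so that in (4) $\min_\tau |Y_{\sigma,\tau}(h)| = \sum_{u \in U}\min_\tau |Y_{\sigma,\tau}(hu)|$, while in (5) the global minimum over $\tau$ splits as a minimum over the adversary's first move followed by an independent minimum inside the chosen subtree. The delicate point peculiar to (4) is that full support is genuinely optimal, which hinges on $\widthrel{X}{hu} \ge 0$ so that widening $U$ can never decrease the count; the dual delicate point in (5) is that the adversary never gains by spreading probability, precisely because the possible-play sets for distinct first moves are disjoint. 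Once these two facts are isolated, (4) and (5) follow by the $\max$--$\min$ bookkeeping above.
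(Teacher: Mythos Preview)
Your proposal is correct and follows essentially the same approach as the paper: both dispatch (1)--(3) directly from the definition, and for (4)--(5) both decompose by the first symbol after $h$, exploit the independence of the subtrees rooted at the various $hu$, and construct the witnessing strategies for each direction by gluing together width-optimal sub-strategies (for our side) or per-subtree minimizing responses (for the adversary). The paper's presentation is slightly more explicit---it writes out the constructed strategies $\tilde{\sigma}$, $\tilde{\tau}$ and the full inequality chains rather than stating the independence-of-subtrees fact abstractly---but the content is the same, and the ``main obstacle'' you flag (making the factorization of strategies over disjoint subtrees rigorous) is exactly what the paper's explicit constructions accomplish.
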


Now we can state the realizability conditions, which are simply that $\improvs$ and $\valids$ have sufficiently large width.
In fact, the conditions turn out to be exactly the same as those for non-reactive CI except that width takes the place of size \cite{fsttcs}.
\begin{theorem} \label{theorem:feasibility}
The following are equivalent:
\begin{enumerate}[(1)]
\item $\mathcal{C}$ is realizable.
\item $\width{\improvs} \ge 1/\rho$ and $\width{\valids} \ge (1-\epsilon) / \rho$.
\item There is an improviser for $\mathcal{C}$.
\end{enumerate}
\end{theorem}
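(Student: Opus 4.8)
The plan is to prove the cycle of implications $(3)\Rightarrow(1)\Rightarrow(2)\Rightarrow(3)$. The implication $(3)\Rightarrow(1)$ is immediate from the definitions: an improviser is by definition an algorithm implementing an improvising strategy, so its existence makes $\mathcal{C}$ realizable. For $(1)\Rightarrow(2)$ I would argue by a counting bound. Fix an improvising strategy $\sigma$ and an arbitrary adversary $\tau$. The hard constraint gives $P_{\sigma,\tau}(\improvs)=1$, and this mass is supported on $\improvs\cap\possible_{\sigma,\tau}$, where the randomness constraint bounds each play by $\rho$; hence $|\improvs\cap\possible_{\sigma,\tau}|\ge 1/\rho$. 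Since this holds for every $\tau$ while $\sigma$ is one particular strategy, $\width{\improvs}=\max_{\sigma'}\min_\tau|\improvs\cap\possible_{\sigma',\tau}|\ge\min_\tau|\improvs\cap\possible_{\sigma,\tau}|\ge 1/\rho$. The same argument applied to the soft constraint, which places mass $\ge 1-\epsilon$ on $\valids\cap\possible_{\sigma,\tau}$ with each play again $\le\rho$, yields $\width{\valids}\ge(1-\epsilon)/\rho$.

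The substance of the theorem is $(2)\Rightarrow(3)$, where I would construct an improviser directly from the recursive structure of width in Lemma~\ref{lemma:widthrel}. The idea is to view $\rho\cdot\widthrel{\improvs}{h}$ as a \emph{capacity} for the total probability mass that can be pushed through a history $h$ into $\improvs$ while keeping every individual play below $\rho$, and $\rho\cdot\widthrel{\valids}{h}$ as the portion of that capacity that can be routed into admissible plays. I would define $\sigma$ so as to maintain, against the worst-case adversary, the invariant that whenever $h$ is reached carrying mass $b\le\rho\,\widthrel{\improvs}{h}$, the mass can be distributed over completions so that each play receives total probability at most $\rho$ and admissible plays receive at least $\min(b,\rho\,\widthrel{\valids}{h})$ in aggregate.

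To propagate the invariant I would use parts (3)--(5) of Lemma~\ref{lemma:widthrel}. At our turns, $\widthrel{\improvs}{h}=\sum_u\widthrel{\improvs}{hu}$ and likewise for $\valids$, so the incoming mass $b$ can be split into shares $b_u\le\rho\,\widthrel{\improvs}{hu}$ summing to $b$, filling the admissible capacity $\rho\,\widthrel{\valids}{hu}$ first; the conditional distribution $\sigma(h,u)=b_u/b$ depends only on $h$ and the fixed widths, so $\sigma$ is a genuine strategy, and it never sends mass to a child of zero $\improvs$-width, securing the hard constraint. At the adversary's turns, $\widthrel{\improvs}{h}=\min_u\widthrel{\improvs}{hu}$ and $\widthrel{\valids}{h}=\min_u\widthrel{\valids}{hu}$, so whichever child $u$ the adversary forces satisfies $\widthrel{\improvs}{hu}\ge\widthrel{\improvs}{h}$ and $\widthrel{\valids}{hu}\ge\widthrel{\valids}{h}$: both the capacity bound and the admissible-mass guarantee can only improve, so the adversary cannot break the invariant. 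At leaves, part (3) identifies the mass at $h$ with the probability of the single play $h$, which is admissible exactly when $h\in\valids$. Unrolling the invariant from the root, where $b=1\le\rho\,\width{\improvs}$ and $\min(1,\rho\,\width{\valids})\ge 1-\epsilon$ by hypothesis (2), yields all three constraints simultaneously. Finally, since each $\sigma(h,\cdot)$ is a finitely supported rational distribution computable from the widths, the strategy is implementable by an expected-finite-time algorithm, giving the desired improviser.

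The main obstacle is the design and verification of a single invariant that simultaneously enforces the per-play upper bound (randomness) and the aggregate admissible lower bound (soft), since these two goals pull in opposite directions and must hold against an adversary acting at the $\min$-nodes. The clean resolution is the observation that at adversary nodes \emph{both} capacities at any forced child dominate those at the parent, so no adversarial move can degrade either guarantee. A secondary point worth checking is that, because we are splitting continuous probability mass rather than selecting discrete words, the non-integrality of $1/\rho$ and $(1-\epsilon)/\rho$ causes no rounding difficulties, in contrast to the non-reactive setting.
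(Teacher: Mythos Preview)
Your argument is correct. The implications $(3)\Rightarrow(1)$ and $(1)\Rightarrow(2)$ match the paper's verbatim. For $(2)\Rightarrow(3)$ you and the paper follow the same plan---construct a history-dependent strategy by pushing probability mass down the game tree according to widths, using the recursion of Lemma~\ref{lemma:widthrel} and the crucial fact that at adversary nodes every child's width dominates the parent's---but differ in bookkeeping. The paper (Algorithm~\ref{algorithm:impro}) fixes scalars $\alpha,\beta$ once, tracks a pair of \emph{integer} targets $(m^\valids,m^\improvs)$ that may deliberately lag behind the actual widths when the adversary is generous, splits them at our turns via the combinatorial Lemma~\ref{lemma:partition}, and verifies the three constraints separately in Lemmas~\ref{lemma:sigma-hard}--\ref{lemma:sigma-random}. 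You instead carry a single real mass $b(h)\le\rho\,\widthrel{\improvs}{h}$ and allocate it greedily by filling the admissible capacity $\rho\,\widthrel{\valids}{hu}$ first, folding all three constraints into one invariant. Your formulation is a bit more direct; the paper's integer parametrization dovetails with the later algorithmic development (Theorem~\ref{theorem:generic-scheme}), where width oracles and integer arithmetic are what the schemes compute. One point worth making explicit: at leaves, your identification of $b(h)$ with the probability of the play $h$ holds only against deterministic adversaries; for randomized $\tau$ one has $P_{\sigma,\tau}(\pi)\le b(\pi)$ (which still gives the randomness bound), and the soft-constraint invariant should be phrased as a lower bound on the conditional admissible probability holding for \emph{every} adversary continuation from $h$.
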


\begin{runningexample}
We saw above that our example was realizable with $\epsilon = \rho = 1/2$, and indeed $4 = \width{\improvs} \ge 1 / \rho = 2$ and $1 = \width{\valids} \ge (1 - \epsilon) / \rho = 1$.
However, if we put $\rho = 1/3$ we violate the second inequality and the instance is not realizable: essentially, we need to distribute probability $1 - \epsilon = 1/2$ among plays in $\valids$ (to satisfy the soft constraint), but since $\width{\valids} = 1$, against some adversaries we can only generate one play in $\valids$ and would have to give it the whole $1/2$ (violating the randomness requirement).
\end{runningexample}

The difficult part of the Theorem is constructing an improviser when the inequalities (2) hold.
Despite the similarity in these conditions to the non-reactive case, the construction is much more involved.
We begin with a general overview.

\subsection{Improviser Construction: Discussion}

Our improviser can be viewed as an extension of the classical random-walk reduction of uniform sampling to counting \cite{wilf}.
In that algorithm (which was used in a similar way for DFA specifications in \cite{jacm,fsttcs}), a uniform distribution over paths in a DAG is obtained by moving to the next vertex with probability proportional to the number of paths originating at it.
In our case, which plays are possible depends on the adversary, but the width still tells us \emph{how many} plays are possible.
So we could try a random walk using widths as weights: e.g. on the first turn in Fig.~\ref{figure:running-game}, picking $+$, $-$, and $=$ with probabilities $1/4$, $2/4$, and $1/4$ respectively.
Against the adversary shown in Fig.~\ref{figure:running-game}, this would indeed yield a uniform distribution over the four possible plays in $\improvs$.

However, the soft constraint may require a non-uniform distribution.
In the running example with $\epsilon = \rho = 1/2$, we need to generate the single possible play in $\valids$ with probability $1/2$, not just the uniform probability $1/4$ .
This is easily fixed by doing the random walk with a \emph{weighted average} of the widths of $\improvs$ and $\valids$: specifically, move to position $h$ with probability proportional to $\alpha \widthrel{\valids}{h} + \beta (\widthrel{\improvs}{h} - \widthrel{\valids}{h})$.
In the example, this would result in plays in $\valids$ getting probability $\alpha$ and those in $\improvs \setminus \valids$ getting probability $\beta$.
Taking $\alpha$ sufficiently large, we can ensure the soft constraint is satisfied.

Unfortunately, this strategy can fail if the adversary makes \emph{more} plays available than the width guarantees.
Consider the game on the left of Fig.~\ref{figure:extra-paths-game}, where $\width{\improvs} = 3$ and $\width{\valids} = 2$.
This is realizable with $\epsilon = \rho = 1/3$, but no values of $\alpha$ and $\beta$ yield improvising strategies, essentially because an adversary moving from $X$ to $Z$ breaks the worst-case assumption that the adversary will minimize the number of possible plays by moving to $Y$.
In fact, this instance is realizable but not by any memoryless strategy.
To see this, note that all such strategies can be parametrized by the probabilities $p$ and $q$ in Fig.~\ref{figure:extra-paths-game}.
To satisfy the randomness constraint against the adversary that moves from $X$ to $Y$, both $p$ and $(1 - p) q$ must be at most $1/3$.
To satisfy the soft constraint against the adversary that moves from $X$ to $Z$ we must have $pq + (1 - p)q \ge 2/3$, so $q \ge 2/3$.
But then $(1 - p) q \ge (1 - 1/3) (2/3) = 4/9 > 1/3$, a contradiction.

\begin{figure}[tb]
\begin{minipage}{0.5\textwidth}
\centering
\begin{tikzpicture}[initial text=, transform shape, scale=0.7, node distance=0.5cm]

 \node[state, initial] (s0) {}; 
 \node[adversary, right= of s0, yshift=0.6cm] (s1) {\Large $X$}; 
 \node[adversary, right= of s0, yshift=-0.6cm] (s2) {};
 \node[accepting, soft, state, right= of s1, yshift=0.6cm] (s3) {\Large $Y$}; 
 \node[state, right= of s1, yshift=-0.6cm] (s6) {\Large $Z$};
 \node[accepting, soft, state, right= of s6, yshift=0.6cm] (s7) {\Large $W$};
 \node[accepting, state, right= of s6, yshift=-0.6cm] (s8) {};
 \node[right= of s2, yshift=-0.85cm] (spacer) {};

 \path[->] 
 (s0) edge node[above, xshift=-0.1cm] {\Large $p$} (s1) 
 (s0) edge (s2)    
 (s1) edge (s3)
 (s1) edge (s6)
 (s2) edge (s6)
 (s6) edge node[above, xshift=-0.15cm] {\Large $q$} (s7)
 (s6) edge (s8);

\end{tikzpicture}
\end{minipage}
\begin{minipage}{0.5\textwidth}
\centering
\begin{tikzpicture}[initial text=, transform shape, scale=0.7, node distance=0.5cm]

 \node[state, initial] (s0) {\Large $P$}; 
 \node[adversary, right= of s0, yshift=0.6cm] (s1) {\Large $Q$}; 
 \node[adversary, right= of s0, yshift=-0.6cm] (s2) {\Large $R$};
 \node[accepting, soft, state, right= of s1, yshift=0.6cm] (s3) {}; 
 \node[accepting, soft, state, right= of s2, yshift=-0.6cm] (s4) {}; 
 \node[state, right= of s1, yshift=-0.6cm] (s6) {};
 \node[accepting, state, right= of s6, yshift=0.6cm] (s7) {};
 \node[accepting, state, right= of s6, yshift=-0.6cm] (s8) {};

 \path[->] 
 (s0) edge (s1)
 (s0) edge (s2)    
 (s1) edge (s3)
 (s1) edge (s6)
 (s2) edge (s6)
 (s2) edge (s4)
 (s6) edge (s7)
 (s6) edge (s8);

\end{tikzpicture}
\end{minipage}
\caption{Reachability games where a na\"ive random walk, and all memoryless strategies, fail (left) and where no strategy can optimize either $\epsilon$ or $\rho$ against every adversary simultaneously (right).
}
\label{figure:extra-paths-game}
\label{figure:impossible-game}
\end{figure}
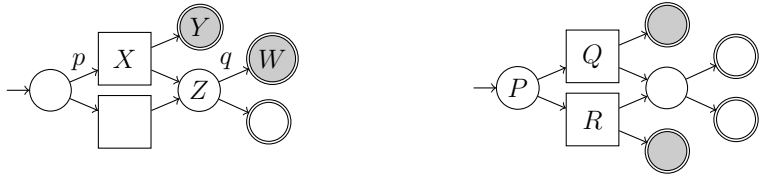

To fix this problem, our improvising strategy $\hat{\sigma}$ (which we will fully specify in Algorithm \ref{algorithm:impro} below) takes a simplistic approach: it tracks how many plays in $\valids$ and $\improvs$ are expected to be possible based on their widths, and if more are available it ignores them.
For example, entering state $Z$ from $X$ there are 2 ways to produce a play in $\improvs$, but since $\widthrel{\improvs}{X} = 1$ we ignore the play in $\improvs \setminus \valids$.
Extra plays in $\valids$ are similarly ignored by being treated as members of $\improvs \setminus \valids$.
Ignoring unneeded plays may seem wasteful, but the proof of Theorem \ref{theorem:feasibility} will show that $\hat{\sigma}$ nevertheless achieves the best possible $\epsilon$:
\begin{corollary} \label{corollary:optimal-epsilon}
$\mathcal{C}$ is realizable iff $\width{\improvs} \ge 1/\rho$ and $\epsilon \ge \eopt \equiv \max(1 - \rho \width{\valids}, 0)$.
Against any adversary, the error probability of Algorithm \ref{algorithm:impro} is at most $\eopt$.
\end{corollary}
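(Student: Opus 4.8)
The plan is to obtain the equivalence as an algebraic reformulation of Theorem~\ref{theorem:feasibility}, and then to extract the error bound from the behaviour of $\hat\sigma$. First I would dispose of the degenerate case $\rho = 0$: since widths are finite, $\width{\improvs} \ge 1/\rho$ cannot hold, so $\mathcal{C}$ is not realizable and both sides of the claimed equivalence are false. Assuming $\rho > 0$, condition~(2) of Theorem~\ref{theorem:feasibility} reads $\width{\improvs} \ge 1/\rho$ together with $\width{\valids} \ge (1-\epsilon)/\rho$; the second inequality is equivalent to $\rho\,\width{\valids} \ge 1 - \epsilon$, i.e. $\epsilon \ge 1 - \rho\,\width{\valids}$, and since $\epsilon \ge 0$ always holds this is in turn equivalent to $\epsilon \ge \max(1 - \rho\,\width{\valids}, 0) = \eopt$. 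Substituting, realizability is equivalent to $\width{\improvs} \ge 1/\rho$ and $\epsilon \ge \eopt$, as claimed.

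For the second assertion, define the error probability against an adversary $\tau$ to be $1 - P_{\hat\sigma,\tau}(\valids)$ (equivalently $P_{\hat\sigma,\tau}(\improvs \setminus \valids)$, using $P_{\hat\sigma,\tau}(\improvs) = 1$). I would show $P_{\hat\sigma,\tau}(\valids) \ge \min(\rho\,\width{\valids}, 1) = 1 - \eopt$ for \emph{every} $\tau$, whence the error is at most $\eopt$. The crucial feature of Algorithm~\ref{algorithm:impro} is that it does not tune itself to the requested $\epsilon$ but always assigns the maximal safe probability $\rho$ to each valid play it keeps: when $\rho\,\width{\valids} \le 1$ it keeps $\width{\valids}$ valid plays, each of weight $\rho$, and distributes the residual mass $1 - \rho\,\width{\valids}$ over improvisation-only plays (each still capped at $\rho$, which is feasible precisely because $\rho\,\width{\improvs} \ge 1$); when $\rho\,\width{\valids} \ge 1$ it places all its mass on valid plays. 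In both regimes $P_{\hat\sigma,\tau}(\valids) \ge 1 - \eopt$. Because the first assertion already shows no improvising strategy can force error below $\eopt$ against all adversaries, this simultaneously certifies that $\hat\sigma$ is error-optimal.

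The real work, which I would import from the proof of Theorem~\ref{theorem:feasibility}, is verifying that $\hat\sigma$ genuinely keeps $\width{\valids}$ valid plays of weight $\rho$ reachable against \emph{every} adversary, not merely against the single adversary witnessing $\width{\valids}$. Here I would argue by induction down the game tree using Lemma~\ref{lemma:widthrel}(3)--(5), maintaining the joint invariant that from a history $h$ the strategy guarantees $\widthrel{\valids}{h}$ valid and $\widthrel{\improvs}{h} - \widthrel{\valids}{h}$ improvisation-only reachable plays with the intended weights. At our turns (part~4) both counts split additively over the successors, so the walk apportions its weight accordingly; at the adversary's turns (part~5) each of $\widthrel{\valids}{h} = \min_u \widthrel{\valids}{hu}$ and $\widthrel{\improvs}{h} = \min_u \widthrel{\improvs}{hu}$ is a \emph{lower} bound that survives whichever move $u$ the adversary actually plays, so the guaranteed counts are preserved no matter what. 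The subtlety --- and the main obstacle --- is that an adversary may expose \emph{more} plays than the width promises (as in Fig.~\ref{figure:extra-paths-game}); the ``ignore extra plays'' rule, which demotes surplus valid plays to $\improvs \setminus \valids$ and discards surplus improvisations, is exactly what keeps the per-play weights at $\rho$ and thus protects the randomness constraint while leaving the $1 - \eopt$ lower bound on $P_{\hat\sigma,\tau}(\valids)$ intact. Once this invariant is established, the corollary follows by summing the weights of the kept valid plays reachable from $\emptyword$.
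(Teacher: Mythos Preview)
Your proposal is correct and follows the paper's approach: the equivalence is the algebraic restatement of Theorem~\ref{theorem:feasibility}(2) that you give, and the error bound is precisely Lemma~\ref{lemma:sigma-soft}, which the paper simply cites while you sketch its proof inline. One small correction to your sketch: the inductive invariant should be phrased in terms of the algorithm's tracked quantities $m^\valids(h)$ and $m^\improvs(h)$ rather than the widths $\widthrel{\valids}{h}$ and $\widthrel{\improvs}{h}$, since after an adversary move the width can strictly exceed $m^\valids$ --- exactly the surplus-plays phenomenon you correctly identify and resolve via the ``ignore extra plays'' rule, but which means the width-based invariant as literally stated would not be preserved.
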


Thus, if \emph{any} improviser can achieve an error probability $\epsilon$, ours does.
We could ask for a stronger property, namely that against each adversary the improviser achieves the smallest possible error probability \emph{for that adversary}.
Unfortunately, this is impossible in general.
Consider the game on the right in Fig.~\ref{figure:impossible-game}, with $\rho = 1$.
Against the adversary which always moves up, we can achieve $\epsilon = 0$ with the strategy that at $P$ moves to $Q$.
We can also achieve $\epsilon = 0$ against the adversary that always moves down, but only with a \emph{different} strategy, namely the one that at $P$ moves to $R$.
So there is no single strategy that achieves the optimal $\epsilon$ for every adversary.
A similar argument shows that there is also no strategy achieving the smallest possible $\rho$ for every adversary.
In essence, optimizing $\epsilon$ or $\rho$ in every case would require the strategy to depend on the adversary.

\subsection{Improviser Construction: Details}

Our improvising strategy, as outlined in the previous section, is shown in Algorithm~\ref{algorithm:impro}.
We first compute $\alpha$ and $\beta$, the (maximum) probabilities for generating elements of $\valids$ and $\improvs \setminus \valids$ respectively.
As in \cite{jacm}, we take $\alpha$ as large as possible given $\alpha \le \rho$, and determine $\beta$ from the probability left over (modulo a couple corner cases).

\begin{algorithm}[tb]
\caption{the strategy $\hat{\sigma}$}
\label{algorithm:impro}
\begin{algorithmic}[1]
\State $\alpha \gets \min(\rho, 1 / \width{\valids})$ \hspace{2.23cm} (or $0$ instead if $\width{\valids} = 0$)
\State $\beta \gets (1 - \alpha \width{\valids}) / (\width{\improvs} - \width{\valids})$ \quad (or $0$ instead if $\width{\improvs} - \width{\valids} = 0$)
\State $m^\valids \gets \width{\valids}$, $m^\improvs \gets \width{\improvs}$
\State $h \gets \lambda$
\While {the game is not over after $h$}
    \If {it is our turn after $h$}
    	\State $m^\valids_u, m^\improvs_u \gets \textsc{Partition}(m^\valids, m^\improvs, h)$ \Comment{returns values for each $u \in \alphabet$}
	\State for each $u \in \alphabet$, put $t_u \gets \alpha m^\valids_u + \beta (m^\improvs_u - m^\valids_u)$
	\State pick $u \in \alphabet$ with probability proportional to $t_u$ and append it to $h$
	\State $m^\valids \gets m^\valids_u$, $m^\improvs \gets m^\improvs_u$
    \Else
         \State the adversary picks $u \in \alphabet$ given the history $h$; append it to $h$
    \EndIf
\EndWhile
\Return $h$
\end{algorithmic}
\end{algorithm}

Next we initialize $m^\valids$ and $m^\improvs$, our expectations for how many plays in $\valids$ and $\improvs$ respectively are still possible to generate.
Initially these are given by $\width{\valids}$ and $\width{\improvs}$, but as we saw above it is possible for more plays to become available.
The function \textsc{Partition} handles this, deciding which $m^\valids$ (resp., $m^\improvs$) out of the available $\widthrel{\valids}{h}$ ($\widthrel{\improvs}{h}$) plays we will use.
The behavior of \textsc{Partition} is defined by the following lemma; its proof (in Appendix~\ref{section:detailed-proofs}) greedily takes the first $m^\valids$ possible plays in $\valids$ under some canonical order and the first $m^\improvs - m^\valids$ of the remaining plays in $\improvs$.

\begin{restatable}{lemma}{lemmaPartition} \label{lemma:partition}
If it is our turn after $h \in \histories$, and $m^\valids, m^\improvs \in \Z$ satisfy $0 \le m^\valids \le m^\improvs \le \widthrel{\improvs}{h}$ and $m^\valids \le \widthrel{\valids}{h}$, there are integer partitions $\sum_{u \in \alphabet} m^\valids_u$ and $\sum_{u \in \alphabet} m^\improvs_u$ of $m^\valids$ and $m^\improvs$ respectively such that $0 \le m^\valids_u \le m^\improvs_u \le \widthrel{\improvs}{hu}$ and $m^\valids_u \le \widthrel{\valids}{hu}$ for all $u \in \alphabet$.
These are computable in poly-time given oracles for $\widthrel{\improvs}{\cdot}$ and $\widthrel{\valids}{\cdot}$.
\end{restatable}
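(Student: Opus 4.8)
The plan is to reduce the claim to a pair of elementary integer-distribution facts, using the additivity of width on our turns. Since it is our turn after $h$, Lemma~\ref{lemma:widthrel}(4) gives $\sum_{u \in \alphabet} \widthrel{\valids}{hu} = \widthrel{\valids}{h}$ and $\sum_{u \in \alphabet} \widthrel{\improvs}{hu} = \widthrel{\improvs}{h}$, so the per-child width budgets add up exactly to the budgets at $h$. I would also record the monotonicity fact that $\valids \subseteq \improvs$ (admissible improvisations are improvisations) implies $\widthrel{\valids}{hu} \le \widthrel{\improvs}{hu}$ for every $u$; this follows directly from the $\max$--$\min$ definition of width, since intersecting with a larger set can only increase the count for each fixed strategy pair.

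The construction then proceeds in two stages, mirroring the paper's ``$\valids$ first, then the rest of $\improvs$'' description. First I would distribute $m^\valids$ across the children subject to the caps $\widthrel{\valids}{hu}$: because $\sum_u \widthrel{\valids}{hu} = \widthrel{\valids}{h} \ge m^\valids$, the greedy rule of filling slots to capacity in a fixed canonical order until exactly $m^\valids$ units are placed yields nonnegative integers $m^\valids_u \le \widthrel{\valids}{hu}$ with $\sum_u m^\valids_u = m^\valids$. Second, I would build the $\improvs$-partition on top of this one: initialize $m^\improvs_u \gets m^\valids_u$ and distribute the remaining $m^\improvs - m^\valids \ge 0$ units into the residual capacities $\widthrel{\improvs}{hu} - m^\valids_u$. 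These residuals are nonnegative since $m^\valids_u \le \widthrel{\valids}{hu} \le \widthrel{\improvs}{hu}$, and their total is $\widthrel{\improvs}{h} - m^\valids \ge m^\improvs - m^\valids$, so the same greedy filling succeeds.

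It then remains to verify the four required inequalities, each immediate from the construction: $m^\valids_u \ge 0$ and $m^\valids_u \le \widthrel{\valids}{hu}$ by stage one; $m^\valids_u \le m^\improvs_u$ because stage two only adds to the initial value $m^\valids_u$; and $m^\improvs_u \le \widthrel{\improvs}{hu}$ by the residual-capacity bound. The two sums equal $m^\valids$ and $m^\improvs$ by construction. For the complexity claim, each stage makes one oracle call to $\widthrel{\valids}{\cdot}$ and $\widthrel{\improvs}{\cdot}$ per symbol of $\alphabet$ and performs $O(|\alphabet|)$ arithmetic operations on integers bounded by $|\alphabet|^{n}$ (Lemma~\ref{lemma:widthrel}(1)), hence of polynomial bit-length, so the whole procedure runs in polynomial time.

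The only genuinely delicate point, which I would state explicitly, is that the coupling constraint $m^\valids_u \le m^\improvs_u$ must hold simultaneously with both capacity constraints; constructing $m^\improvs_u$ as an extension of $m^\valids_u$ is exactly what makes this automatic and avoids solving a joint feasibility problem. Everything else rests on the single structural input that width is additive on our turns, so I expect no further obstacle.
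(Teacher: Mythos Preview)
Your proposal is correct and follows essentially the same argument as the paper: the paper fixes a canonical order on $\alphabet$, greedily fills the caps $\widthrel{\valids}{hu}$ until $m^\valids$ is exhausted, and then greedily fills the residuals $\widthrel{\improvs}{hu}-m^\valids_u$ until $m^\improvs-m^\valids$ is exhausted, exactly as you describe. Your explicit justification of the monotonicity $\widthrel{\valids}{hu}\le\widthrel{\improvs}{hu}$ and your framing of the second stage as an extension of the first (to guarantee $m^\valids_u\le m^\improvs_u$ automatically) are the same observations the paper relies on, just stated a bit more transparently.
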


Finally, we perform the random walk, moving from position $h$ to $hu$ with (unnormalized) probability $t_u$, the weighted average described above.

\begin{runningexample}
With $\epsilon = \rho = 1/2$, as before $\width{\valids} = 1$ and $\width{\improvs} = 4$ so $\alpha = 1/2$ and $\beta = 1/6$.
On the first move, $m^\valids$ and $m^\improvs$ match $\widthrel{\valids}{h}$ and $\widthrel{\improvs}{h}$, so all plays are used and \textsc{Partition} returns $(\widthrel{\valids}{hu}, \widthrel{\improvs}{hu})$ for each $u \in \alphabet$.
Looking up these values in Fig.~\ref{figure:running-impro}, we see $(m^\valids_=, m^\improvs_=) = (0, 2)$ and so $t(=) = 2 \beta = 1/3$.
Similarly $t(+) = \alpha = 1/2$ and $t(-) = \beta = 1/6$.
We choose an action according to these weights; suppose $=$, so that we update $m^\valids \gets 0$ and $m^\improvs \gets 2$, and suppose the adversary responds with $=$.
From Fig.~\ref{figure:running-impro}, $\widthrel{\valids}{==} = 1$ and $\widthrel{\improvs}{==} = 3$, whereas $m^\valids = 0$ and $m^\improvs = 2$.
So \textsc{Partition} discards a play, say returning $(m^\valids_u, m^\improvs_u) = (0, 1)$ for $u \in \{ +, = \}$ and $(0, 0)$ for $u \in \{-\}$.
Then $t(+) = t(=) = \beta = 1/6$ and $t(-) = 0$.
So we pick $+$ or $=$ with equal probability, say $+$.
If the adversary responds with $+$, we get the play $=$$=$$+$$+$, shown in bold on Fig.~\ref{figure:running-impro}.
As desired, it satisfies the hard constraint.
\end{runningexample}

\begin{figure}[tb]
\begin{minipage}{0.5\textwidth}
\centering
\begin{tikzpicture}[initial text=, transform shape, scale=0.6, node distance=0.75cm]

 \node[state, initial] (s0) {\Large 4}; 
 \node[adversary, right= of s0, yshift=1.5cm] (s11) {\Large 1}; 
 \node[adversary, right= of s0] (s10) {\Large 2};
  \node[adversary, right= of s0, yshift=-1.5cm] (s1-1) {\Large 1};
 \node[state, right= of s11, yshift=1.5cm] (s22) {\Large 1};
 \node[state, right= of s11] (s21) {\Large 2};
 \node[state, right= of s10] (s20) {\Large 3};
 \node[state, right= of s1-1] (s2-1) {\Large 2};
 \node[state, right= of s1-1, yshift=-1.5cm] (s2-2) {\Large 1};
 \node[adversary, right= of s21, yshift=1.5cm] (s32) {\Large 0};
 \node[adversary, right= of s21] (s31) {\Large 1};
 \node[adversary, right= of s20] (s30) {\Large 1};
 \node[adversary, right= of s2-1] (s3-1) {\Large 1};
 \node[adversary, right= of s2-1, yshift=-1.5cm] (s3-2) {\Large 0};
 \node[state, accepting, soft, right= of s31, yshift=1.5cm] (s42) {\Large 1};
 \node[state, accepting, soft, right= of s31] (s41) {\Large 1};
 \node[state, accepting, soft, right= of s30] (s40) {\Large 1};
 \node[state, accepting, right= of s3-1] (s4-1) {\Large 1};
 \node[state, accepting, right= of s3-1, yshift=-1.5cm] (s4-2) {\Large 1};
 \node[right= of s22, yshift=1cm] (s33) {};
 \node[right= of s2-2, yshift=-1cm] (s3-3) {};
 \node[right= of s32, yshift=1cm] (s43) {};
 \node[right= of s3-2, yshift=-1cm] (s4-3) {};
 \node[right= of s42, xshift=-0.5cm] (l2) {\Large $+2$};
 \node[right= of s41, xshift=-0.5cm] (l1) {\Large $+1$};
 \node[right= of s40, xshift=-0.5cm] (l0) {\Large $+0$};
 \node[right= of s4-1, xshift=-0.5cm] (l-1) {\Large $-1$};
 \node[right= of s4-2, xshift=-0.5cm] (l-2) {\Large $-2$};
 \node[above= of l2, yshift=-0.5cm] (l3) {\Large $+3$};
 \node[below= of l-2, yshift=0.5cm] (l-3) {\Large $-3$};

 \path[->] 
 (s11) edge (s22)
 (s11) edge (s21)
 (s11) edge (s20)
 (s10) edge (s21)
 (s10) edge (s2-1)
 (s1-1) edge (s20)
 (s1-1) edge (s2-1)
 (s1-1) edge (s2-2)
 (s22) edge (s33)
 (s22) edge (s32)
 (s22) edge (s31)
 (s21) edge (s32)
 (s21) edge (s31)
 (s21) edge (s30)
 (s20) edge (s30)
 (s20) edge (s3-1)
 (s2-1) edge (s30)
 (s2-1) edge (s3-1)
 (s2-1) edge (s3-2)
 (s2-2) edge (s3-1)
 (s2-2) edge (s3-2)
 (s2-2) edge (s3-3)
 (s32) edge (s43)
 (s32) edge (s42)
 (s32) edge (s41)
 (s31) edge (s41)
 (s31) edge (s40)
 (s30) edge (s41)
 (s30) edge (s40)
 (s30) edge (s4-1)
 (s3-1) edge (s40)
 (s3-1) edge (s4-1)
 (s3-1) edge (s4-2)
 (s3-2) edge (s4-1)
 (s3-2) edge (s4-2)
 (s3-2) edge (s4-3);

 \draw[->]
 (s0) -- node[above, xshift=-0.2cm, yshift=0.1cm] {\Large $+$} (s11);
 \draw[->]
 (s0) -- node[below, xshift=-0.2cm, yshift=-0.1cm] {\Large $-$} (s1-1);
 \draw[->, line width=1.5pt]
 (s0) -- node[above, xshift=-0.1cm] {\Large $=$} (s10);
 \path[->, line width=1.5pt]
 (s10) edge (s20)
 (s20) edge (s31)
 (s31) edge (s42);
 
\end{tikzpicture}
\end{minipage}
\begin{minipage}{0.5\textwidth}
\centering
\begin{tikzpicture}[initial text=, transform shape, scale=0.6, node distance=0.75cm]

 \node[state, initial] (s0) {\Large 1}; 
 \node[adversary, right= of s0, yshift=1.5cm] (s11) {\Large 1}; 
 \node[adversary, right= of s0] (s10) {\Large 0};
  \node[adversary, right= of s0, yshift=-1.5cm] (s1-1) {\Large 0};
 \node[state, right= of s11, yshift=1.5cm] (s22) {\Large 1};
 \node[state, right= of s11] (s21) {\Large 1};
 \node[state, right= of s10] (s20) {\Large 1};
 \node[state, right= of s1-1] (s2-1) {\Large 0};
 \node[state, right= of s1-1, yshift=-1.5cm] (s2-2) {\Large 0};
 \node[adversary, right= of s21, yshift=1.5cm] (s32) {\Large 0};
 \node[adversary, right= of s21] (s31) {\Large 1};
 \node[adversary, right= of s20] (s30) {\Large 0};
 \node[adversary, right= of s2-1] (s3-1) {\Large 0};
 \node[adversary, right= of s2-1, yshift=-1.5cm] (s3-2) {\Large 0};
 \node[state, accepting, soft, right= of s31, yshift=1.5cm] (s42) {\Large 1};
 \node[state, accepting, soft, right= of s31] (s41) {\Large 1};
 \node[state, accepting, soft, right= of s30] (s40) {\Large 1};
 \node[state, accepting, right= of s3-1] (s4-1) {\Large 0};
 \node[state, accepting, right= of s3-1, yshift=-1.5cm] (s4-2) {\Large 0};
 \node[right= of s22, yshift=1cm] (s33) {};
 \node[right= of s2-2, yshift=-1cm] (s3-3) {};
 \node[right= of s32, yshift=1cm] (s43) {};
 \node[right= of s3-2, yshift=-1cm] (s4-3) {};

 \path[->] 
 (s0) edge (s11)
 (s0) edge (s1-1)
 (s11) edge (s22)
 (s11) edge (s21)
 (s11) edge (s20)
 (s10) edge (s21)
 (s10) edge (s2-1)
 (s1-1) edge (s20)
 (s1-1) edge (s2-1)
 (s1-1) edge (s2-2)
 (s22) edge (s33)
 (s22) edge (s32)
 (s22) edge (s31)
 (s21) edge (s32)
 (s21) edge (s31)
 (s21) edge (s30)
 (s20) edge (s30)
 (s20) edge (s3-1)
 (s2-1) edge (s30)
 (s2-1) edge (s3-1)
 (s2-1) edge (s3-2)
 (s2-2) edge (s3-1)
 (s2-2) edge (s3-2)
 (s2-2) edge (s3-3)
 (s32) edge (s43)
 (s32) edge (s42)
 (s32) edge (s41)
 (s31) edge (s41)
 (s31) edge (s40)
 (s30) edge (s41)
 (s30) edge (s40)
 (s30) edge (s4-1)
 (s3-1) edge (s40)
 (s3-1) edge (s4-1)
 (s3-1) edge (s4-2)
 (s3-2) edge (s4-1)
 (s3-2) edge (s4-2)
 (s3-2) edge (s4-3);

 \path[->, line width=1.5pt]
 (s0) edge (s10)
 (s10) edge (s20)
 (s20) edge (s31)
 (s31) edge (s42);

\end{tikzpicture}
\end{minipage}
\caption{A run of Algorithm~\ref{algorithm:impro}, labeling states with corresponding widths of $\improvs$ (left) and $\valids$ (right).}
\label{figure:running-impro}
\end{figure}
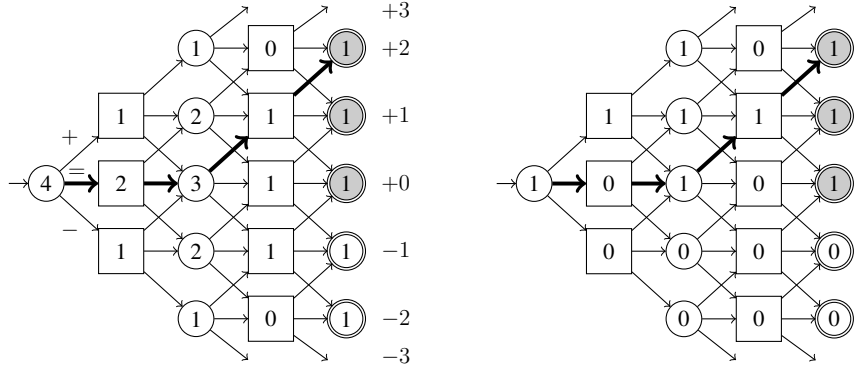

The next few lemmas establish that $\hat{\sigma}$ is well-defined and in fact an improvising strategy, allowing us to prove Theorem~\ref{theorem:feasibility}.
Throughout, we write $m^\valids(h)$ (resp., $m^\improvs(h)$) for the value of $m^\valids$ ($m^\improvs$) at the start of the iteration for history $h$.
We also write $t(h) = \alpha m^\valids(h) + \beta (m^\improvs(h) - m^\valids(h))$ (so $t(hu) = t_u$ when we pick $u$).

\begin{restatable}{lemma}{lemmaSigmaHard} \label{lemma:sigma-hard}
If $\width{\improvs} \ge 1 / \rho$, then $\hat{\sigma}$ is a well-defined strategy and $P_{\hat{\sigma},\tau}(\improvs) = 1$ for every adversary $\tau$.
\end{restatable}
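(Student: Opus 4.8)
The plan is to prove both assertions from a single loop invariant on the auxiliary counters $m^\valids(h)$ and $m^\improvs(h)$ maintained by Algorithm~\ref{algorithm:impro}. For every history $h$ reachable under $\hat\sigma$ against some adversary I would establish
\[ 0 \le m^\valids(h) \le m^\improvs(h) \le \widthrel{\improvs}{h}, \qquad m^\valids(h) \le \widthrel{\valids}{h}, \qquad m^\improvs(h) > 0 . \]
The first two conjuncts are precisely the precondition of Lemma~\ref{lemma:partition}, so proving them guarantees that every call to \textsc{Partition} is legal and hence that the values $m^\valids_u, m^\improvs_u$, and the weights $t_u$, are well-defined; the positivity conjunct will then yield both a well-defined move distribution and the hard constraint.

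First I would verify the bound conjuncts by induction on $|h|$. At $h = \emptyword$, property~(2) of Lemma~\ref{lemma:widthrel} identifies $m^\valids = \width{\valids}$ and $m^\improvs = \width{\improvs}$ with $\widthrel{\valids}{\emptyword}$ and $\widthrel{\improvs}{\emptyword}$, while monotonicity of width under set inclusion (using $\valids \subseteq \improvs$) gives $\width{\valids} \le \width{\improvs}$. On our turns Lemma~\ref{lemma:partition} returns $m^\valids_u, m^\improvs_u$ with $0 \le m^\valids_u \le m^\improvs_u \le \widthrel{\improvs}{hu}$ and $m^\valids_u \le \widthrel{\valids}{hu}$, so the bounds persist at $hu$ for whichever $u$ is selected. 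On the adversary's turns the counters do not change, and property~(5) of Lemma~\ref{lemma:widthrel} (the adversary attains the minimum width) gives $m^\improvs(h) \le \widthrel{\improvs}{h} = \min_u \widthrel{\improvs}{hu} \le \widthrel{\improvs}{hu}$, and likewise for $\valids$, so the bounds persist there as well.

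The crux, and the step I expect to be the main obstacle, is the positivity conjunct together with well-definedness of the distribution on our turns, i.e.\ $t(h) = \sum_u t_u > 0$. Since $\alpha, \beta \ge 0$ and $m^\valids_u \le m^\improvs_u$ each $t_u \ge 0$, and summing the defining identities over $u$ gives $\sum_u t_u = \alpha m^\valids(h) + \beta(m^\improvs(h) - m^\valids(h)) = t(h)$; so it suffices to show $t(h) > 0$ whenever $m^\improvs(h) > 0$. I would split on whether the counters coincide. If $m^\valids(h) < m^\improvs(h)$, then $\beta > 0$: indeed $\beta = 0$ only in the corner case $\width{\improvs} = \width{\valids}$, and there the identities $\sum_u m^\valids_u = m^\valids$ and $\sum_u m^\improvs_u = m^\improvs$ with $m^\valids_u \le m^\improvs_u$ force equal counters to stay equal, so starting from $m^\valids(\emptyword) = m^\improvs(\emptyword)$ the strict inequality could never arise; hence $t(h) \ge \beta(m^\improvs(h) - m^\valids(h)) > 0$. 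If instead $m^\valids(h) = m^\improvs(h)$, this common value is $m^\improvs(h) > 0$, and since the counters are non-increasing along our moves we get $\width{\valids} \ge m^\valids(h) > 0$, so $\alpha = \min(\rho, 1/\width{\valids}) > 0$ and $t(h) = \alpha\,m^\valids(h) > 0$. Finally, any $u$ drawn with positive probability has $t_u > 0$, which forces $m^\improvs_u > 0$ (otherwise $0 \le m^\valids_u \le m^\improvs_u = 0$ gives $t_u = 0$), so $m^\improvs(hu) > 0$ and the positivity conjunct is maintained. As a total function on $\histories$, $\hat\sigma(h,\cdot)$ is defined by normalizing $(t_u)_u$ on reachable histories and arbitrarily elsewhere.

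To conclude the hard constraint I would specialize the invariant to any terminal history $h$ with $|h| = n$ that $\hat\sigma$ reaches with positive probability against $\tau$: there $m^\improvs(h) > 0$, while property~(3) of Lemma~\ref{lemma:widthrel} gives $m^\improvs(h) \le \widthrel{\improvs}{h} = \ind{h \in \improvs}$, forcing $\ind{h \in \improvs} = 1$, i.e.\ $h \in \improvs$. Since every play producible by $\hat\sigma$ against any $\tau$ is such a terminal history, $P_{\hat\sigma,\tau}(\improvs) = 1$ for all $\tau$.
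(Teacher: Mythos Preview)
Your overall architecture---the loop invariant on $m^\valids(h)$ and $m^\improvs(h)$, the propagation across our and the adversary's turns, and the endgame argument via Lemma~\ref{lemma:widthrel}(3)---matches the paper's proof. There is, however, a genuine gap in your positivity argument.

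You claim that $\beta = 0$ only in the corner case $\width{\improvs} = \width{\valids}$. That is false: the formula $\beta = (1 - \alpha \width{\valids}) / (\width{\improvs} - \width{\valids})$ also vanishes whenever $\alpha \width{\valids} = 1$, i.e.\ whenever $\width{\valids} \ge 1/\rho$ (so that $\alpha = 1/\width{\valids}$). In that situation one can perfectly well have $\width{\valids} < \width{\improvs}$, and already at $h = \emptyword$ the counters satisfy $m^\valids(\emptyword) < m^\improvs(\emptyword)$; your case ``$m^\valids(h) < m^\improvs(h) \Rightarrow \beta > 0$'' then fails outright. Concretely, take $\width{\valids} = 3$, $\width{\improvs} = 5$, $\rho = 1/2$: then $\alpha = 1/3$, $\beta = 0$, and at the root $m^\valids = 3 < 5 = m^\improvs$. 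Your argument does not establish $t(h) > 0$ here, and indeed the bare implication ``$m^\improvs(h) > 0 \Rightarrow t(h) > 0$'' is false for arbitrary values (e.g.\ $m^\valids = 0$, $m^\improvs = 2$ with these $\alpha,\beta$).

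The paper sidesteps this by reversing the dependency: it first checks directly, by a three-way case split on the corner cases in the definitions of $\alpha$ and $\beta$, that $t(\emptyword) = 1$. Since $\sum_{u} t_u = t(h)$ on our turns and $t$ is unchanged on the adversary's turns, $t(h) > 0$ then propagates forward simply because we never select a $u$ with $t_u = 0$. The positivity of $m^\improvs(h)$ is deduced \emph{from} $t(h) > 0$, not used to prove it. If you prefer to keep your structure, you must add a third case to your split (namely $\beta = 0$ with $\width{\improvs} > \width{\valids}$, hence $\alpha > 0$) and carry the extra invariant $m^\valids(h) > 0$ through that branch; this works, but is more laborious than the paper's route.
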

\begin{proof}[sketch]
An easy induction on $h$ shows the conditions of Lemma \ref{lemma:partition} are always satisfied, and that $t(h)$ is always positive since we never pick a $u$ with $t_u = 0$.
So $\sum_u t_u = t(h) > 0$ and $\hat{\sigma}$ is well-defined.
Furthermore, $t(h) > 0$ implies $m^\improvs(h) > 0$, so for any $h \in \possible_{\hat{\sigma},\tau}$ we have $\ind{h \in \improvs} = \widthrel{\improvs}{h} \ge m^\improvs(h) > 0$ and thus $h \in \improvs$. \qed
\end{proof}

\begin{restatable}{lemma}{lemmaSigmaSoft} \label{lemma:sigma-soft}
If $\width{\improvs} \ge 1 / \rho$, then $P_{\hat{\sigma},\tau}(\valids) \ge \min(\rho \width{\valids}, 1)$ for every $\tau$.
\end{restatable}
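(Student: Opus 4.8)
The plan is to prove the equivalent bound $P_{\hat{\sigma},\tau}(\valids) \ge \alpha\,\width{\valids}$ for every adversary $\tau$; this suffices because $\alpha\,\width{\valids} = \min(\rho,1/\width{\valids})\,\width{\valids} = \min(\rho\,\width{\valids},1)$. If $\width{\valids} = 0$ the right-hand side is $0$ and the claim is vacuous, so I would assume $\width{\valids} > 0$. The hypothesis $\width{\improvs} \ge 1/\rho$ forces $\rho > 0$ (no finite width meets $1/\rho$ when $\rho = 0$), so $\alpha = \min(\rho,1/\width{\valids}) > 0$; and by Lemma~\ref{lemma:sigma-hard} the strategy $\hat{\sigma}$ is well-defined, i.e.\ $t(h) > 0$ at every reachable $h$.

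First I would record the global identity $t(\emptyword) = 1$. Since $m^\valids(\emptyword) = \width{\valids}$ and $m^\improvs(\emptyword) = \width{\improvs}$, when $\width{\improvs} > \width{\valids}$ the definition of $\beta$ gives $\beta(\width{\improvs}-\width{\valids}) = 1 - \alpha\,\width{\valids}$, so $t(\emptyword) = 1$; and when $\width{\improvs} = \width{\valids}$, the hypothesis $\width{\improvs} \ge 1/\rho$ yields $\alpha = 1/\width{\valids}$, so again $t(\emptyword) = \alpha\,\width{\valids} = 1$.

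The heart of the argument is a backward induction on $|h|$, from $n$ down to $0$, establishing for every reachable $h$ the invariant
\[
q(h) \ \ge\ \frac{\alpha\,m^\valids(h)}{t(h)},
\]
where $q(h) = \sum_{w\,:\,hw \in \valids} P_{\hat{\sigma},\tau}(w \mid h)$ is the chance the completed play lands in $\valids$ given history $h$. At a leaf ($|h| = n$) the Lemma~\ref{lemma:partition} constraints give $m^\valids(h) \le \widthrel{\valids}{h} = \ind{h \in \valids} \le 1$; if $m^\valids(h) = 1$ then also $m^\improvs(h) = 1$, so $t(h) = \alpha$, $h \in \valids$, and the bound reads $1 \ge 1$, while if $m^\valids(h) = 0$ it is trivial. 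On our turn I would use that $\hat{\sigma}$ moves to $hu$ with probability $t(hu)/t(h)$, that $\sum_u t(hu) = t(h)$ and $\sum_u m^\valids_u = m^\valids(h)$ (Partition yields genuine partitions), and that every child with $t(hu) = 0$ has $m^\valids_u = 0$ (all terms of $t_u$ are nonnegative and $\alpha > 0$); the $t(hu)$ factors then cancel and the surviving weights sum back to $m^\valids(h)$. On the adversary's turn the decisive point is that $\hat{\sigma}$ never updates $m^\valids$, $m^\improvs$, or $t$, so $m^\valids(hu) = m^\valids(h)$ and $t(hu) = t(h)$ for every adversary move $u$; averaging the inductive bound against $\tau(h,\cdot)$, whose mass is $1$, reproduces the bound at $h$. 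Evaluating the invariant at the root and substituting $t(\emptyword) = 1$ and $m^\valids(\emptyword) = \width{\valids}$ gives $P_{\hat{\sigma},\tau}(\valids) = q(\emptyword) \ge \alpha\,\width{\valids}$, as required.

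I expect the main obstacle to be the bookkeeping around discarded plays in the our-turn step: one must verify that the children into which Partition allocates no $\valids$-weight are exactly those that may receive zero transition probability, so that dropping the $t(hu) = 0$ terms wastes none of the $m^\valids(h)$ budget. This hinges on combining $\alpha > 0$ with the nonnegativity of $\alpha$, $\beta$, and $m^\improvs_u - m^\valids_u$; once this point and the identity $t(\emptyword) = 1$ are secured, both turn cases are routine averaging arguments.
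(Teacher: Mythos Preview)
Your proposal is correct and follows essentially the same approach as the paper: a backward induction on $|h|$ establishing the invariant $q(h) \ge \alpha\,m^\valids(h)/t(h)$, with the same base case and the same handling of the two turn types, then evaluating at $h = \emptyword$ using $t(\emptyword) = 1$. If anything, you are slightly more careful than the paper about the edge cases $\width{\valids} = 0$ and $t(hu) = 0$ (the paper silently cancels $t(hu)/t(hu)$ without noting that $t(hu) = 0$ forces $m^\valids_u = 0$), but the structure and key idea are identical.
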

\begin{proof}[sketch]
Because of the $\alpha m^\valids(h)$ term in the weights $t(h)$, the probability of obtaining a play in $\valids$ starting from $h$ is at least $\alpha m^\valids(h) / t(h)$ (as can be seen by induction on $h$ in order of decreasing length).
Then since $m^\valids(\lambda) = \width{\valids}$ and $t(\lambda) = 1$ we have $P_{\hat{\sigma},\tau}(\valids) \ge \alpha \width{\valids} = \min(\rho \width{\valids}, 1)$. \qed
\end{proof}

\begin{restatable}{lemma}{lemmaSigmaRandom} \label{lemma:sigma-random}
If $\width{\improvs} \ge 1 / \rho$, then $P_{\hat{\sigma},\tau}(\pi) \le \rho$ for every $\pi \in \alphabet^n$ and $\tau$.
\end{restatable}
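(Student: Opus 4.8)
The plan is to bound the probability of any single full play $\pi$ by the quantity $t(\pi)$, and then to show $t(\pi) \le \rho$ for every play, invoking the hypothesis $\width{\improvs} \ge 1/\rho$ only at the very end. First I would record two facts about the walk weights. Since \textsc{Partition} returns integer partitions of $m^\valids(h)$ and $m^\improvs(h)$, summing the step weights gives $\sum_{u \in \alphabet} t_u = \alpha m^\valids(h) + \beta(m^\improvs(h) - m^\valids(h)) = t(h)$, so at each of our turns the transition probability from $h$ to $hu$ is exactly $t_u/t(h) = t(hu)/t(h)$. Moreover, Algorithm~\ref{algorithm:impro} leaves $m^\valids$ and $m^\improvs$ unchanged on the adversary's turns, so $t$ is constant across an adversary move.

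Next I would telescope. Writing the prefixes of $\pi$ as $h^{(0)} = \lambda, h^{(1)}, \dots, h^{(n)} = \pi$, the definition of the play distribution factors $P_{\hat{\sigma},\tau}(\pi)$ as a product over steps: a factor $t(h^{(i+1)})/t(h^{(i)})$ at each of our turns and a factor $\tau(h^{(i)}, \pi_i) \le 1$ at each adversary turn. Because $t$ is unchanged on adversary moves, those steps contribute value-one ratios, so the product of our-turn ratios equals the product over all steps, which collapses to $t(h^{(n)})/t(h^{(0)}) = t(\pi)/t(\lambda)$. From the definition of $\beta$ we have $\beta(\width{\improvs} - \width{\valids}) = 1 - \alpha\width{\valids}$, hence $t(\lambda) = \alpha\width{\valids} + \beta(\width{\improvs} - \width{\valids}) = 1$; bounding the adversary factors by $1$ then yields $P_{\hat{\sigma},\tau}(\pi) \le t(\pi)$. (For plays with $t(\pi) = 0$ the claim is immediate, as such $\pi$ have probability $0$ by Lemma~\ref{lemma:sigma-hard}.)

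Finally I would evaluate $t(\pi)$ at a terminal history. The invariants of Lemma~\ref{lemma:partition}, maintained throughout the run exactly as in the proof of Lemma~\ref{lemma:sigma-hard}, give $0 \le m^\valids(\pi) \le m^\improvs(\pi) \le \widthrel{\improvs}{\pi}$ and $m^\valids(\pi) \le \widthrel{\valids}{\pi}$. Since $|\pi| = n$, Lemma~\ref{lemma:widthrel}(3) makes both relative widths equal to $0$ or $1$, so the only admissible pairs $(m^\valids(\pi), m^\improvs(\pi))$ are $(0,0)$, $(0,1)$, and $(1,1)$, giving $t(\pi) \in \{0, \beta, \alpha\}$. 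It therefore suffices to verify $\alpha \le \rho$ and $\beta \le \rho$.

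The bound $\alpha \le \rho$ is immediate from $\alpha = \min(\rho, 1/\width{\valids})$, so the crux, and the one place the hypothesis enters, is $\beta \le \rho$, i.e. $1 - \alpha\width{\valids} \le \rho(\width{\improvs} - \width{\valids})$: when $\alpha = \rho$ this reduces exactly to $\rho\width{\improvs} \ge 1$, and when $\alpha = 1/\width{\valids}$ its left-hand side is $0$. I expect the only fiddly part to be the degenerate cases $\width{\valids} = 0$ and $\width{\improvs} = \width{\valids}$, where $\alpha$ or $\beta$ is set to $0$ by convention; in each the corresponding $m$-values are pinned throughout the run, so $t(\pi)$ collapses to a single value that is trivially at most $\rho$.
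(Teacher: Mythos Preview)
Your proposal is correct and follows essentially the same approach as the paper. The paper phrases the argument as a reverse induction on prefix length, proving $P_{\hat{\sigma},\tau}(\pi \mid h) \le \max(\alpha,\beta)/t(h)$; you unroll that induction into an explicit telescoping product and the terminal evaluation $t(\pi)\in\{0,\alpha,\beta\}$, but the content (the ratio $t(hu)/t(h)$ at our turns, the invariance of $t$ across adversary turns, and the case split showing $\max(\alpha,\beta)\le\rho$) is identical.
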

\begin{proof}[sketch]
If the adversary is deterministic, the weights we use for our random walk yield a distribution where each play $\pi$ has probability either $\alpha$ or $\beta$ (depending on whether $m^\valids(\pi) = 1$ or $0$).
If the adversary assigns nonzero probability to multiple choices this only decreases the probability of individual plays.
Finally, since $\width{\improvs} \ge 1 / \rho$ we have $\alpha, \beta \le \rho$. \qed
\end{proof}

\begin{proof}[of Theorem \ref{theorem:feasibility}]
We use a similar argument to that of \cite{jacm}.
\begin{description}
\item[(1)$\Rightarrow$(2)] Suppose $\sigma$ is an improvising strategy, and fix any adversary $\tau$.
Then $\rho |\possible_{\sigma,\tau} \cap \improvs| = \sum_{\pi \in \possible_{\sigma,\tau} \cap \improvs} \rho \ge \sum_{\pi \in \improvs} P_{\sigma,\tau}(\pi) = P_{\sigma,\tau}(\improvs) = 1$, so $|\possible_{\sigma,\tau} \cap \improvs| \ge 1/\rho$.
Since $\tau$ is arbitrary, this implies $\width{\improvs} \ge 1/\rho$.
Since $\valids \subseteq \improvs$, we also have $\rho |\possible_{\sigma,\tau} \cap \valids| = \sum_{\pi \in \possible_{\sigma,\tau} \cap \valids} \rho \ge \sum_{\pi \in \valids} P_{\sigma,\tau}(\pi) = P_{\sigma,\tau}(\valids) \ge 1 - \epsilon$, so $|\possible_{\sigma,\tau} \cap \valids| \ge (1-\epsilon)/\rho$ and thus $\width{\valids} \ge (1-\epsilon)/\rho$.

\item[(2)$\Rightarrow$(3)] By Lemmas \ref{lemma:sigma-hard} and \ref{lemma:sigma-random}, $\hat{\sigma}$ is well-defined and satisfies the hard and randomness constraints.
By Lemma \ref{lemma:sigma-soft}, $P_{\hat{\sigma},\tau}(\valids) \ge \min(\rho \width{\valids}, 1) \ge 1 - \epsilon$, so $\hat{\sigma}$ also satisfies the soft constraint and thus is an improvising strategy.
Its transition probabilities are rational, so it can be implemented by an expected finite-time probabilistic algorithm, which is then an improviser for $\mathcal{C}$.

\item[(3)$\Rightarrow$(1)] Immediate. \qed
\end{description}
\end{proof}

\begin{proof}[of Corollary \ref{corollary:optimal-epsilon}]
The inequalities in the statement are equivalent to those of Theorem \ref{theorem:feasibility}(2).
By Lemma \ref{lemma:sigma-soft}, we have $P_{\hat{\sigma},\tau}(\valids) \ge \min(\rho \width{\valids}, 1)$.
So the error probability is at most $1 - \min(\rho \width{\valids}, 1) = \eopt$. \qed
\end{proof}

\section{A Generic Improviser} \label{section:generic}

We now use the construction of Sec.~\ref{section:existence} to develop a generic improvisation scheme usable with any class of specifications \textsc{Spec} supporting the following operations:
\begin{description}
\item[Intersection:] Given specs $\mathcal{X}$ and $\mathcal{Y}$, find $\mathcal{Z}$ such that $L(\mathcal{Z}) = L(\mathcal{X}) \cap L(\mathcal{Y})$. \smallskip
\item[Width Measurement:] Given a specification $\mathcal{X}$, a length $n \in \N$ in unary,  and a history $h \in \histories$, compute $\widthrel{X}{h}$ where $X = L(\mathcal{X}) \cap \alphabet^n$.
\end{description}

Efficient algorithms for these operations lead to efficient improvisation schemes:
\begin{theorem} \label{theorem:generic-scheme}
If the operations on \textsc{Spec} above take polynomial time (resp. space), then $\rcic{\textsc{Spec}}{\textsc{Spec}}$ has a polynomial-time (space) improvisation scheme.
\end{theorem}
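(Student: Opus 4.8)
The plan is to instantiate the construction of Section~\ref{section:existence}: the scheme decides realizability through Theorem~\ref{theorem:feasibility} and, when realizable, outputs an algorithm that runs the strategy $\hat{\sigma}$ of Algorithm~\ref{algorithm:impro}. Both the realizability test and the execution of $\hat{\sigma}$ reduce to the two promised operations on \textsc{Spec}, so efficient operations will yield an efficient scheme. Concretely, given $\mathcal{C} = (\hard, \soft, n, \epsilon, \rho)$, I would first use \textbf{Intersection} to build a specification $\mathcal{Z}$ with $L(\mathcal{Z}) = L(\hard) \cap L(\soft)$, so that $\improvs = L(\hard) \cap \alphabet^n$ and $\valids = L(\mathcal{Z}) \cap \alphabet^n$. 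Then \textbf{Width Measurement} at $h = \emptyword$ gives $\width{\improvs} = \widthrel{\improvs}{\emptyword}$ and $\width{\valids} = \widthrel{\valids}{\emptyword}$ by Lemma~\ref{lemma:widthrel}(2). The scheme checks the condition of Theorem~\ref{theorem:feasibility}(2), namely $\width{\improvs} \ge 1/\rho$ and $\width{\valids} \ge (1-\epsilon)/\rho$; if it fails it outputs $\bot$, which is correct because this condition is exactly realizability. Otherwise it computes the constants $\alpha$ and $\beta$ (lines~1--2 of Algorithm~\ref{algorithm:impro}) and emits an improviser $G$ carrying $\hard$, $\mathcal{Z}$, $n$, $\alpha$, $\beta$, $\width{\valids}$, and $\width{\improvs}$ as fixed data.

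The improviser $G$ must realize the \emph{memoryful} strategy $\hat{\sigma}$ through the stateless interface $\sigma(h, \cdot)$, so on input a history $h$ at one of our turns it first reconstructs the internal counters $m^\valids(h)$ and $m^\improvs(h)$. Using a fixed \emph{deterministic} \textsc{Partition} (the greedy canonical-order procedure of Lemma~\ref{lemma:partition}), $G$ replays the walk from $\emptyword$ along $h$: at each of our turns it calls \textsc{Partition} and keeps the component indexed by the next symbol of $h$. Because \textsc{Partition} is deterministic, $m^\valids(h)$ and $m^\improvs(h)$ are functions of $h$ alone, so the replay reproduces exactly the state $\hat{\sigma}$ would hold. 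Then $G$ calls \textsc{Partition} once more, forms the weights $t_u = \alpha m^\valids_u + \beta(m^\improvs_u - m^\valids_u)$, and samples $u$ with probability proportional to $t_u$. Correctness of $G$ as an improviser is immediate from the proof of Theorem~\ref{theorem:feasibility}: Lemmas~\ref{lemma:sigma-hard}--\ref{lemma:sigma-random} show $\hat{\sigma}$ meets the hard, soft, and randomness constraints under the checked inequalities, and its rational transition probabilities make it implementable in expected finite time.

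For efficiency, the scheme performs one Intersection, two Width Measurements at $\emptyword$, and polynomial-size rational arithmetic; since $n$ is given in unary the widths satisfy $\width{\improvs}, \width{\valids} \le |\alphabet|^n$ and hence have polynomially many bits, so comparing them against $1/\rho$ and computing $\alpha, \beta$ is cheap. Thus $S$ runs in polynomial time (resp.\ space) by hypothesis. Each invocation of $G$ on a history $h$ does at most $\lceil n/2 \rceil$ replay steps plus one final step, each a \textsc{Partition} call that by Lemma~\ref{lemma:partition} runs in polynomial time (space) given oracles for $\widthrel{\improvs}{\cdot}$ and $\widthrel{\valids}{\cdot}$, where each oracle call is a single Width Measurement on $\hard$ or on $\mathcal{Z}$; sampling from the resulting rational distribution over $\alphabet$ takes expected polynomial time. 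This establishes both scheme and improviser efficiency.

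The step I expect to require the most care is keeping the width oracle for $\valids$ efficient in the \PSPACE\ setting. That oracle runs Width Measurement on the intersection $\mathcal{Z}$, but a polynomial-\emph{space} Intersection algorithm need not produce output of polynomial \emph{size}, while the space bound for Width Measurement is stated in terms of $|\mathcal{Z}|$. I would resolve this by composing the two space-bounded procedures in the standard way: run Width Measurement ``on $\mathcal{Z}$'' while recomputing each needed symbol of $\mathcal{Z}$ on demand from $\hard$ and $\soft$ via the Intersection algorithm, never storing $\mathcal{Z}$ explicitly. This composition stays polynomial precisely when $|\mathcal{Z}|$ is polynomial in $|\mathcal{C}|$, which is automatic in the time case (the output is bounded by the runtime) and which I would record as the property needed in the space case; it indeed holds for the concrete specification classes treated in Sections~\ref{section:reachability}--\ref{section:temporal}.
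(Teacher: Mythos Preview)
Your proposal is correct and follows essentially the same approach as the paper: build $\mathcal{Z}$ by Intersection, compute $\width{\improvs}$ and $\width{\valids}$ by Width Measurement, test Theorem~\ref{theorem:feasibility}(2), and implement $\hat{\sigma}$ via $O(n)$ calls to \textsc{Partition} with Width Measurement supplying the oracles. You are in fact more careful than the paper on two points---the explicit replay to recover $m^\valids(h), m^\improvs(h)$ through the stateless interface, and the \PSPACE\ composition issue (the paper simply asserts that $\mathcal{Z}$ has polynomial size in both the time and space cases without further comment).
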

\begin{proof}
Given an instance $\mathcal{C} = (\hard, \soft, n, \epsilon, \rho)$ in $\rcic{\textsc{Spec}}{\textsc{Spec}}$, we first apply intersection to $\hard$ and $\soft$ to obtain $\mathcal{\valids} \in \textsc{Spec}$ such that $L(\mathcal{\valids}) \cap \alphabet^n = \valids$.
Since intersection takes polynomial time (space), $\mathcal{\valids}$ has size polynomial in $|\mathcal{C}|$.
Next we use width measurement to compute $\width{\improvs} = \widthrel{L(\hard) \cap \alphabet^n}{\emptyword}$ and $\width{\valids} = \widthrel{L(\mathcal{\valids}) \cap \alphabet^n}{\emptyword}$.
If these violate the inequalities in Theorem \ref{theorem:feasibility}, then $\mathcal{C}$ is not realizable and we return $\bot$.
Otherwise $\mathcal{C}$ is realizable, and $\hat{\sigma}$ above is an improvising strategy.
Furthermore, we can construct an expected finite-time probabilistic algorithm implementing $\hat{\sigma}$, using width measurement to instantiate the oracles needed by Lemma \ref{lemma:partition}.
Determining $m^\valids(h)$ and $m^\improvs(h)$ takes $O(n)$ invocations of \textsc{Partition}, each of which is poly-time relative to the width measurements.
These take time (space) polynomial in $|\mathcal{C}|$, since $\hard$ and $\mathcal{\valids}$ have size polynomial in $|\mathcal{C}|$.
As $m^\valids, m^\improvs \le |\alphabet|^n$, they have polynomial bitwidth and so the arithmetic required to compute $t_u$ for each $u \in \alphabet$ takes polynomial time.
Therefore the total expected runtime (space) of the improviser is polynomial. \qed
\end{proof}

Note that as a byproduct of testing the inequalities in Theorem \ref{theorem:feasibility}, our algorithm can compute the best possible error probability $\eopt$ given $\hard$, $\soft$, and $\rho$ (see Corollary \ref{corollary:optimal-epsilon}).
Alternatively, given $\epsilon$, we can compute the best possible $\rho$.

We will see below how to efficiently compute widths for DFAs, so Theorem~\ref{theorem:generic-scheme} yields a polynomial-time improvisation scheme.
If we allow polynomial-\emph{space} schemes, we can use a general technique for width measurement that only requires a very weak assumption on the specifications, namely testability in polynomial space:
\begin{restatable}{theorem}{theoremPspaceScheme} \label{theorem:pspace-scheme}
$\rcic{\textsc{PSA}}{\textsc{PSA}}$ has a polynomial-space improvisation scheme, where \textsc{PSA} is the class of polynomial-space decision algorithms.
\end{restatable}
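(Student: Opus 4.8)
The plan is to invoke the polynomial-space direction of Theorem~\ref{theorem:generic-scheme}: it suffices to show that both the \emph{intersection} and \emph{width measurement} operations can be carried out in polynomial space when specifications are polynomial-space decision algorithms. Intersection is immediate. Given two such algorithms $\mathcal{X}$ and $\mathcal{Y}$, the specification $\mathcal{Z}$ that on input $w$ runs both and accepts iff both accept decides $L(\mathcal{X}) \cap L(\mathcal{Y})$, uses only polynomial space (the two workspaces can share a single polynomial-space block), and has size polynomial in $|\mathcal{X}| + |\mathcal{Y}|$.

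The substance is width measurement. Given a polynomial-space decision algorithm $\mathcal{X}$, a length $n$ in unary, and a history $h$, I would compute $\widthrel{X}{h}$ for $X = L(\mathcal{X}) \cap \alphabet^n$ directly from the recursion in Lemma~\ref{lemma:widthrel}: at a leaf ($|h| = n$) the width is $\ind{h \in X}$, on our turn it is the sum of the $|\alphabet|$ children's widths, and on the adversary's turn their minimum. I would evaluate this recursion depth-first, keeping on the stack only the current suffix of appended symbols (at most $n$ of them) together with a partial sum-or-min accumulator at each of the $\le n$ active levels. At each leaf, deciding $\ind{h \in X}$ reduces to testing $h \in L(\mathcal{X})$ (since $|h| = n$ already forces $h \in \alphabet^n$), which by hypothesis runs in polynomial space, and this membership-testing workspace is reused across all leaves.

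For the space bound, note that because $n$ is represented in unary we have $n \le |\mathcal{C}|$, and every width is at most $|\alphabet|^{n}$, so the accumulators have bitwidth $O(n \log|\alphabet|)$, polynomial in $|\mathcal{C}|$. Hence the whole traversal uses one polynomial-space block for the membership tests plus $O(n)$ stack frames, each holding a symbol and a polynomial-bitwidth number, for polynomial total space. The main point to be careful about is that this recursion runs in \emph{exponential} time, since it visits up to $|\alphabet|^{n}$ leaves; this is harmless here because depth-first evaluation reuses the same space across the exponentially many branches, and only space, not time, is bounded in the space-version of Definition~\ref{defn:scheme}.

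Finally, feeding these two polynomial-space operations into Theorem~\ref{theorem:generic-scheme} yields a polynomial-space improvisation scheme for $\rcic{\textsc{PSA}}{\textsc{PSA}}$. The scheme computes $\width{\improvs}$ and $\width{\valids}$ by width measurement to test the realizability inequalities of Theorem~\ref{theorem:feasibility}, and the resulting improviser runs $\hat{\sigma}$ (Algorithm~\ref{algorithm:impro}), using width measurement to implement the oracles required by Lemma~\ref{lemma:partition}. All of this stays within space polynomial in $|\mathcal{C}|$, since the history has length at most $n$ and the intermediate counts $m^\valids, m^\improvs \le |\alphabet|^n$ again have polynomial bitwidth.
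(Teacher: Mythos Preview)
Your proposal is correct and follows essentially the same route as the paper: both apply Theorem~\ref{theorem:generic-scheme}, handle intersection by running the two decision procedures and conjoining their answers, and implement width measurement by a depth-first evaluation of the recursion in Lemma~\ref{lemma:widthrel}(3)--(5), observing that the recursion depth is at most $n$ and that the intermediate width values are bounded by $|\alphabet|^n$ and hence have polynomial bitwidth. Your write-up is in fact a bit more explicit than the paper's about the stack accounting and the exponential-time/polynomial-space distinction, but the argument is the same.
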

\begin{proof}[sketch]
We apply Theorem~\ref{theorem:generic-scheme}, computing widths recursively using Lemma~\ref{lemma:widthrel}, (3)--(5).
As in the $\PSPACE$ $\QBF$ algorithm, the current path in the recursive tree and required auxiliary storage need only polynomial space. \qed
\end{proof}

\section{Reachability Games and DFAs} \label{section:reachability}

Now we develop a polynomial-time improvisation scheme for RCI instances with $\DFA$ specifications.
This also provides a scheme for reachability/safety games, whose winning conditions can be straightforwardly encoded as $\DFA$s.

Suppose $D$ is a $\DFA$ with states $V$, accepting states $T$, and transition function $\delta : V \times \alphabet \rightarrow V$.
Our scheme is based on the fact that $\widthrel{L(D)}{h}$ depends only on the state of $D$ reached on input $h$, allowing these widths to be computed by dynamic programming.
Specifically, for all $v \in V$ and $i \in \{0, \dots, n\}$ we define:
\[
\pweight{v}{i} =
\begin{cases}
\ind{v \in T} & i = n \\
\min_{u \in \alphabet} \; \pweight{\delta(v,u)}{i+1} & i < n \land i \text{ odd} \\
\sum_{u \in \alphabet} \; \pweight{\delta(v,u)}{i+1}  & \text{otherwise}.
\end{cases}
\]

\begin{runningexample}
Figure~\ref{figure:running-counts} shows the values $C(v,i)$ in rows from $i=n$ downward.
For example, $i=2$ is our turn, so $C(1,2) = C(0,3) + C(1, 3) + C(2, 3) = 1 + 1 + 0 = 2$, while $i = 3$ is the adversary's turn, so $C(-3, 3) = \min \{ C(-3, 4) \} = \min \{ 0 \} = 0$.
Note that the values in Fig.~\ref{figure:running-counts} agree with the widths $\widthrel{\improvs}{h}$ shown in Fig.~\ref{figure:running-impro}.
\end{runningexample}

\begin{figure}[tb]
\centering
\begin{tikzpicture}[initial text=, transform shape, scale=0.6, node distance=0.75cm]

 \node[state, accepting, initial below] (s0) {\Large $+0$}; 
 \node[state, accepting, right= of s0] (s1) {\Large $+1$}; 
 \node[state, accepting, right= of s1] (s2) {\Large $+2$}; 
 \node[state, right= of s2] (s3) {\Large $+3$}; 
 \node[state, accepting, left= of s0] (s-1) {\Large $-1$}; 
 \node[state, accepting, left= of s-1] (s-2) {\Large $-2$}; 
 \node[state, left= of s-2] (s-3) {\Large $-3$}; 
 
 \node[below= of s0] (w40) {\Large $1$};
 \node[below= of w40, yshift=0.7cm] (w30) {\Large $1$};
 \node[below= of w30, yshift=0.7cm] (w20) {\Large $3$};
 \node[below= of w20, yshift=0.7cm] (w10) {\Large $2$};
 \node[below= of w10, yshift=0.7cm] (w00) {\Large $4$};
 \node[below= of s-1] (w4-1) {\Large $1$};
 \node[below= of w4-1, yshift=0.7cm] (w3-1) {\Large $1$};
 \node[below= of w3-1, yshift=0.7cm] (w2-1) {\Large $2$};
 \node[below= of w2-1, yshift=0.7cm] (w1-1) {\Large $1$};
 \node[below= of s1] (w41) {\Large $1$};
 \node[below= of w41, yshift=0.7cm] (w31) {\Large $1$};
 \node[below= of w31, yshift=0.7cm] (w21) {\Large $2$};
 \node[below= of w21, yshift=0.7cm] (w11) {\Large $1$};
 \node[below= of s-2] (w4-2) {\Large $1$};
 \node[below= of w4-2, yshift=0.7cm] (w3-2) {\Large $0$};
 \node[below= of w3-2, yshift=0.7cm] (w2-2) {\Large $1$};
 \node[below= of s2] (w42) {\Large $1$};
 \node[below= of w42, yshift=0.7cm] (w32) {\Large $0$};
 \node[below= of w32, yshift=0.7cm] (w22) {\Large $1$};
 \node[below= of s-3] (w4-3) {\Large $0$};
 \node[below= of w4-3, yshift=0.7cm] (w3-2) {\Large $0$};
 \node[below= of s3] (w43) {\Large $0$};
 \node[below= of w43, yshift=0.7cm] (w32) {\Large $0$};
 
 \node[left= of w4-3] (i4) {\Large $i=4:$};
 \node[below= of i4, yshift=0.7cm] (i3) {\Large $i=3:$};
 \node[below= of i3, yshift=0.7cm] (i2) {\Large $i=2:$};
 \node[below= of i2, yshift=0.7cm] (i1) {\Large $i=1:$};
 \node[below= of i1, yshift=0.7cm] (i0) {\Large $i=0:$};
 
 \path[->]
 (s-2) edge[bend left] node[above] {\Large $+$} (s-1)
 (s-1) edge[bend left] node[above] {\Large $+$} (s0)
 (s0) edge[bend left] node[above] {\Large $+$} (s1)
 (s1) edge[bend left] node[above] {\Large $+$}(s2)
 (s2) edge[bend left] node[above] {\Large $+$} (s3)
 (s2) edge[bend left] node[below] {\Large $-$} (s1)
 (s1) edge[bend left] node[below] {\Large $-$} (s0)
 (s0) edge[bend left] node[below] {\Large $-$} (s-1)
 (s-1) edge[bend left] node[below] {\Large $-$}(s-2)
 (s-2) edge[bend left] node[below] {\Large $-$} (s-3)
 (s-3) edge[loop left] node[left] {\Large $\alphabet$} (s-3)
 (s-2) edge[loop above] node[above] {\Large =} (s-2)
 (s-1) edge[loop above] node[above] {\Large =} (s-1)
 (s0) edge[loop above] node[above] {\Large =} (s0)
 (s1) edge[loop above] node[above] {\Large =} (s1)
 (s2) edge[loop above] node[above] {\Large =} (s2)
 (s3) edge[loop right] node[right] {\Large $\alphabet$} (s3);
 
 \end{tikzpicture}
\caption{The hard specification DFA $\hard$ in our running example, showing how $\widthrel{\improvs}{h}$ is computed.}
\label{figure:running-counts}
\end{figure}

\begin{lemma} \label{lemma:width-p}
For any history $h \in \histories$, writing $X = L(D) \cap \alphabet^n$ we have $\widthrel{X}{h} = \pweight{D(h)}{|h|}$, where $D(h)$ is the state reached by running $D$ on $h$.
\end{lemma}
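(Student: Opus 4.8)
The plan is to prove the identity by downward induction on $|h|$ (equivalently, induction on the remaining length $n - |h|$), showing that at each step the recurrence defining $\pweight{\cdot}{\cdot}$ mirrors the recursive characterization of width given by Lemma~\ref{lemma:widthrel}, parts (3)--(5). The only structural facts I need are that $D$ is deterministic, so that $D(hu) = \delta(D(h), u)$ for every symbol $u \in \alphabet$, and that the turn after a history $h$ is determined purely by the parity of $|h|$: it is our turn when $|h|$ is even and the adversary's turn when $|h|$ is odd. These two observations let the state-indexed recurrence for $C$ stand in for the history-indexed recurrence for width.

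For the base case I take $|h| = n$. By Lemma~\ref{lemma:widthrel}(3), $\widthrel{X}{h} = \ind{h \in X}$. Since $X = L(D) \cap \alphabet^n$ and $|h| = n$, we have $h \in X$ exactly when $h \in L(D)$, i.e.\ when the state $D(h)$ reached by $D$ is accepting, $D(h) \in T$. Thus $\widthrel{X}{h} = \ind{D(h) \in T} = \pweight{D(h)}{n}$, matching the first case of the definition of $C$.

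For the inductive step, suppose $|h| = i < n$ and that the claim holds for all histories of length $i+1$. If $i$ is even it is our turn after $h$, so Lemma~\ref{lemma:widthrel}(4) gives $\widthrel{X}{h} = \sum_{u \in \alphabet} \widthrel{X}{hu}$; applying the induction hypothesis to each $hu$ and using $D(hu) = \delta(D(h), u)$ turns this into $\sum_{u \in \alphabet} \pweight{\delta(D(h), u)}{i+1}$, which is exactly $\pweight{D(h)}{i}$ (the ``otherwise'' branch, taken when $i$ is even). If instead $i$ is odd it is the adversary's turn, so Lemma~\ref{lemma:widthrel}(5) gives $\widthrel{X}{h} = \min_{u \in \alphabet} \widthrel{X}{hu}$, and the same substitution yields $\min_{u \in \alphabet} \pweight{\delta(D(h), u)}{i+1} = \pweight{D(h)}{i}$ (the min branch, taken when $i < n$ is odd). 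In both cases the parity of $i$ lines up the turn (sum versus min) with the correct branch of the $C$ recurrence, closing the induction.

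There is no serious obstacle here; the real content of the lemma is precisely that width depends on a history only through the DFA state it reaches, and the proof is then a routine induction. The one point requiring care is the bookkeeping of the parity-to-turn-to-branch correspondence, together with confirming that the determinism of $D$ justifies replacing $D(hu)$ by $\delta(D(h),u)$ inside the sum and the min. Once these are set up, the two recurrences are syntactically identical and the induction goes through immediately.
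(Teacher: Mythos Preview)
Your proposal is correct and follows essentially the same approach as the paper: both argue by downward induction on $|h|$, use Lemma~\ref{lemma:widthrel}(3) for the base case, and in the inductive step invoke Lemma~\ref{lemma:widthrel}(4)--(5) together with $D(hu) = \delta(D(h),u)$ to match the sum/min recurrence defining $C$. Your write-up is slightly more explicit about the parity-to-turn correspondence, but the argument is the same.
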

\begin{proof}
We prove this by induction on $i = |h|$ in decreasing order.
In the base case $i = n$, we have $\widthrel{X}{h} = \ind{h \in X} = \ind{D(h) \in T} = \pweight{D(h)}{n}$.
Now take any history $h \in \histories$ with $|h| = i < n$.
By hypothesis, for any $u \in \alphabet$ we have $\widthrel{X}{hu} = \pweight{D(hu)}{i+1}$.
If it is our turn after $h$, then $\widthrel{X}{h} = \sum_{u \in \alphabet} \widthrel{X}{hu} = \sum_{u \in \alphabet} \pweight{D(hu)}{i+1} = \pweight{D(h)}{i}$ as desired.
If instead it is the adversary's turn after $h$, then $\widthrel{X}{h} = \min_{u \in \alphabet} \widthrel{X}{hu} = \min_{u \in \alphabet} \pweight{D(hu)}{i+1} = \pweight{D(h)}{i}$ again as desired.
So by induction the hypothesis holds for any $i$. \qed
\end{proof}

\begin{theorem}
$\rcic{\DFA}{\DFA}$ has a polynomial-time improvisation scheme.
\end{theorem}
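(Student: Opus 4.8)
The plan is to invoke Theorem~\ref{theorem:generic-scheme}: since $\DFA$s are closed under intersection and admit width measurement, it suffices to check that both operations run in polynomial time. Thus I would reduce the whole statement to two efficiency claims about the class $\DFA$, and the nontrivial analytic content (the recursion for widths) is already supplied by Lemma~\ref{lemma:width-p}.

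For intersection, I would use the standard product construction. Given $\DFA$s $\hard$ and $\soft$ with state sets $V_\hard$ and $V_\soft$, form the product automaton on states $V_\hard \times V_\soft$ whose transition function runs both componentwise and whose accepting set is $T_\hard \times T_\soft$; this recognizes $L(\hard) \cap L(\soft)$ and has size $O(|V_\hard|\,|V_\soft|\,|\alphabet|)$, hence polynomial in the input. This realizes the \textbf{Intersection} operation in polynomial time.

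For width measurement on a single $\DFA$ $D = (V, T, \delta)$, I would compute the table $\pweight{v}{i}$ for all $v \in V$ and $i \in \{0,\dots,n\}$ by dynamic programming, filling in rows from $i = n$ downward according to the defining recurrence: the base row is $\pweight{v}{n} = \ind{v \in T}$, and each earlier entry is obtained as a $\min$ over $u \in \alphabet$ of $\pweight{\delta(v,u)}{i+1}$ when $i$ is odd (the adversary's turn) and as the corresponding $\sum$ otherwise. There are $(n+1)|V|$ entries, each computed from $|\alphabet|$ already-known entries, so the table is built in polynomially many arithmetic operations. To answer a query $\widthrel{X}{h}$ with $X = L(D)\cap\alphabet^n$, I would run $D$ on $h$ to obtain the state $D(h)$ in $O(|h|)$ steps and return $\pweight{D(h)}{|h|}$; correctness is exactly Lemma~\ref{lemma:width-p}. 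This realizes the \textbf{Width Measurement} operation.

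The one point requiring care — the main obstacle, such as it is — is the \emph{bit-size} of the width values rather than their computation. Since $\pweight{v}{i} \le |\alphabet|^{n-i} \le |\alphabet|^n$, these integers can be exponentially large in value, so one might worry that the additions (and the comparisons inside each $\min$) are costly. This is resolved by the convention that $n$ is represented in unary: then $\pweight{v}{i}$ has at most $n\log_2|\alphabet|$ bits, which is polynomial in $|\mathcal{C}|$, so each arithmetic operation on the table takes polynomial time and the whole measurement stays polynomial. With both operations shown polynomial-time, Theorem~\ref{theorem:generic-scheme} yields a polynomial-time improvisation scheme for $\rcic{\DFA}{\DFA}$. \qed
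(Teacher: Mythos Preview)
Your proposal is correct and follows essentially the same approach as the paper: invoke Theorem~\ref{theorem:generic-scheme}, handle intersection via the product construction, and realize width measurement by filling the dynamic-programming table $\pweight{v}{i}$ and appealing to Lemma~\ref{lemma:width-p}. Your additional remarks on the product size and the bit-length of the width values (polynomial because $n$ is in unary) are correct elaborations of points the paper leaves implicit or handles inside the proof of Theorem~\ref{theorem:generic-scheme}.
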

\begin{proof}
We implement Theorem \ref{theorem:generic-scheme}.
Intersection can be done with the standard product construction.
For width measurement we compute the quantities $\pweight{v}{i}$ by dynamic programming (from $i = n$ down to $i = 0$) and apply Lemma \ref{lemma:width-p}. \qed
\end{proof}

\section{Temporal Logics and Other Specifications} \label{section:temporal}

In this section we analyze the complexity of reactive control improvisation for specifications in the popular temporal logics $\LTL$ and $\LDL$.
We also look at $\NFA$ and $\CFG$ specifications, previously studied for non-reactive CI \cite{jacm}, to see how their complexities change in the reactive case.

For $\LTL$ specifications, reactive control improvisation is $\PSPACE$-hard because this is already true of ordinary reactive synthesis in a finite window (we suspect this has been observed but could not find a proof in the literature).
\begin{restatable}{theorem}{theoremLTLHardness} \label{theorem:ltl-hardness}
Finite-window reactive synthesis for $\LTL$ is $\PSPACE$-hard.
\end{restatable}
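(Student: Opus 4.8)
The plan is to reduce from the $\PSPACE$-complete problem $\QBF$, exploiting the fact that the alternation of players in the synthesis game mirrors the alternation of quantifiers in a quantified Boolean formula. Given a formula $\Phi = Q_1 x_1 \cdots Q_k x_k\, \psi(x_1,\dots,x_k)$ with $\psi$ quantifier-free, I will build an $\LTL$ synthesis instance over the alphabet $\alphabet = 2^{\{p\}}$ (i.e. two symbols, one in which the single atomic proposition $p$ holds and one in which it does not) with window length $n = k$. The intended correspondence is that the symbol $\pi_{j-1}$ chosen at position $j-1$ encodes the truth value of $x_j$ (namely $x_j$ is true iff $p$ holds at position $j-1$), our moves at even positions play the role of existential quantifiers, and the adversary's moves at odd positions play the role of universal quantifiers.

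To make this line up, I first preprocess $\Phi$ so that its quantifiers strictly alternate and begin with $\exists$: inserting a dummy quantifier of the opposite type between any two quantifiers of the same type preserves the truth value (since $\forall z$ and $\exists z$ are vacuous when $z$ does not occur in the matrix), and likewise prepending a dummy $\exists$ if $\Phi$ begins with $\forall$; this at most doubles the number of variables, keeping $n$ polynomial. Equivalently, one may reduce directly from the alternating, $\exists$-leading form of $\QBF$, which is already $\PSPACE$-complete. After this step $x_j$ is existentially quantified exactly when $j-1$ is even, matching whose turn it is. I then translate the matrix $\psi$ into an $\LTL$ formula $\varphi$ interpreted at position $0$ of the finite word by replacing each literal $x_j$ with $\mathsf{X}^{\,j-1} p$ and each $\neg x_j$ with $\neg \mathsf{X}^{\,j-1} p$, where $\mathsf{X}$ is the next-time operator. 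Since $j - 1 \le n-1$, these next-operators never reach past the end of the length-$n$ word, so $\varphi$ evaluated on $\pi \in \alphabet^n$ holds exactly when the induced assignment satisfies $\psi$. Each literal contributes a subformula of size $O(k)$, so $|\varphi| = O(k\,|\psi|)$ and the whole construction takes polynomial time.

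It then remains to verify the correspondence: a strategy $\sigma$ forces every play into $L(\varphi) \cap \alphabet^n$ if and only if $\Phi$ is true. Unwinding the definitions, such a $\sigma$ is precisely a Skolem function choosing each existential variable from the previously revealed values, and the requirement that for every adversary $\tau$ the resulting play satisfy $\varphi$ is exactly the statement that for all universal choices the matrix is satisfied; an induction on the quantifier prefix makes this equivalence precise. Hence the synthesis instance is realizable iff $\Phi$ is true, giving $\PSPACE$-hardness. I expect the main obstacle to be bookkeeping rather than conceptual: reconciling the strict turn alternation of the synthesis game (our turn on even positions, the adversary's on odd) with the arbitrary quantifier pattern of a general $\QBF$, which the dummy-variable padding resolves, together with checking that the finite-word $\LTL$ semantics for $\mathsf{X}$ behaves as intended at exactly the positions we use.
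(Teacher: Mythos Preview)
Your proposal is correct and follows essentially the same approach as the paper: reduce from alternating $\exists$-leading $\QBF$, encode the $j$th variable as $\mathsf{X}^{j-1} p$ over the single-proposition alphabet, and argue that a winning strategy in the length-$n$ synthesis game corresponds exactly to the truth of the $\QBF$. The only cosmetic difference is that the paper assumes the matrix is in CNF and builds the $\LTL$ formula clause by clause, whereas you translate the matrix literal by literal without imposing a normal form; both yield a polynomial-size formula.
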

\begin{proof}[sketch]
Given a $\QBF$ $\phi = \exists x \forall y \dots \chi$, we can view assignments to its variables as traces over a single proposition.
In polynomial time we can construct an LTL formula $\psi$ whose models are the satisfying assignments of $\chi$.
Then there is a winning strategy to generate a play satisfying $\psi$ iff $\phi$ is true. \qed
\end{proof}

\begin{corollary}
$\rcic{\LTL}{\alphabet^*}$ and $\rcic{\alphabet^*}{\LTL}$ are $\PSPACE$-hard.
\end{corollary}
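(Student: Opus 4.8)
The plan is to reduce from finite-window reactive synthesis for $\LTL$, which is $\PSPACE$-hard by Theorem~\ref{theorem:ltl-hardness}. The bridge between that problem and RCI realizability is the observation that, for the extreme parameter choice $\rho = 1$ and $\epsilon = 0$, the realizability criterion of Theorem~\ref{theorem:feasibility} degenerates into a single test of the form $\width{W} \ge 1$, and this is exactly the condition for the existence of a winning strategy in the underlying synthesis game. Given a $\QBF$ $\phi$, I would invoke the construction of Theorem~\ref{theorem:ltl-hardness} to obtain in polynomial time an $\LTL$ formula $\psi$ and a window length $n$ such that there is a winning strategy generating a play satisfying $\psi$ if and only if $\phi$ is true, and then package $\psi$ into an RCI instance of each of the two required forms.

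The key step is to verify that, for a winning set $W = L(\psi) \cap \alphabet^n$, we have $\width{W} \ge 1$ if and only if the synthesis game with target $W$ admits a winning strategy. I would prove this by downward induction on $|h|$, comparing the recursion for $\widthrel{W}{h}$ from Lemma~\ref{lemma:widthrel}(3)--(5) with the usual backward-induction value $\mathrm{win}(h)$ that equals $1$ exactly when we can force a play in $W$ from $h$. At a leaf both values equal $\ind{h \in W}$; at the adversary's turn both take a minimum (resp.\ conjunction) over successors; and at our turn $\widthrel{W}{h} = \sum_{u \in \alphabet} \widthrel{W}{hu}$ is a sum of nonnegative integers, so it is $\ge 1$ iff some summand is, matching the disjunction $\mathrm{win}(h) = \max_{u \in \alphabet} \mathrm{win}(hu)$. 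Hence $\widthrel{W}{h} \ge 1 \iff \mathrm{win}(h) = 1$ at every $h$, and in particular at $h = \emptyword$, so $\width{W} \ge 1$ characterizes realizability of the synthesis game.

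With this in hand, both reductions use $\rho = 1$ and $\epsilon = 0$. For $\rcic{\LTL}{\alphabet^*}$ I set $\hard = \psi$ and let $\soft$ be the trivial specification with $L(\soft) = \alphabet^*$, so that $\valids = \improvs = W$. Theorem~\ref{theorem:feasibility} then makes the instance realizable iff $\width{\improvs} \ge 1$ and $\width{\valids} \ge 1$, both of which are $\width{W} \ge 1$, i.e.\ iff $\phi$ is true. For $\rcic{\alphabet^*}{\LTL}$ I instead take $\hard$ trivial (so $\improvs = \alphabet^n$) and $\soft = \psi$ (so $\valids = W$); here $\width{\improvs} = \width{\alphabet^n} \ge 1$ holds automatically, so the only binding condition is $\width{\valids} \ge (1-\epsilon)/\rho = 1$, again equivalent to $\width{W} \ge 1$ and hence to $\phi$ being true. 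Each instance is built in polynomial time from $\phi$, completing both reductions and establishing $\PSPACE$-hardness.

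The main thing to get right is that the feasibility inequalities collapse in precisely the intended way for each placement of $\psi$: with $\epsilon = 0$ the soft constraint becomes as strong as the hard one, and with $\rho = 1$ the randomness constraint is vacuous, so the only surviving content is a winning test. The one genuinely quantitative point is the width/winnability equivalence of the second paragraph; everything else is bookkeeping about the trivial specification $\alphabet^*$ and the polynomial sizes of the constructed instances. I would also note that $n$ is represented in unary in $|\mathcal{C}|$, which is consistent with the polynomial-size window produced by Theorem~\ref{theorem:ltl-hardness}, so the reductions are genuinely polynomial.
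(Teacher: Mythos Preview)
Your proposal is correct and is exactly the argument implicit in the paper: the corollary is stated without proof because it follows immediately from Theorem~\ref{theorem:ltl-hardness} by taking $\rho=1$, $\epsilon=0$, and placing the $\LTL$ formula as the hard (resp.\ soft) specification with the other trivial, so that Theorem~\ref{theorem:feasibility} reduces to $\width{W}\ge 1$, i.e., the existence of a winning strategy. Your inductive verification that $\widthrel{W}{h}\ge 1$ coincides with the standard winnability predicate is the right way to make this precise, and your bookkeeping on $\improvs$, $\valids$, and instance size is sound.
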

This is perhaps disappointing, but is an inevitable consequence of $\LTL$ subsuming Boolean formulas.
On the other hand, our general polynomial-space scheme applies to $\LTL$ and its much more expressive generalization $\LDL$:

\begin{theorem}
$\rcic{\LDL}{\LDL}$ has a polynomial-space improvisation scheme.
\end{theorem}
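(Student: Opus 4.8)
The plan is to obtain this as an immediate corollary of Theorem~\ref{theorem:pspace-scheme} by exhibiting $\LDL$ as a subclass of $\textsc{PSA}$, the class of polynomial-space decision algorithms. Concretely, I would show that every $\LDL$ formula $\varphi$ can be regarded as a polynomial-space decision procedure for its language: given a candidate word $w$ with $|w| = n$, the procedure decides $w \models \varphi$ using space polynomial in $|\varphi| + n$. Once this is established, any instance $\mathcal{C} \in \rcic{\LDL}{\LDL}$ is also an instance of $\rcic{\textsc{PSA}}{\textsc{PSA}}$ of polynomially related size (here it is important that $n$ is encoded in unary, so a length-$n$ word fits within the instance size), and Theorem~\ref{theorem:pspace-scheme} directly yields a polynomial-space improvisation scheme.

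The technical heart is therefore the membership test for $\LDL$ over a fixed finite word, which I would carry out by local model checking rather than by the usual exponential translation to an automaton. Recall that $\LDL$ augments the Boolean connectives with path modalities $\langle r\rangle \psi$ and $[r]\psi$, where $r$ is a regular expression built from propositional guards and tests $\psi?$ that may themselves embed $\LDL$ formulas. For a fixed word $w = w_0 \cdots w_{n-1}$ I would compute, by dynamic programming, the truth value of each subformula at each of the $n+1$ positions, evaluating subformulas from the innermost outward so that all tests appearing inside a regular expression already have their truth values available. The only nontrivial case is a modality $\langle r\rangle \psi$ at position $i$: deciding it amounts to asking whether some position $j \ge i$ is related to $i$ by a segment of $w$ matching $r$ with $w, j \models \psi$, and this is a standard regular-expression matching (reachability) problem solvable in time polynomial in $|r|$ and $n$ once the relevant truth values are known. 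Hence the whole test runs in polynomial time, and a fortiori in polynomial space.

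I expect the main obstacle to be careful bookkeeping rather than any conceptual depth: one must verify that the recursive evaluation of the path modalities, including nested tests inside the regular expressions, stays polynomial, and that the space charged to the membership procedure is genuinely polynomial in the instance size. Given that, the reduction to Theorem~\ref{theorem:pspace-scheme} is immediate. I would additionally remark that for $\LDL$ the intersection operation required by the generic scheme of Theorem~\ref{theorem:generic-scheme} is entirely trivial, since $L(\hard) \cap L(\soft) = L(\hard \land \soft)$ and $\hard \land \soft$ is again a polynomially-sized $\LDL$ formula; so the same argument goes through even if one prefers to invoke the generic scheme directly rather than passing through $\textsc{PSA}$.
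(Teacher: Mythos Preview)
Your proposal is correct and follows essentially the same approach as the paper: the paper's proof simply invokes Theorem~\ref{theorem:pspace-scheme} after noting that satisfaction of an $\LDL$ formula by a finite word can be checked in polynomial time via dynamic programming on subformulas combined with a regular-expression parser, which is precisely the local model-checking argument you outline.
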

\begin{proof}
This follows from Theorem \ref{theorem:pspace-scheme}, since satisfaction of an $\LDL$ formula by a finite word can be checked in polynomial time (e.g. by combining dynamic programming on subformulas with a regular expression parser). \qed
\end{proof}
Thus for temporal logics polynomial-time algorithms are unlikely, but adding randomization to reactive synthesis does not increase its complexity.

The same is true for $\NFA$ and $\CFG$ specifications, where it is again $\PSPACE$-hard to find even a single winning strategy:
\begin{restatable}{theorem}{theoremNFA}
Finite-window reactive synthesis for $\NFA$s is $\PSPACE$-hard.
\end{restatable}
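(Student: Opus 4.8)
The plan is to reduce from $\QBF$, paralleling the proof of Theorem~\ref{theorem:ltl-hardness} but working around the fact that $\NFA$s are far less succinct than $\LTL$ formulas. Given a quantified Boolean formula with variables $x_1,\dots,x_m$ and strictly alternating quantifiers starting with $\exists$, I would identify each variable with one move of the synthesis game over the alphabet $\alphabet=\{0,1\}$, taking $n=m$. Since we move at even positions and the adversary at odd positions, the existentially quantified $x_1,x_3,\dots$ become our choices and the universally quantified $x_2,x_4,\dots$ become the adversary's; a full play $\pi\in\alphabet^n$ is then exactly an assignment to all the variables. If I can build an $\NFA$ $N$ with $L(N)\cap\alphabet^n$ equal to the set of satisfying assignments of the quantifier-free matrix, then a winning strategy forcing $\pi\in L(N)$ against every adversary exists iff $\exists x_1\forall x_2\cdots$ holds, i.e.\ iff the $\QBF$ is true. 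As each player's move is simply the truth value of its variable, ``existence of a winning strategy'' in the game coincides exactly with the alternating $\exists/\forall$ semantics of the formula.

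The main obstacle is keeping $N$ of polynomial size. Unlike $\LTL$, which concisely describes the models of an arbitrary Boolean formula, an $\NFA$ has no compact way to enforce a \emph{conjunction} of constraints: nondeterminism gives us disjunction (existential guessing) essentially for free, but not conjunction. I would resolve this by first putting the matrix in disjunctive normal form $\psi=D_1\vee\cdots\vee D_k$. An $\NFA$ can then recognize the satisfying assignments of $\psi$ in size $O(km)$: nondeterministically guess which disjunct $D_j$ is satisfied, and while scanning the $m$ symbols verify on the fly that every literal of $D_j$ agrees with the assignment, the states being indexed by the guessed $j$ together with the current position. This branch accepts iff the assignment satisfies $D_j$, so $N$ accepts iff some disjunct, and hence $\psi$, is satisfied.

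The remaining point is that $\QBF$ with a \emph{DNF} matrix, strictly alternating and with a \emph{leading} $\exists$ --- the exact shape my reduction consumes --- is still $\PSPACE$-hard. I would argue this by complementation: negating such a formula yields a strictly alternating formula with a \emph{leading} $\forall$ and a \emph{CNF} matrix, since the negation of a DNF is a CNF. This leading-$\forall$/CNF form is itself $\PSPACE$-complete, because a standard leading-$\exists$/CNF $\QBF$ reduces to it by prepending a dummy universally quantified variable not occurring in the matrix; and since $\PSPACE$ is closed under complement, the leading-$\exists$/DNF form is $\PSPACE$-hard as well. Padding with dummy variables enforces strict alternation together with the correct leading quantifier and parity, while keeping the alphabet binary and $n$ polynomial. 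Assembling these pieces --- the DNF normalization, the guess-and-verify $\NFA$, and the turn/quantifier alignment --- gives a polynomial-time reduction, so finite-window reactive synthesis for $\NFA$s is $\PSPACE$-hard.
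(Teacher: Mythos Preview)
Your proposal is correct and takes essentially the same approach as the paper: reduce from $\QBF$ with strictly alternating quantifiers, leading $\exists$, and a \emph{DNF} matrix, and build a polynomial-size $\NFA$ that nondeterministically guesses a disjunct and verifies its literals while scanning the assignment (the paper cites an external reference for this construction rather than spelling it out). Your complementation argument that the leading-$\exists$/DNF form of $\QBF$ is still $\PSPACE$-hard is more explicit than the paper, which simply assumes this normal form without further comment.
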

\begin{proof}[sketch]
Reduce from $\QBF$ as in Theorem~\ref{theorem:ltl-hardness}, constructing an NFA accepting the satisfying assignments of $\chi$ (as done in \cite{sharpNFA}). \qed
\end{proof}

\begin{corollary}
$\rcic{\NFA}{\alphabet^*}$ and $\rcic{\alphabet^*}{\NFA}$ are $\PSPACE$-hard.
\end{corollary}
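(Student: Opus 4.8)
The plan is to reduce from the preceding theorem, that finite-window reactive synthesis for $\NFA$s is $\PSPACE$-hard, using the width characterization of realizability in Theorem~\ref{theorem:feasibility}. The bridge between the two problems is the observation that, for any target set $X \subseteq \alphabet^n$, a (deterministic) winning strategy for the reachability game with goal $X$ exists if and only if $\width{X} \ge 1$. This follows directly from the recursion in Lemma~\ref{lemma:widthrel}(3)--(5): the width is computed by backward induction as $\ind{h \in X}$ at the leaves, a sum over our moves (so $\widthrel{X}{h} \ge 1$ iff some child has positive width, i.e.\ we have a winning move), and a minimum over the adversary's moves (so $\widthrel{X}{h} \ge 1$ iff every child has positive width). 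This is exactly the backward-induction computation of the winning region, with positive width marking winning positions.

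With this in hand, both reductions set $\rho = 1$ and $\epsilon = 0$ and simply place the $\NFA$ produced by the reactive-synthesis reduction in the appropriate slot. For $\rcic{\NFA}{\alphabet^*}$, I take $\hard$ to be that $\NFA$ and $\soft = \alphabet^*$, so that $\valids = \improvs = L(\hard) \cap \alphabet^n$. By Theorem~\ref{theorem:feasibility} the instance is realizable iff $\width{\improvs} \ge 1/\rho = 1$ and $\width{\valids} \ge (1-\epsilon)/\rho = 1$, and since $\valids = \improvs$ both conditions collapse to $\width{\improvs} \ge 1$, i.e.\ to the existence of a winning strategy for the $\NFA$. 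For $\rcic{\alphabet^*}{\NFA}$, I instead take $\hard = \alphabet^*$ and let $\soft$ be the $\NFA$, so $\improvs = \alphabet^n$ and $\valids = L(\soft) \cap \alphabet^n$. Now $\width{\improvs} = \width{\alphabet^n} \ge 1$ holds automatically (every play lies in $\improvs$), while the randomness constraint is vacuous because $\rho = 1$; thus realizability reduces to the single remaining condition $\width{\valids} \ge 1$, again the winning-strategy condition for the $\NFA$. In both cases the transformation is computable in polynomial time and preserves the window length $n$ (kept in unary), so $\PSPACE$-hardness transfers.

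The content of this corollary is almost entirely carried by the preceding theorem; the only thing to get right is the choice of parameters, and the one mild subtlety I expect is the $\rcic{\alphabet^*}{\NFA}$ direction. There the hard constraint is trivially satisfiable and $\improvs$ has large width, so I must check that setting $\rho = 1$ makes the randomness requirement vacuous and that, with $\epsilon = 0$, realizability pins down exactly $\width{\valids} \ge 1$ rather than some weaker inequality; otherwise realizability might hold for reasons unrelated to the underlying game and the reduction would fail. Verifying the width-to-winning-strategy equivalence rigorously, rather than appealing to the informal reading of width as counting ``ways to win'', is the only genuine proof obligation, and it is immediate from Lemma~\ref{lemma:widthrel}.
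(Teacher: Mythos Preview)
Your proposal is correct and matches the paper's intended argument: the corollary is stated without proof there, as an immediate consequence of the preceding theorem via exactly the reduction you spell out (put the $\NFA$ in the hard or soft slot, set $\rho=1$, $\epsilon=0$, and invoke Theorem~\ref{theorem:feasibility} so that realizability collapses to $\width{\cdot}\ge 1$, i.e.\ existence of a winning strategy). Your treatment of the $\rcic{\alphabet^*}{\NFA}$ direction, checking that $\width{\improvs}=\width{\alphabet^n}\ge 1$ is automatic so only $\width{\valids}\ge 1$ matters, is the right way to handle the mild asymmetry between the two cases.
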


\begin{theorem}
$\rcic{\CFG}{\CFG}$ has a polynomial-space improvisation scheme.
\end{theorem}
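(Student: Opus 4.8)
The plan is to obtain this exactly as we did for $\LDL$, namely by invoking the polynomial-space scheme for arbitrary poly-space-testable specifications (Theorem~\ref{theorem:pspace-scheme}). The only fact I need is that a $\CFG$ can be regarded as a polynomial-space decision algorithm, i.e.\ that membership of a finite word in a context-free language is decidable in polynomial space. This is classical: after converting the grammar to Chomsky normal form (a polynomial-time preprocessing step), the CYK algorithm decides $w \in L(G)$ in time $O(|w|^3 |G|)$, hence certainly in polynomial space. So every $\CFG$ specification lies in the class $\textsc{PSA}$, and the whole problem reduces to an instance of $\rcic{\textsc{PSA}}{\textsc{PSA}}$.

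The step I expect to be the main obstacle — and the reason the proof must route through $\textsc{PSA}$ rather than the generic scheme of Theorem~\ref{theorem:generic-scheme} with $\textsc{Spec} = \CFG$ directly — is the intersection operation. Context-free languages are not closed under intersection, so in general there is no $\CFG$ for $L(\hard) \cap L(\soft)$, and the required intersection would leave the class. The detour through $\textsc{PSA}$ sidesteps this entirely: for membership-testable specifications, intersection is free, since to test $w \in L(\hard) \cap L(\soft) \cap \alphabet^n$ one simply runs both CYK membership tests, checks $|w| = n$, and conjoins the outcomes, all in polynomial time and hence polynomial space. Thus the intersection operation needed by Theorem~\ref{theorem:generic-scheme} stays within $\textsc{PSA}$ even though it escapes $\CFG$.

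With both $\improvs = L(\hard) \cap \alphabet^n$ and $\valids = L(\hard) \cap L(\soft) \cap \alphabet^n$ now presented as polynomial-space decision algorithms, I would simply apply Theorem~\ref{theorem:pspace-scheme}. Its width-measurement routine computes $\widthrel{\improvs}{h}$ and $\widthrel{\valids}{h}$ by the recursion of Lemma~\ref{lemma:widthrel}(3)--(5), invoking only the CYK membership tests at the length-$n$ leaves and storing nothing more than the current root-to-leaf path together with $O(n)$ partial counts; this is polynomial space because $n$ is given in unary. Feeding these width computations into Algorithm~\ref{algorithm:impro} via Theorem~\ref{theorem:generic-scheme} yields an improviser whose construction and whose execution both run in polynomial space, which establishes the claim. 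The routine verifications I would leave implicit are the polynomiality of the CNF conversion and of the arithmetic on the (polynomially-many-bit) width values $m^\valids, m^\improvs \le |\alphabet|^n$.
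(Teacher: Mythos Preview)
Your proposal is correct and follows exactly the paper's approach: invoke Theorem~\ref{theorem:pspace-scheme} using the fact that $\CFG$ membership is decidable in polynomial time (hence polynomial space), so $\CFG \subseteq \textsc{PSA}$. Your additional remarks about why one cannot stay within $\CFG$ under intersection are accurate but not needed for the argument, since Theorem~\ref{theorem:pspace-scheme} handles intersection internally.
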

\begin{proof}
By Theorem \ref{theorem:pspace-scheme}, since $\CFG$ parsing can be done in polynomial time. \qed
\end{proof}

Since $\NFA$s can be converted to $\CFG$s in polynomial time, this completes the picture for the kinds of CI specifications previously studied.
In non-reactive CI, $\DFA$ specifications admit a polynomial-time improvisation scheme while for $\NFA$s/$\CFG$s the CI problem is $\sharpP$-equivalent \cite{jacm}.
Adding reactivity, $\DFA$ specifications remain polynomial-time while $\NFA$s and $\CFG$s move up to $\PSPACE$.

\section{Conclusion} \label{section:conclusion}

In this paper we introduced \emph{reactive control improvisation} as a framework for modeling reactive synthesis problems where random but controlled behavior is desired.
RCI provides a natural way to tune the amount of randomness while ensuring that safety or other constraints remain satisfied.
We showed that RCI problems can be efficiently solved in many cases occurring in practice, giving a polynomial-time improvisation scheme for reachability/safety or DFA specifications.
We also showed that RCI problems with specifications in $\LTL$ or $\LDL$, popularly used in planning, have the $\PSPACE$-hardness typical of bounded games, and gave a matching polynomial-space improvisation scheme.
This scheme generalizes to any specification checkable in polynomial space, including NFAs, CFGs, and many more expressive formalisms.
Table \ref{table:complexities} summarizes these results.

\begin{table}[tb]
\caption{Complexity of the reactive control improvisation problem for various types of hard and soft specifications $\hard$, $\soft$.
Here $\PSPACE$ indicates that checking realizability is $\PSPACE$-hard, and that there is a polynomial-space improvisation scheme.\label{table:complexities}}
\begin{center}
{
\setlength{\tabcolsep}{5pt}
\renewcommand{\arraystretch}{1.2}
\begin{tabular}{c||c|c|c|c|c|c|}
$\hard \backslash \soft$ & \textbf{RSG} & \textbf{DFA} & \textbf{NFA} & \textbf{CFG} & \textbf{LTL} & \textbf{LDL} \\
\hline
\hline
\textbf{RSG} & \multicolumn{2}{c|}{\multirow{2}{*}{poly-time}} & \multicolumn{4}{c|}{} \\
\cline{1-1}
\textbf{DFA} & \multicolumn{2}{c|}{} & \multicolumn{4}{c|}{} \\
\cline{1-3}
\textbf{NFA} & \multicolumn{6}{c|}{} \\
\cline{1-1}
\textbf{CFG} & \multicolumn{2}{c}{} & \multicolumn{4}{c|}{\multirow{2}{*}{\PSPACE}} \\
\cline{1-1}
\textbf{LTL} & \multicolumn{6}{r|}{}  \\
\cline{1-1}
\textbf{LDL} & \multicolumn{6}{r|}{} \\
\hline
\end{tabular}
}
\end{center}
\vspace{-0.4cm}
\end{table}

These results show that, at a high level, finding a maximally-randomized strategy using RCI is no harder than finding any winning strategy at all: for specifications yielding games solvable in polynomial time (respectively, space), we gave polynomial-time (space) improvisation schemes.
We therefore hope that in applications where ordinary reactive synthesis has proved tractable, our notion of randomized reactive synthesis will also.
In particular, we expect our DFA scheme to be quite practical, and are experimenting with applications in robotic planning.
On the other hand, our scheme for temporal logic specifications seems unlikely to be useful in practice without further refinement.
An interesting direction for future work would be to see if modern solvers for quantified Boolean formulas (QBF) could be leveraged or extended to solve these RCI problems.
This could be useful even for DFA specifications, as conjoining many simple properties can lead to exponentially-large automata.
Symbolic methods based on constraint solvers would avoid such blow-up.

We are also interested in extending the RCI problem definition to unbounded or infinite words, as typically used in reactive synthesis.
These extensions, as well as that to continuous signals, would be useful in robotic planning, cyber-physical system testing, and other applications.
However, it is unclear how best to adapt our randomness constraint to settings where the improviser can generate infinitely many words.
In such settings the improviser could assign arbitrarily small or even zero probability to every word, rendering the randomness constraint trivial.
Even in the bounded case, RCI extensions with more complex randomness constraints than a simple upper bound on individual word probabilities would be worthy of study.
One possibility would be to more directly control diversity and/or unpredictability by requiring the distribution of the improviser's output to be close to uniform after transformation by a given function.\\

\noindent \textbf{Acknowledgements.}
The authors would like to thank Markus Rabe, Moshe Vardi, and several anonymous reviewers for helpful discussions and comments, and Ankush Desai and Tommaso Dreossi for assistance with the drone simulations.
This work is supported in part by the National Science Foundation Graduate Research Fellowship Program under Grant No.~DGE-1106400, by NSF grants CCF-1139138 and CNS-1646208, by DARPA under agreement number FA8750-16-C0043, and by TerraSwarm, one of six centers of STARnet, a Semiconductor Research Corporation program sponsored by MARCO and DARPA.

\bibliographystyle{splncs03}
\bibliography{main}

\begin{thebibliography}{10}
\providecommand{\url}[1]{\texttt{#1}}
\providecommand{\urlprefix}{URL }

\bibitem{arora-barak}
Arora, S., Barak, B.: Computational Complexity: A Modern Approach. Cambridge
  University Press, New York (2009)

\bibitem{sgl}
Baier, C., Br{\'a}zdil, T., Gr{\"o}{\ss}er, M., Ku{\v{c}}era, A.: Stochastic
  game logic. Acta informatica pp. 1--22 (2012)

\bibitem{BLOEM20073}
Bloem, R., Galler, S., Jobstmann, B., Piterman, N., Pnueli, A., Weiglhofer, M.:
  Specify, compile, run: Hardware from psl. In: Proceedings of the 6th
  International Workshop on Compiler Optimization meets Compiler Verification
  (COCV 2007). Electronic Notes in Theoretical Computer Science, vol. 190, pp.
  3--16. Elsevier (2007),
  \url{http://www.sciencedirect.com/science/article/pii/S157106610700583X}

\bibitem{multiobjective}
Chen, T., Forejt, V., Kwiatkowska, M., Simaitis, A., Wiltsche, C.: On
  stochastic games with multiple objectives. In: Chatterjee, K., Sgall, J.
  (eds.) Mathematical Foundations of Computer Science 2013. pp. 266--277.
  Springer Berlin Heidelberg, Berlin, Heidelberg (2013)

\bibitem{ldl}
De~Giacomo, G., Vardi, M.Y.: Linear temporal logic and linear dynamic logic on
  finite traces. In: Proceedings of the 23rd International Joint Conference on
  Artificial Intelligence. pp. 854--860. IJCAI '13, AAAI Press (2013),
  \url{http://dl.acm.org/citation.cfm?id=2540128.2540252}

\bibitem{donze-tr13}
Donze, A., Libkind, S., Seshia, S.A., Wessel, D.: Control improvisation with
  application to music. Tech. Rep. UCB/EECS-2013-183, EECS Department,
  University of California, Berkeley (Nov 2013),
  \url{http://www2.eecs.berkeley.edu/Pubs/TechRpts/2013/EECS-2013-183.html}

\bibitem{finkbeiner-survey}
Finkbeiner, B.: Synthesis of reactive systems. In: Esparza, J., Grumberg, O.,
  Sickert, S. (eds.) Dependable Software Systems Engineering. {NATO} Science
  for Peace and Security Series, {D:} Information and Communication Security,
  vol.~45, pp. 72--98. {IOS} Press, Amsterdam, Netherlands (2016)

\bibitem{jacm}
Fremont, D.J., Donz\'e, A., Seshia, S.A.: Control improvisation. arXiv preprint
  (2017)

\bibitem{fsttcs}
Fremont, D.J., Donz{\'e}, A., Seshia, S.A., Wessel, D.: {Control
  Improvisation}. In: 35th IARCS Annual Conference on Foundations of Software
  Technology and Theoretical Computer Science (FSTTCS). pp. 463--474 (2015)

\bibitem{videos}
Fremont, D.J., Seshia, S.A.: Reactive control improvisation website (2018),
  \url{https://math.berkeley.edu/~dfremont/reactive.html}

\bibitem{pctl}
Hansson, H., Jonsson, B.: A logic for reasoning about time and reliability.
  Formal aspects of computing  6(5),  512--535 (1994)

\bibitem{sharpNFA}
Kannan, S., Sweedyk, Z., Mahaney, S.: Counting and random generation of strings
  in regular languages. In: 6th Annual ACM-SIAM Symposium on Discrete
  Algorithms. pp. 551--557. SIAM (1995)

\bibitem{gazebo}
Koenig, N., Howard, A.: Design and use paradigms for {Gazebo}, an open-source
  multi-robot simulator. In: Intelligent Robots and Systems (IROS), 2004
  IEEE/RSJ International Conference on. vol.~3, pp. 2149--2154. IEEE (2004)

\bibitem{kress2009temporal}
Kress-Gazit, H., Fainekos, G.E., Pappas, G.J.: Temporal-logic-based reactive
  mission and motion planning. IEEE Transactions on Robotics  25(6),
  1370--1381 (2009)

\bibitem{Mazala2002}
Mazala, R.: Infinite games. In: Gr{\"a}del, E., Thomas, W., Wilke, T. (eds.)
  Automata, Logics, and Infinite Games, chap.~2, pp. 23--38. Springer, Berlin,
  Heidelberg (2002), \url{http://dx.doi.org/10.1007/3-540-36387-4_2}

\bibitem{px4}
Meier, L., Honegger, D., Pollefeys, M.: {PX4}: A node-based multithreaded open
  source robotics framework for deeply embedded platforms. In: Robotics and
  Automation (ICRA), 2015 IEEE International Conference on. pp. 6235--6240.
  IEEE (2015)

\bibitem{ltl}
Pnueli, A.: The temporal logic of programs. In: 18th Annual Symposium on
  Foundations of Computer Science (FOCS 1977). pp. 46--57. IEEE (1977)

\bibitem{Pnueli1989}
Pnueli, A., Rosner, R.: On the synthesis of a reactive module. In: Proceedings
  of the 16th ACM SIGPLAN-SIGACT Symposium on Principles of Programming
  Languages. pp. 179--190. POPL '89, ACM, New York, NY, USA (1989),
  \url{http://doi.acm.org/10.1145/75277.75293}

\bibitem{reach-avoid}
Tomlin, C., Lygeros, J., Sastry, S.: Computing controllers for nonlinear hybrid
  systems. In: 2nd International Workshop on Hybrid Systems: Computation and
  Control (HSCC). pp. 238--255. Springer, Berlin, Heidelberg (1999)

\bibitem{wilf}
Wilf, H.S.: A unified setting for sequencing, ranking, and selection algorithms
  for combinatorial objects. Advances in Mathematics  24(2),  281--291 (1977)

\end{thebibliography}

\newpage
\appendix

\section{Patrolling Drone Experiments} \label{section:drone-experiments}

As described above, we ran experiments with two adversary strategies: one that moves towards the patrolling drone whenever possible, and one that moves in a fixed loop.
We ran the improviser four times against each adversary, obtaining the trajectories in Figures~\ref{figure:drones-chase} and \ref{figure:drones-loop}.
Animations showing the trajectories over time (and so illustrating that collisions do not in fact occur) are available online~\cite{videos}.
This site also provides our implementation of the DFA improvisation scheme, and implementations of the specifications and adversaries used in our drone experiments (as well as an adversary controlled by the user, so that one can type in actions and see how the improviser responds).

\begin{figure}[htb]
\centering
\includegraphics[width=\textwidth]{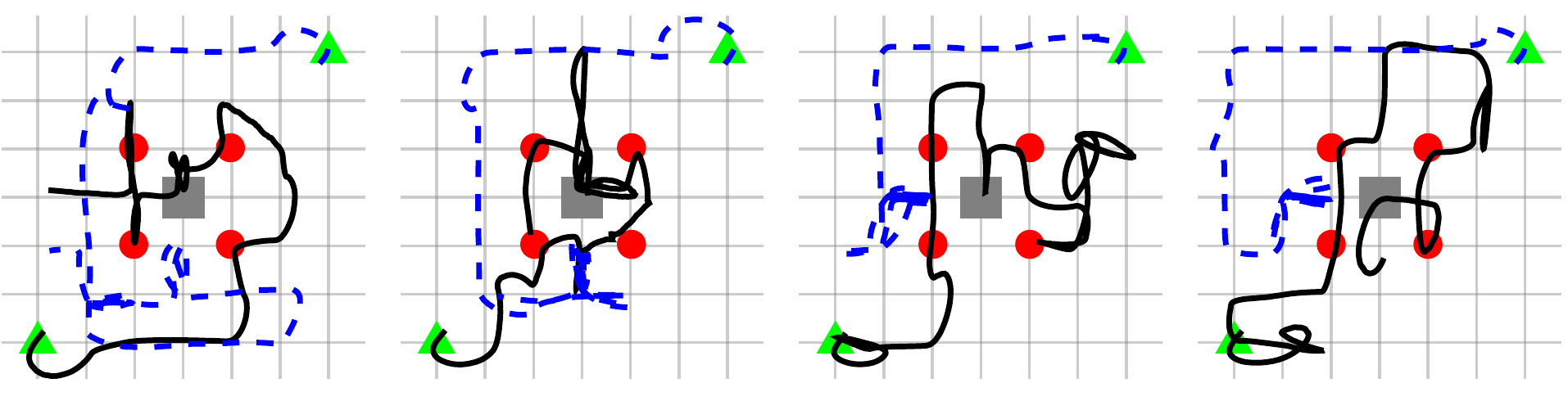}
\caption{Improvised trajectories against an adversary which moves toward the patroller when possible.}
\label{figure:drones-chase}
\end{figure}

\begin{figure}[htb]
\centering
\includegraphics[width=\textwidth]{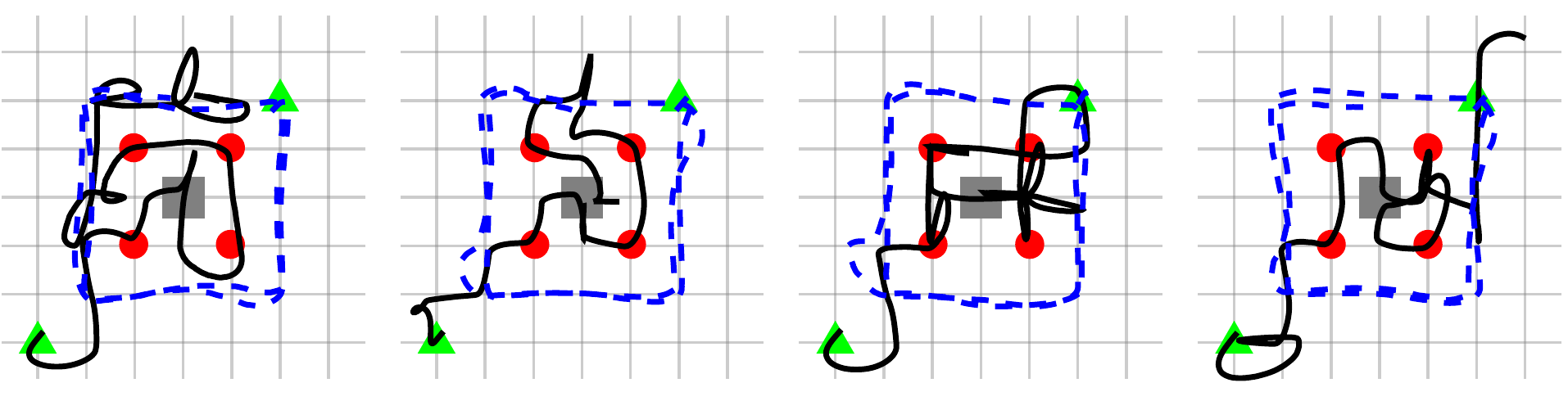}
\caption{Improvised trajectories against an adversary which moves in a loop.}
\label{figure:drones-loop}
\end{figure}

\section{Detailed Proofs} \label{section:detailed-proofs}

We use without comment several basic facts about $P_{\sigma,\tau}(\rho | h)$, all immediate from its definition:
\begin{lemma}
For any history $h \in \histories$, word $\rho \in \alphabet^{n - |h|}$, and strategies $\sigma$, $\tau$:
\begin{enumerate}[(1)]
 \item if $|h| = 0$, then $P_{\sigma,\tau}(\rho | h) = P_{\sigma,\tau}(\rho)$;
 \item if $|h| = n$, then $P_{\sigma,\tau}(\rho | h) = 1$;
 \item if $|h| < n$, then $\rho = u \rho'$ for some $u \in \alphabet$, and:
 \begin{enumerate}
 \item if it is our turn after $h$, then $P_{\sigma,\tau}(\rho | h) = \sigma(h, u) \cdot P_{\sigma,\tau}(\rho' | h u)$;
 \item if it is the adversary's turn after $h$, then $P_{\sigma,\tau}(\rho | h) = \tau(h, u) \cdot P_{\sigma,\tau}(\rho' | h u)$.
 \end{enumerate}
\end{enumerate}
\end{lemma}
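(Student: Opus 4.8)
The plan is to read everything directly off the sequential sampling process used to define $P_{\sigma,\tau}(\rho \mid h)$, unwinding it one draw at a time. Recall that $P_{\sigma,\tau}(\rho \mid h)$ is by definition the probability of the event $\pi_k \cdots \pi_{n-1} = \rho$ when we fix $\pi_i = h_i$ for $i < k = |h|$ and then sample $\pi_k, \dots, \pi_{n-1}$ exactly as in the play-generation process, each $\pi_i$ being drawn from $\sigma(\pi_0 \cdots \pi_{i-1}, \cdot)$ when $i$ is even and from $\tau(\pi_0 \cdots \pi_{i-1}, \cdot)$ when $i$ is odd. Each part of the lemma is then a matter of substituting the appropriate value of $k$ and comparing the resulting events.

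For (1), taking $k = 0$ means $h = \emptyword$ and there is no fixed prefix, so the sampling process is literally the one defining $P_{\sigma,\tau}(\rho)$; the two events coincide and the probabilities are equal. For (2), taking $k = n$ forces $\rho \in \alphabet^0 = \{\emptyword\}$, so there are no symbols left to sample and the event $\pi_k \cdots \pi_{n-1} = \rho$ is the trivial event that the empty word equals the empty word, which has probability $1$.

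For (3) I would condition on the first freshly sampled symbol $\pi_k$. Since $k = |h| < n$, we may write $\rho = u\rho'$ with $u \in \alphabet$ and $\rho' \in \alphabet^{n-k-1}$, and the event $\pi_k \cdots \pi_{n-1} = u\rho'$ requires in particular $\pi_k = u$. The symbol $\pi_k$ is drawn from $\sigma(h, \cdot)$ when $k$ is even, i.e.\ when it is our turn after $h$, and from $\tau(h, \cdot)$ when $k$ is odd, so $\Pr[\pi_k = u]$ equals $\sigma(h, u)$ or $\tau(h, u)$ respectively. Conditioned on $\pi_k = u$, the remaining draws $\pi_{k+1}, \dots, \pi_{n-1}$ follow precisely the sampling process for the extended history $hu$, since the draw at each step depends on the past only through the current prefix, which is now $hu$ followed by the already-chosen suffix. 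Hence the conditional probability that $\pi_{k+1} \cdots \pi_{n-1} = \rho'$ is exactly $P_{\sigma,\tau}(\rho' \mid hu)$, and multiplying the two factors yields the two claimed identities.

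The only point requiring any care, and in that sense the ``main obstacle'' though it is a minor one, is the parity bookkeeping: I must confirm that the draw of $\pi_k$ uses $\sigma$ exactly when $|h| = k$ is even, matching the definition of whose turn it is after $h$, and that the conditional law of the later draws is genuinely the generation process started from $hu$ rather than one that secretly retains information about $h$. Both facts are immediate from the memoryless, prefix-indexed form of the sampling rule, so no induction or estimation is needed and the whole lemma reduces to a direct unpacking of definitions.
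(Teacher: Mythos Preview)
Your proposal is correct and matches the paper's treatment: the paper states these facts as ``all immediate from its definition'' and gives no further proof, and your argument is precisely the direct unpacking of the definition of $P_{\sigma,\tau}(\rho \mid h)$ that the paper leaves implicit.
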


\lemmaWidthrel*
\begin{proof}
\mbox{}
\begin{enumerate}
\item By definition, $\widthrel{X}{h} = \max_\sigma \min_\tau | \{ \pi \st h \pi \in X \land P_{\sigma,\tau}(\pi | h) > 0 \} |$, so $\widthrel{X}{h} \ge 0$ trivially.
Since $X \subseteq \alphabet^n$, if $h \pi \in X$ then $\pi \in \alphabet^{n - |h|}$.
So $\widthrel{X}{h} \le |\alphabet|^{n - |h|}$.
\item $\widthrel{X}{\emptyword} = \max_\sigma \min_\tau | \{ \pi \st \pi \in X \land P_{\sigma,\tau}(\pi) > 0 \} | = \max_\sigma \min_\tau | X \cap \possible_{\sigma,\tau} | = \width{X}$.
\item If $h \in X$, then the only word of the form $h \pi$ in $X$ is $h$, with $\pi = \emptyword$ (and $P_{\sigma,\tau}(\lambda | h) = 1 > 0$). Otherwise there is no word of the form $h \pi$ in $X$.
\item Since it is our turn, and in particular the game has not ended, every play $h \pi$ that can be generated given history $h$ has the form $h u \pi'$ for some $u \in \alphabet$.
So for any strategies $\sigma$ and $\tau$ we have $| \{ \pi \st h \pi \in X \land P_{\sigma,\tau}(\pi | h) > 0 \} | = \sum_{u \in \alphabet} | \{ \pi' \st h u \pi' \in X \land P_{\sigma,\tau}(u \pi' | h) > 0 \} | = \sum_{u \in \alphabet} | \{ \pi' \st h u \pi' \in X \land \sigma(h, u) \cdot P_{\sigma,\tau}(\pi' | h u) > 0 \} |$.

\quad For each $u \in \alphabet$, let $\sigma_u$ be a strategy witnessing $\widthrel{X}{hu}$.
Let $\tilde{\sigma}$ be a strategy which on history $h$ picks $u \in \alphabet$ uniformly at random, and on histories prefixed by $hu$ follows $\sigma_u$ (otherwise picking arbitrarily).
Then
\begin{align*}
\widthrel{X}{h} &\ge \min_\tau | \{ \pi \st h \pi \in X \land P_{\tilde{\sigma},\tau}(\pi | h) > 0 \} | \\
&= \min_\tau \sum_{u \in \alphabet} | \{ \pi' \st h u \pi' \in X \land \tilde{\sigma}(h,u) \cdot P_{\tilde{\sigma},\tau}(\pi' | h u) > 0 \} | \\
&= \min_\tau \sum_{u \in \alphabet} | \{ \pi' \st h u \pi' \in X \land P_{\tilde{\sigma},\tau}(\pi' | h u) > 0 \} | \\
&\ge \sum_{u \in \alphabet} \min_\tau | \{ \pi' \st h u \pi' \in X \land P_{\tilde{\sigma},\tau}(\pi' | h u) > 0 \} | \\
&= \sum_{u \in \alphabet} \min_\tau | \{ \pi' \st h u \pi' \in X \land P_{\sigma_u,\tau}(\pi' | h u) > 0 \} | \\
&= \sum_{u \in \alphabet} \widthrel{X}{hu} .
\end{align*}

For the other direction, let $\tilde{\sigma}$ be a strategy witnessing $\widthrel{X}{h}$, and for each $u \in \alphabet$ let $\tau_u = \arg\min_\tau | \{ \pi' \st h u \pi' \in X \land P_{\tilde{\sigma},\tau}(\pi' | h u) > 0 \} |$.
Let $\tilde{\tau}$ be a strategy which on histories prefixed by $h u$ follows $\tau_u$ (otherwise picking arbitrarily).
Then
\begin{align*}
\widthrel{X}{h} &\le | \{ \pi \st h \pi \in X \land P_{\tilde{\sigma},\tilde{\tau}}(\pi | h) > 0 \} | \\
&= \sum_{u \in \alphabet} | \{ \pi' \st h u \pi' \in X \land \tilde{\sigma}(h, u) \cdot P_{\tilde{\sigma},\tilde{\tau}}(\pi' | h u) > 0 \} | \\
&\le \sum_{u \in \alphabet} | \{ \pi' \st h u \pi' \in X \land P_{\tilde{\sigma},\tilde{\tau}}(\pi' | h u) > 0 \} | \\
&= \sum_{u \in \alphabet} | \{ \pi' \st h u \pi' \in X \land P_{\tilde{\sigma},\tau_u}(\pi' | h u) > 0 \} | \\
&\le \sum_{u \in \alphabet} \widthrel{X}{hu}.
\end{align*}
\item Since it is the adversary's turn (and in particular the game has not ended), for any strategies $\sigma$ and $\tau$ we have $| \{ \pi \st h \pi \in X \land P_{\sigma,\tau}(\pi | h) > 0 \} | = \sum_{u \in \alphabet} | \{ \pi' \st h u \pi' \in X \land P_{\sigma,\tau}(u \pi' | h) > 0 \} | = \sum_{u \in \alphabet} | \{ \pi' \st h u \pi' \in X \land \tau(h, u) \cdot P_{\sigma,\tau}(\pi' | h u) > 0 \} |$.

\quad For each $u \in \alphabet$, let $\sigma_u$ be a strategy witnessing $\widthrel{X}{hu}$.
Let $\tilde{\sigma}$ be a strategy which on histories prefixed by $hu$ follows $\sigma_u$ (otherwise picking arbitrarily).
Then
\begin{align*}
\widthrel{X}{h} &\ge \min_\tau | \{ \pi \st h \pi \in X \land P_{\tilde{\sigma},\tau}(\pi | h) > 0 \} | \\
&= \min_\tau \sum_{u \in \alphabet} | \{ \pi' \st h u \pi' \in X \land \tau(h,u) \cdot P_{\tilde{\sigma},\tau}(\pi' | h u) > 0 \} | \\
&\ge \min_\tau \min_{u \in \alphabet} | \{ \pi' \st h u \pi' \in X \land P_{\tilde{\sigma},\tau}(\pi' | h u) > 0 \} | \\
&= \min_{u \in \alphabet} \min_\tau | \{ \pi' \st h u \pi' \in X \land P_{\sigma_u,\tau}(\pi' | h u) > 0 \} | \\
&= \min_{u \in \alphabet} \widthrel{X}{hu}.
\end{align*}

For the other direction, let $\tilde{\sigma}$ be a strategy witnessing $\widthrel{X}{h}$, and define $\tilde{u} = \arg\min_{u \in \alphabet} \widthrel{X}{hu}$, and $\tau_{\tilde{u}} = \arg\min_\tau | \{ \pi' \st h \tilde{u} \pi' \in X \land P_{\tilde{\sigma},\tau}(\pi' | h \tilde{u}) > 0 \} |$.
Let $\tilde{\tau}$ be a strategy which on history $h$ picks $\tilde{u}$ and on histories prefixed with $h \tilde{u}$ follows $\tau_{\tilde{u}}$ (otherwise picking arbitrarily).
Then
\begin{align*}
\widthrel{X}{h} &\le | \{ \pi \st h \pi \in X \land P_{\tilde{\sigma},\tilde{\tau}}(\pi | h) > 0 \} | \\
&= \sum_{u \in \alphabet} | \{ \pi' \st h u \pi' \in X \land \tau(h, u) \cdot P_{\tilde{\sigma},\tilde{\tau}}(\pi' | h u) > 0 \} | \\
&= | \{ \pi' \st h \tilde{u} \pi' \in X \land P_{\tilde{\sigma},\tilde{\tau}}(\pi' | h \tilde{u}) > 0 \} | \\
&= | \{ \pi' \st h \tilde{u} \pi' \in X \land P_{\tilde{\sigma},\tau_{\tilde{u}}}(\pi' | h \tilde{u}) > 0 \} | \\
&\le \widthrel{X}{h \tilde{u}} \\
&= \min_{u \in \alphabet} \widthrel{X}{h u}.
\end{align*}
\end{enumerate}
\qed
\end{proof}

\lemmaPartition*
\begin{proof}
Index the elements of $\alphabet$ via some canonical order as $(u_j)_{0 \le j < \ell}$ for some $\ell \ge 1$.
We first construct the partition $\sum_{j < \ell} m^\valids_j$ of $m^\valids$.
Find the greatest $k \le \ell$ such that $\sum_{j < k} \widthrel{\valids}{h u_j} \le m^\valids$.
This is well-defined, since if $k = 0$ then the sum is zero and the condition is satisfied.
If $\sum_{j < k} \widthrel{\valids}{h u_j} = m^\valids$ we put $m^\valids_j = \widthrel{\valids}{h u_j}$ for $j < k$ and $m^\valids_j = 0$ for $j \ge k$.
If instead $\sum_{j < k} \widthrel{\valids}{h u_j} < m^\valids$ we must have $k < \ell$, since $\sum_{j < \ell} \widthrel{\valids}{h u_j} = \sum_{u \in \alphabet} \widthrel{\valids}{h u} = \widthrel{\valids}{h} \ge m^\valids$.
Then by the definition of $k$ we have $\sum_{j \le k} \widthrel{\valids}{h u_j} > m^\valids$, so $\widthrel{\valids}{h u_k} > m^\valids - \sum_{j < k} \widthrel{\valids}{h u_j}$.
Therefore we put $m^\valids_j = \widthrel{\valids}{h u_j}$ for $j < k$, $m^\valids_k = m^\valids - \sum_{j < k} \widthrel{\valids}{h u_j}$, and $m^\valids_j = 0$ for $j > k$.

Now we construct the partition $\sum_{j < \ell} m^\improvs_j$ of $m^\improvs$.
We do this by partitioning the difference $m^\improvs - m^\valids$ along the same lines as above, then adding back $m^\valids_j$ to ensure $m^\improvs_j \ge m^\valids_j$.
Let $d_j = \widthrel{\improvs}{h u_j} - m^\valids_j$.
Since $m^\valids_j \le \widthrel{\valids}{h u_j} \le \widthrel{\improvs}{h u_j}$, we have $d_j \ge 0$.
Find the greatest $k \le \ell$ such that $\sum_{j < k} d_j \le m^\improvs - m^\valids$.
This is well-defined since if $k = 0$ the sum is zero, and $m^\improvs - m^\valids \ge 0$ by assumption.
If $\sum_{j < k} d_j = m^\improvs - m^\valids$ we put $m^\improvs_j = m^\valids_j + d_j$ for $j < k$ and $m^\improvs_j = m^\valids_j$ for $j \ge k$.
This clearly satisfies $m^\valids_j \le m^\improvs_j \le \widthrel{\improvs}{h u_j}$, and $\sum_{j < \ell} m^\improvs_j = \sum_{j < k} (m^\valids_j + d_j) + \sum_{j \ge k} m^\valids_j = \sum_{j < \ell} m^\valids_j + \sum_{j < k} d_j = m^\valids + (m^\improvs - m^\valids) = m^\improvs$ as desired.
If instead $\sum_{j < k} d_j < m^\improvs - m^\valids$ we must have $k < \ell$, since $\sum_{j < \ell} d_j = \sum_{j < \ell} (\widthrel{\improvs}{h u_j} - m^\valids_j) = \sum_{u \in \alphabet} \widthrel{\improvs}{h u} - \sum_{j < \ell} m^\valids_j = \widthrel{\improvs}{h} - m^\valids \ge m^\improvs - m^\valids$.
Then by the definition of $k$ we have $\sum_{j \le k} d_j > m^\improvs - m^\valids$, so $d_k > m^\improvs - m^\valids - \sum_{j < k} d_j$.
Therefore we put $m^\improvs_j = m^\valids_j + d_j$ for $j < k$, $m^\improvs_k = m^\valids_k + (m^\improvs - m^\valids - \sum_{j < k} d_j)$, and $m^\improvs_j = m^\valids_j$ for $j > k$.
Again this satisfies $m^\valids_j \le m^\improvs_j \le \widthrel{\improvs}{h u_j}$, and $\sum_{j < \ell} m^\improvs_j = \sum_{j < k} (m^\valids_j + d_j) + (m^\valids_k + (m^\improvs - m^\valids - \sum_{j < k} d_j)) + \sum_{j > k} m^\valids_j = \sum_{j < \ell} m^\valids_j + (m^\improvs - m^\valids) = m^\valids + (m^\improvs - m^\valids) = m^\improvs$ as desired.

These partitions are canonical since the values of $k$ used in each construction are uniquely determined (and the ordering of $\alphabet$ is fixed).
Also, $k$ may be found by a linear search from $0$ up to $\ell$, which has value at most $|\alphabet|$.
The quantities $\widthrel{\improvs}{h u_j}$ all have polynomial bitwidth (they are bounded above by $|\alphabet|^n$), so the arithmetic above can be done in polynomial time.
Therefore the total time needed to construct the partitions is polynomial relative to oracles for $\widthrel{\improvs}{\cdot}$ and $\widthrel{\valids}{\cdot}$. \qed
\end{proof}

\lemmaSigmaHard*
\begin{proof}
First we show by induction on $i$ that for all plays $h \pi \in \possible_{\hat{\sigma},\tau}$ with $|h| = i$, we have:
\begin{itemize}
\item $0 \le m^\valids(h) \le m^\improvs(h) \le \widthrel{\improvs}{h}$ and $m^\valids(h) \le \widthrel{\valids}{h}$ (so that the partitions used to define $m^\valids(hu)$ and $m^\improvs(hu)$ exist);
\item $t(h), m^\improvs(h) > 0$.
\end{itemize}
In the base case $i = 0$, we must have $h = \emptyword$.
Then $m^\valids(\emptyword) = \width{\valids} \ge 0$ and $m^\improvs(\emptyword) = \width{\improvs} \ge 1 / \rho > 0$.
To show $t(\emptyword) = 1$, there are three cases.
If $\width{\valids} = 0$, then $\alpha = 0$ and $\width{\improvs} - \width{\valids} = \width{\improvs} \ge 1 / \rho > 0$, so $\beta = 1 / \width{\improvs}$ and $t(\emptyword) = \beta \width{\improvs} = 1$.
If $\width{\improvs} - \width{\valids} = 0$, then $\beta = 0$ and $\width{\valids} = \width{\improvs} \ge 1 / \rho$, so $\alpha = 1 / \width{\valids}$ and $t(\emptyword) = \alpha \width{\valids} = 1$.
Otherwise $\alpha = \min(\rho, 1 / \width{\valids})$ and $\beta = (1 - \alpha \width{\valids}) / (\width{\improvs} - \width{\valids})$, so $t(\emptyword) = \alpha \width{\valids} + (1 - \alpha \width{\valids}) = 1$.
Therefore we always have $t(\emptyword) = 1$.

Now take any play $h \pi \in \possible_{\hat{\sigma},\tau}$ with $|h| = i < n$ and suppose the hypothesis holds.
If it is the adversary's turn after $h$, then if the adversary outputs $u \in \alphabet$ we have $m^\valids(h) = m^\valids(hu)$ and $m^\improvs(h) = m^\improvs(hu)$.
So since $m^\valids(h) \le \widthrel{\valids}{h} = \min_{v \in \alphabet} \widthrel{\valids}{hv} \le \widthrel{\valids}{hu}$ and $m^\improvs(h) \le \widthrel{\improvs}{h} = \min_{v \in \alphabet} \widthrel{\improvs}{hv} \le \widthrel{\improvs}{hu}$, the hypothesis holds in the next step.
If instead it is our turn after $h$ and we output $u \in \alphabet$, then $m^\valids(hu)$ and $m^\improvs(hu)$ are given by Lemma \ref{lemma:partition} and $0 \le m^\valids(hu) \le m^\improvs(hu) \le \widthrel{\improvs}{hu}$ and $m^\valids(hu) \le \widthrel{\valids}{hu}$ by construction.
Furthermore $t(hu) > 0$, since if $t(hu) = 0$ then $\hat{\sigma}$ has probability zero to output $u$, a contradiction.
This implies $m^\improvs(hu) > 0$, since if $m^\improvs(hu) = 0$ then $m^\valids(hu) = 0$ and so $t(hu) = 0$.
Therefore by induction we always have $0 \le m^\valids(h) \le m^\improvs(h) \le \widthrel{\improvs}{h}$, $m^\valids(h) \le \widthrel{\valids}{h}$, and $t(h), m^\improvs(h) > 0$.

Now for any history $h \in \histories$ after which it is our turn, by construction the quantities $m^\valids(hu)$ and $m^\improvs(hu)$ for $u \in \alphabet$ form partitions of $m^\valids(h)$ and $m^\improvs(h)$ respectively.
So $\sum_{u \in \alphabet} t(hu) = \sum_{u \in \alphabet} \alpha m^\valids(hu) + \beta (m^\improvs(hu) - m^\valids(hu)) = \alpha m^\valids(h) + \beta (m^\improvs(h) - m^\valids(h)) = t(h) > 0$.
So $\hat{\sigma}(h, \cdot)$ is a probability distribution over $\alphabet$, and $\hat{\sigma}$ is a well-defined strategy.

Finally, take any play $\pi \in \possible_{\hat{\sigma},\tau}$.
As shown above we have $\widthrel{\improvs}{\pi} \ge m^\improvs(\pi) > 0$, and since $|\pi| = n$ this implies $\pi \in \improvs$.
Therefore $P_{\hat{\sigma},\tau}(\improvs) = 1$. \qed
\end{proof}

\lemmaSigmaSoft*
\begin{proof}
We prove by induction on $i$ in decreasing order that for all plays $h \pi \in \possible_{\hat{\sigma},\tau}$ with $|h| = i$, $\sum_{\rho \st h \rho \in \valids} P_{\hat{\sigma},\tau}(\rho | h) \ge \alpha m^\valids(h) / t(h)$.
In the base case $i = n$, by Lemma \ref{lemma:sigma-hard} we must have $h \in \improvs$, so $m^\valids(h) \le m^\improvs(h) \le \widthrel{\improvs}{h} = 1$.
If $m^\valids(h) = 0$ the hypothesis holds trivially.
Otherwise $m^\valids(h) = 1$, so $t(h) = \alpha$ and since $m^\valids(h) \le \widthrel{\valids}{h}$ we must have $h \in \valids$.
Therefore, letting $\rho = \emptyword$ we have $h \rho \in \valids$, so $P_{\hat{\sigma},\tau}(\rho | h) = 1 = \alpha m^\valids(h) / t(h)$ and the hypothesis again holds.

Now take any play $h \pi \in \possible_{\hat{\sigma},\tau}$ with $|h| = i < n$.
If $h \in \improvs$ then the hypothesis holds as above.
Otherwise if it is the adversary's turn after $h$, then $m^\valids(h) = m^\valids(hu)$, $m^\improvs(h) = m^\improvs(hu)$, and $t(h) = t(hu)$ for any $u \in \alphabet$.
By hypothesis, for any play $h u \pi' \in \possible_{\hat{\sigma},\tau}$ we have $\sum_{\rho \st h u \rho \in \valids} P_{\hat{\sigma},\tau}(\rho | h u) \ge \alpha m^\valids(hu) / t(hu) = \alpha m^\valids(h) / t(h)$.
So
\begin{align*}
\sum_{\rho \st h \rho \in \valids} P_{\hat{\sigma},\tau}(\rho | h) &= \sum_{u \in \alphabet} \; \sum_{\rho' \st h u \rho' \in \valids} \tau(h, u) \cdot P_{\hat{\sigma},\tau}(\rho' | h u) \\
&= \sum_{u \in \alphabet} \tau(h, u) \sum_{\rho' \st h u \rho' \in \valids} P_{\hat{\sigma},\tau}(\rho' | h u) \\
&\ge \sum_{u \in \alphabet} \tau(h, u) \cdot \frac{\alpha m^\valids(h)}{t(h)} \\
&= \frac{\alpha m^\valids(h)}{t(h)} \sum_{u \in \alphabet} \tau(h, u) \\
&= \frac{\alpha m^\valids(h)}{t(h)}
\end{align*}
as desired.
If instead it is our turn after $h$, then if we output $u \in \alphabet$ we update $m^\valids$ to $m^\valids_u$, so $m^\valids(hu) = m^\valids_u(h)$.
Then by hypothesis we have $\sum_{\rho \st h u \rho \in \valids} P_{\hat{\sigma},\tau}(\rho | h u) \ge \alpha m^\valids(hu) / t(hu) = \alpha m^\valids_u(h) / t(hu)$.
So
\begin{align*}
\sum_{\rho \st h \rho \in \valids} P_{\hat{\sigma},\tau}(\rho | h) &= \sum_{u \in \alphabet} \; \sum_{\rho' \st h u \rho' \in \valids} \hat{\sigma}(h, u) \cdot P_{\hat{\sigma},\tau}(\rho' | h u) \\
&= \sum_{u \in \alphabet} \hat{\sigma}(h, u) \sum_{\rho' \st h u \rho' \in \valids} P_{\hat{\sigma},\tau}(\rho' | h u) \\
&\ge \sum_{u \in \alphabet} \hat{\sigma}(h, u) \cdot \frac{\alpha m^\valids_u(h)}{t(hu)} \\
&= \alpha \sum_{u \in \alphabet} \frac{t(hu)}{t(h)} \cdot \frac{m^\valids_u(h)}{t(hu)} \\
&= \frac{\alpha}{t(h)} \sum_{u \in \alphabet} m^\valids_u(h) \\
&= \frac{\alpha m^\valids(h)}{t(h)}
\end{align*}
again as desired.
Therefore by induction this holds for every $i$, and in particular for $i = 0$.

Since every play $\pi \in \possible_{\hat{\sigma},\tau}$ is of the form $\emptyword \pi$, noting that $t(\emptyword) = 1$ as shown in Lemma \ref{lemma:sigma-hard} we have $P_{\hat{\sigma},\tau}(\valids) = \sum_{\pi \st \emptyword \pi \in \valids} P_{\hat{\sigma},\tau}(\pi | \emptyword) \ge \alpha m^\valids(\emptyword) / t(\emptyword) = \alpha \width{\valids} = \min(\rho \width{\valids}, 1)$. \qed
\end{proof}

\lemmaSigmaRandom*
\begin{proof}
We prove by induction on $i$ in decreasing order that for all plays $h \pi \in \possible_{\hat{\sigma},\tau}$ with $|h| = i$, $P_{\hat{\sigma},\tau}(\pi | h) \le \max(\alpha,\beta) / t(h)$.
In the base case $i = n$, by Lemma \ref{lemma:sigma-hard} we must have $h \in \improvs$.
Then $m^\improvs(h) = 1$, since $m^\improvs(h) \le \widthrel{\improvs}{h} = 1$ and $m^\improvs(h) > 0$ if $h$ can be generated by $\hat{\sigma}$ (as shown in Lemma \ref{lemma:sigma-hard}).
So $t(h) = \alpha m^\valids(h) + \beta (1 - m^\valids(h))$, and thus either $t(h) = \alpha$ or $t(h) = \beta$ (depending on whether $m^\valids(h) = 1$ or $m^\valids(h) = 0$).
In either case $\max(\alpha,\beta) / t(h) \ge 1$, so $P_{\hat{\sigma},\tau}(\pi | h) \le \max(\alpha,\beta) / t(h)$ as desired.

Now take any play $h \pi \in \possible_{\hat{\sigma},\tau}$ with $|h| = i < n$.
Since $|h| < n$ the play is of the form $h u \pi'$ for some $u \in \alphabet$, and by hypothesis $P_{\hat{\sigma},\tau}(\pi' | h u) \le \max(\alpha,\beta) / t(hu)$.
Now if it is the adversary's turn after $h$, then $m^\valids(hu) = m^\valids(h)$ and $m^\improvs(hu) = m^\improvs(h)$, so $t(hu) = t(h)$ and therefore $P_{\hat{\sigma},\tau}(\pi | h) = \tau(h, u) \cdot P_{\hat{\sigma},\tau}(\pi' | h u) \le P_{\hat{\sigma},\tau}(\pi' | h u) \le \max(\alpha,\beta) / t(hu) = \max(\alpha,\beta) / t(h)$ as desired.
If instead it is our turn after $h$, then
\begin{align*}
P_{\hat{\sigma},\tau}(\pi | h) &= \hat{\sigma}(h, u) \cdot P_{\hat{\sigma},\tau}(\pi' | h u) \\
&= \frac{t(hu)}{t(h))} \cdot P_{\hat{\sigma},\tau}(\pi' | h u) \\
&\le \frac{t(hu)}{t(h)} \cdot \frac{\max(\alpha,\beta)}{t(hu)} \\
&= \frac{\max(\alpha,\beta)}{t(h)}
\end{align*}
again as desired.
So by induction the hypothesis holds for every $i \in \{0, \dots, n\}$, and in particular for $i = 0$.

Since every play $\pi \in \possible_{\hat{\sigma},\tau}$ is of the form $\emptyword \pi$, we have $P_{\hat{\sigma},\tau}(\pi) = P_{\hat{\sigma},\tau}(\pi | \emptyword) \le \max(\alpha,\beta) / t(\emptyword) = \max(\alpha,\beta)$ (as $t(\emptyword) = 1$).
Recall that $\alpha = \min(\rho, 1 / \width{\valids}) \le \rho$ and $\beta = (1 - \alpha \width{\valids}) / (\width{\improvs} - \width{\valids})$, with the convention that $\alpha = 0$ if $\width{\valids} = 0$ and $\beta = 0$ if $\width{\improvs} - \width{\valids} = 0$.
If $\alpha = \rho$, then $\beta = (1 - \rho \width{\valids}) / (\width{\improvs} - \width{\valids}) \le (1 - \rho \width{\valids}) / ((1/\rho) - \width{\valids}) = \rho$.
If instead $\alpha = 1 / \width{\valids}$, then $\beta = 0$.
Finally, if $\alpha = 0$ then $\width{\valids} = 0$ and $\beta = 1 / \width{\improvs} \le \rho$.
So $\max(\alpha,\beta) \le \rho$, and therefore $P_{\hat{\sigma},\tau}(\pi) \le \rho$ for every $\pi \in \possible_{\hat{\sigma},\tau}$.
In fact this holds for all plays $\pi \in \alphabet^n$, since if $\pi \not \in \possible_{\hat{\sigma},\tau}$ then $P_{\hat{\sigma},\tau}(\pi) = 0$. \qed
\end{proof}

\theoremPspaceScheme*
\begin{proof}
We implement the operations required by Theorem \ref{theorem:generic-scheme}.
Intersection is simple: we run the algorithms $\mathcal{X}$ and $\mathcal{Y}$ and accept iff both do.
The resulting algorithm runs in polynomial space, since $\mathcal{X}$ and $\mathcal{Y}$ do, and can be constructed in polynomial space (indeed, time).

For width measurement, we compute $\widthrel{X}{h}$ using an arithmetization of the usual $\PSPACE$ algorithm for $\QBF$, replacing $\lor$ by $+$ at $\exists$ nodes in the recursive tree and $\land$ by $\min$ at $\forall$ nodes.
Specifically, if it is our turn after $h$ (corresponding to an $\exists$ quantifier in a $\QBF$), we recursively compute $\widthrel{X}{hu}$ for each $u \in \alphabet$ and return $\sum_{u \in \alphabet} \widthrel{X}{hu} = \widthrel{X}{h}$.
If instead it is the adversary's turn, we again recursively compute $\widthrel{X}{hu}$ for each $u \in \alphabet$ but now return $\min_{u \in \alphabet} \widthrel{X}{hu} = \widthrel{X}{h}$.
Finally, in the base case $|h| = n$ we have $\widthrel{X}{h} = \ind{h \in X}$ and so simply invoke $\mathcal{X}$ to determine in polynomial space whether $h \in X = L(\mathcal{X}) \cap \alphabet^n$.
As in the $\QBF$ algorithm, the recursive tree has polynomial depth, and since $\widthrel{X}{h} \le |\alphabet|^n$ we need only polynomial space to remember partial results along the current path through the tree.
So we can compute $\widthrel{X}{h}$ in polynomial space. \qed
\end{proof}

Note that for temporal logics, the alphabet $\alphabet$ is usually the set $2^P$ of all possible assignments to finitely-many Boolean propositions $P$, some of which are controlled by the adversary.
Translating specifications in this form to our convention (players alternately picking symbols from a single alphabet, which is more convenient for control improvisation) is straightforward.

\theoremLTLHardness*
\begin{proof}
We show hardness by reduction from $\QBF$.
Given a $\QBF$ $\phi$ with variables numbered $1, \dots, n$, without loss of generality we may assume the quantifiers strictly alternate starting with $\exists$ and that the matrix is in conjunctive normal form, consisting of clauses $c_1, \dots, c_m$.
We can view an assignment to the variables as a length-$n$ trace over a single proposition $p$ indicating the truth of the variable corresponding to the position.
Then for each clause $c_i$ we can construct an $\LTL$ formula $\psi_c$ whose models of length $n$ are exactly the assignments satisfying the clause.
If $c_i$ contains the variables $V^+$ positively and $V^-$ negatively, then we put
\[
\psi_c = \bigvee_{v \in V^+} X^{v-1} p \lor \bigvee_{v \in V^-} X^{v-1} (\lnot p) .
\]
Then putting $\psi = \bigwedge_i \psi_{c_i}$, the length-$n$ models of $\psi$ are exactly the assignments satisfying the matrix of $\phi$.
So we have a winning strategy to generate a play satisfying $\psi$ if and only if $\phi$ is true.
Finally, this construction clearly can be done in polynomial time. \qed
\end{proof}

\theoremNFA*
\begin{proof}
We give a reduction as in Theorem \ref{theorem:ltl-hardness} from a $\QBF$ $\phi$ in existential prenex normal form, but with the matrix in disjunctive normal form.
Using the method of \cite{sharpNFA}, we can construct in polynomial time an $\NFA$ $N$ which accepts exactly the satisfying assignments of the matrix of $\phi$.
Then we have a winning strategy to generate a play in $L(N)$ if and only if $\phi$ is true. \qed
\end{proof}

\end{document}